\def \bi {\begin{itemize}\item}
\def \ei {\end{itemize}}
\def \be {\begin{equation}}
\def \ee {\end{equation}}
\def \ba {\begin{aligned}}
\def \ea {\end{aligned}}
\def \l {\bigg{(}}
\def \r {\bigg{)}}
\newcommand{\LyX}{L\kern-.1667em\lower.25em\hbox{Y}\kern-.125emX\spacefactor1000}
\title{Is Non-Neutrality Profitable for the Stakeholders of the Internet Market? - Part II}
\author{Mohammad Hassan Lotfi, Saswati Sarkar, and George Kesidis\thanks{ M. H. Lotfi and Saswati Sarkar are with the Department of ESE at University of Pennsylvania, Philadelphia,
		PA, U.S.A. Their email addresses are lotfm@seas.upenn.edu and
		swati@seas.upenn.edu respectively. George Kesidis is with the school of EECS of Pennsylvania State University, University Park, PA, U.S.A. His Email Address is gik2@psu.edu.}}
\begin{document}
\maketitle\newtheorem{lemma}{Lemma}
\newtheorem{note}{Note}
\newtheorem{property}{Property}
\newtheorem{theorem}{Theorem}
\newtheorem{definition}{Definition}
\newtheorem{corollary}{Corollary}
\newtheorem{remark}{Remark}
\newtheorem{assumption}{Assumption}

\begin{abstract}
In this part of the paper, we obtain analytical results for the case that transport costs are not small (complement of Part I), and combine them with the results in Part I of the paper to provide general results for all values of transport costs. We show that, in general, if an SPNE exists, it would be one of the five possible strategies each of which we explicitly characterize. We also prove that when EUs have sufficiently high inertia for at least one of the ISPs, there exists a unique SPNE with a non-neutral outcome in which both of the ISPs are active, and the CP offers her content with free quality on the neutral ISP and with premium quality on the non-neutral ISP. Moreover, we show that an SPNE does not always exist. We also analyze a benchmark case in which both ISPs are neutral, and prove that there exists a unique SPNE in which the CP offers her content with free quality on both ISPs, and both ISPs are active. We also provide extensive numerical results and discussions for all ranges of transport costs. Simulation results suggest that if the SPNE exists, it would be unique. In addition, results reveal that the neutral ISP receives a lower payoff and the non-neutral ISP receives a higher payoff (most of the time) in a non-neutral scenario. However, we also identify scenarios in which the non-neutral ISP loses payoff by adopting non-neutrality. In addition, we show that a non-neutral regime yields a higher welfare for EUs than a neutral one if the market power of the non-neutral ISP is small, the sensitivity of EUs (respectively, the CP) to the quality is low (respectively, high), or a combinations of these factors.
\end{abstract}

\section{Introduction}

In the first part of the paper, we focused on the case that the transport costs are lower than a threshold, i.e. the case that EUs can switch between ISPs relatively easy. We proved that when transport costs are small and if the SPNE exists, then the equilibrium strategy of the CP would be to offer her content on only the non-neutral ISP. Depending on the values of the transport costs, at the SPNE, either both ISPs would be active (have positive share of EUs), or only the non-neutral ISP would be active.  We also proved that if the transport costs are sufficiently small (smaller than another threshold), then the SPNE exists and is unique. In this case, at the SPNE outcome of the market, only the non-neutral ISP would be active and the neutral one would be driven out of the market.

In this part, we use the same system model as in the previous part of the paper. However, we  consider the complement of the previous part, where  transport costs are not small and EUs cannot easily switch between ISPs.  This case often happens in practice in the Internet market, e.g. when ISPs bundle Internet access with other services (e.g. cable, phone). In this case, an EU may incur additional expenses for other services if she buys  Internet access from another ISP. Another example of high inertia of EUs is the case in which EUs require different devices to access the Internet through different ISPs (e.g. different devices for cable and DSL services), i.e. high set up costs.

\subsection*{Analytical Results}
First, we obtain analytical results for the case that transport costs are not small (complement of the previous part), and combine them with the results in the first part of the paper to provide general results for all values of transport costs. We show that, in general, if an SPNE exists, it would be one of the five possible strategies each of which we explicitly characterize. In some of these strategies, the CP offers her content on only the non-neutral ISP, and in the rest she offers her content on both ISPs. In addition, in one of the outcomes, all EUs pay the Internet access fee to the non-neutral ISP, i.e. the neutral ISP is driven out of the market. However, in the rest, both ISPs receive a positive share of EUs, i.e. both ISPs are active.  In addition, we provide specific instances with no SPNE. This shows that an SPNE does not always exist. Our analysis reveals that the market would have a more diverse set of equilibrium outcomes in comparison to Part I of the paper \cite{PartI}. 

Second, we focus on a special case in which EUs have sufficiently high inertia for at least one of the ISPs. We prove that in this case, there exists a unique SPNE with a non-neutral outcome, and we explicitly characterize the SPNE. In the unique SPNE,  both of the ISPs are active, and the CP offers her content with free quality on the neutral ISP and with premium quality on the non-neutral ISP. 

Third, we consider a benchmark case in which both ISPs are neutral. In this case, we prove that regardless of the transport costs, there exists a unique SPNE, in which  the CP offers her content over both ISPs with free quality, and both ISPs would be  active. We use the results of this case, as a benchmark for comparing with the results of our model.


\subsection*{Numerical Results}
We provide numerical results for all ranges of transport costs.  Numerical results confirm our theoretical results that when the transport costs are small  enough, then the SPNE exists and is unique. They also confirm that when the transport costs are large enough, then the SPNE with a non-neutral outcome  uniquely exists. Numerical results reveal that if the inertia of EUs for ISPs is between these two extreme cases (high and low inertias) but still small enough, the game has an SPNE outcome in which both of the ISPs are active, but the CP offers her content with premium quality and only on the non-neutral ISP. Numerical results also reveal that if the inertia of EUs for ISPs is between the two extreme cases and large enough, then the game has no SPNE. Results of simulation over large sets of parameters also suggest that in all scenarios, the SPNE is unique if it were to exist.  

Numerical results reveal that  the neutral ISP loses payoff in all of the SPNE outcomes in comparison to the benchamrk case.  
In addition,  for a wide range of parameters, the non-neutral ISP receives a better payoff under a non-neutral scenario. This implies that it is beneficial for ISPs to operate as non-neutral, if they have the choice.   However, switching to a non-neutral regime is \emph{not} always profitable for ISPs. If EUs or the CP are not sensitive to the quality of the content delivered and the market power of the non-neutral ISP is small, then ISPs are better off staying neutral. 

Results also reveal that the sensitivity of the EUs and the CP, and the market power of ISPs substantially influences the welfare of EUs (EUW) in neutral and non-neutral scenarios. EUW would be higher in a non-neutral setting if (i) the market power of the non-neutral ISP is low, (ii) the sensitivity of the CP to the quality is high, or (iii) EUs are not very sensitive to the quality, or a combination of these conditions. In these cases a cheaper Internet access fee would be charged to the EUs by the non-neutral ISP which yields a higher EUW.   


\subsection*{Organization of the Paper}
Note that the model herein is similar to the one presented in Part I. The rest of the paper is organized as follows. We present the results under the assumption that the transport costs are not small for the case of one neutral and one non-neutral ISPs in Section \ref{section:theory}. In Section     \ref{section:proofsBenchamrk}, we present the results for the benchmark case, i.e. both ISPs neutral. All proofs are presented in the Appendices. In Section~\ref{section:discussion}, we summarize and discuss about the key results of the paper. We  provide numerical examples  spanning Part I and II in Section~\ref{section:numericalresults}. Finally, we comment on some of the assumptions of the model and their generalizations in Section~\ref{section:implicationAssum}.

\begin{remark}
We refer to the equations, theorems, lemmas, etc of the previous part using ``I" suffix, e.g. Theorem 1-I and Equation (1)-I refers to the Theorem 1 and Equation (1) of Part I, respectively. 
\end{remark}

\section{The Sub-Game Perfect Nash Equilibrium}\label{section:theory}

In Part I, we characterized the SPNE(s) for the case that the transport costs are smaller than a threshold, specifically $t_N+t_{NoN}\leq\kappa_u \tilde{q}_p$. Now, in this section, we characterize the SPNE(s) for the complement of this case, i.e. $t_N+t_{NoN}>\kappa_u \tilde{q}_p$. Note that, in Part I, the analysis for Stages 2 to 4 was done regardless of the assumption on the transport costs. Thus, we can use the equilibrium strategies of these stages, and only need to characterize the equilibrium strategies of Stage 1, under the assumption that $t_N+t_{NoN}>\kappa_u \tilde{q}_p$.   


\subsection*{Stage 1: ISPs determine $p^{eq}_N$ and $p^{eq}_{NoN}$:} \label{section:stage1}

In this section, we characterize the NE strategies $p^{eq}_N$ and $p^{eq}_{NoN}$ using \eqref{l-equ:payoffISPsGeneral_new}-I for the case that $t_N+t_{NoN}>\kappa_u \tilde{q}_p$. First, in Theorem~\ref{theorem:NE_stage1_new_q<}, we characterize all possible NE strategies by which $z^{eq}=1$. In Theorem~\ref{theorem:bigt}, we prove that when one of the inertias is large, the only NE strategy by which $z^{eq}=1$ is the third candidate strategy of Theorem~\ref{theorem:NE_stage1_new_q<}. Then, we focus on characterizing NE strategies by which $z^{eq}=0$. In Theorem~\ref{theorem:neutralz=0_q<}, we characterize the only candidate NE strategy by which $z^{eq}=0$.

Similar to the previous part of the paper, and  without loss of generality, in the equilibrium, $p^{eq}_N\geq c$.\footnote{Note that if $n_N>0$ then $p_N<c$ yields a negative payoff for the neutral ISP. Thus, no $p_N<c$ can be an equilibrium payoff. If $n_N=0$, the value of $p_N$ is of no importance. Therefore, without loss of generality we can consider $p_N\geq c$.} In addition, if $z=0$, $p^{eq}_{NoN}\geq c$. Similar to the previous part, if $0\leq x_n\leq 1$, i.e. $(q^{eq}_N,q^{eq}_{NoN})\in F^I$, from \eqref{l-equ:EUs_linear}-I,  the payoff of neutral and non-neutral ISPs are as follow: 
\footnotesize
\be \label{equ:UN_new}
\pi_N(p_N)=(p_N-c)\frac{t_{NoN}+\kappa_u (q_N-q_{NoN})+p_{NoN}-p_N}{t_N+t_{NoN}} 
\ee

\be\label{equ:UNoN_new}
\ba 
\pi_{NoN}(p_{NoN},\tilde{p})&=(p_{NoN}-c)\frac{t_N+\kappa_u (q_{NoN}-q_N)+p_N-p_{NoN}}{t_N+t_{NoN}}\\
&\qquad \qquad +zq_{NoN}\tilde{p}
\ea
\ee
\normalsize

Given the strategies of the CP and EUs described in the previous part of the paper, in the following theorem we characterize the NE strategies by which $z^{eq}=1$ when $\tilde{q}_{p}<\frac{t_N+t_{NoN}}{\kappa_u}$:

\begin{theorem}\label{theorem:NE_stage1_new_q<}
If $\tilde{q}_{p}<\frac{t_N+t_{NoN}}{\kappa_u}$, then the only possible  NE strategies by which $(q^{eq}_N,q^{eq}_{NoN})\in F_1$, i.e. $z^{eq}=1$, are:
\begin{enumerate}
\item If $\Delta p\leq \kappa_u \tilde{q}_p-t_{NoN}$, then $p^{eq}_{NoN}=c+\kappa_u \tilde{q}_{p}-t_{NoN}$ and $p^{eq}_N=c$.
\item  If  (i)  $\kappa_u \tilde{q}_{p}-t_{NoN}<\Delta p^{eq}<\kappa_u (2\tilde{q}_{p}-\tilde{q}_f)-t_{NoN}$ or $t_N+\kappa_u(\tilde{q}_{p}-\tilde{q}_f)<\Delta p^{eq}<t_N+\kappa_u \tilde{q}_{p}$, then $p^{eq}_{NoN}=c+\frac{t_{NoN}+2t_N+\tilde{q}_{p}(\kappa_u -2\kappa_{ad})}{3}$ and $p^{eq}_{N}=c+\frac{2t_{NoN}+t_N-\tilde{q}_{p}(\kappa_u +\kappa_{ad})}{3}$, and $\pi^{eq}_{NoN}=\pi_{NoN}(\tilde{p}^{eq}_{NoN},\tilde{p}_{t,2})$. The necessary conditions: (ii) $\tilde{q}_{p}\leq \frac{2t_{NoN}+t_N}{\kappa_u+\kappa_{ad}}$, and  (iii)  $\pi^{eq}_{NoN}>\pi_{NoN,z=0}(\tilde{p}^{eq}_{NoN},\tilde{p})$.
\item If (i)  $\kappa_u (2\tilde{q}_{p}-\tilde{q}_f)-t_{NoN}< \Delta p^{eq}<t_N+\kappa_u (\tilde{q}_{p}-\tilde{q}_f)$, then $p^{eq}_{NoN}=c+\frac{t_{NoN}+2t_N+  (\tilde{q}_{p}-\tilde{q}_f)(\kappa_u -2\kappa_{ad})}{3}$ and $p^{eq}_{N}=c+\frac{2t_{NoN}+t_N-(\tilde{q}_{p}-\tilde{q}_f)(\kappa_u +\kappa_{ad})}{3}$, and $\pi^{eq}_{NoN}=\pi_{NoN}(\tilde{p}^{eq}_{NoN},\tilde{p}_{t,3})$. The necessary conditions:  (ii) $\tilde{q}_{p}-\tilde{q}_f\leq \frac{2t_{NoN}+t_N}{\kappa_u+\kappa_{ad}}$,  and (iii)  $\pi^{eq}_{NoN}>\pi_{NoN,z=0}(\tilde{p}^{eq}_{NoN},\tilde{p})$.
\item  $p^{eq}_{NoN}=c$ and $p^{eq}_N=c-\kappa_u(2\tilde{q}_{p}-\tilde{q}_f)+t_{NoN}$, and $\pi^{eq}_{NoN}=\pi_{NoN}(\tilde{p}^{eq}_{NoN},\tilde{p}_{t,3})$. The necessary conditions: (i) $2\tilde{q}_{p}-\tilde{q}_f\leq \frac{t_{NoN}}{\kappa_{u}}$ and (ii)  $\pi^{eq}_{NoN}>\pi_{NoN,z=0}(\tilde{p}^{eq}_{NoN},\tilde{p})$.
\end{enumerate}
\end{theorem}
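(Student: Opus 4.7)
The plan is to leverage the Stage~2--4 characterization from Part~I, which expresses the CP's quality profile $(q_N^{eq},q_{NoN}^{eq})$ and advertising price $\tilde p^{eq}$ as piecewise functions of $(p_N,p_{NoN})$. Substituting these responses into \eqref{equ:UN_new}--\eqref{equ:UNoN_new} partitions the price plane into cells, on each of which $\pi_N$ and $\pi_{NoN}$ are concave quadratics in $(p_N,p_{NoN})$. The four candidates listed correspond to the four cells of that partition that can support $z^{eq}=1$, indexed by the range of $\Delta p^{eq}$.

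Within each cell compatible with $z^{eq}=1$, I would write down $\pi_{NoN}$ by plugging the CP's ad-price $\tilde p^{eq}$ into the extra term $zq_{NoN}\tilde p$. Because $\tilde p^{eq}$ is proportional to the non-neutral market share, it is affine in $p_N-p_{NoN}$ with a slope governed by $\kappa_{ad}$; this is the mechanism that produces the $\kappa_{ad}$-modified coefficients in candidates~2 and~3. I would then solve the $2\times 2$ FOC system $\partial\pi_N/\partial p_N=0$, $\partial\pi_{NoN}/\partial p_{NoN}=0$. In the Hotelling-interior cells this yields the closed-form expressions of candidates~2 and~3, the only difference between them being whether the CP picks $q_N=0$ or $q_N=\tilde q_f$ on the neutral platform, which replaces $\tilde q_p$ by $\tilde q_p-\tilde q_f$ in the effective quality gap. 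Candidates~1 and~4 arise as corner solutions, where the interior optimizer would violate $p_N\ge c$ (or $p_{NoN}\ge c$) and one ISP is pushed to marginal-cost pricing by the competitor's ability to capture the entire market.

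For every candidate, three necessary conditions must then be checked. \emph{Self-consistency:} the computed prices must produce a $\Delta p^{eq}$ in the cell used to derive them, which is what generates the interval bounds in condition~(i) of each case. \emph{Admissibility of the Hotelling interior:} inequalities of the form $\tilde q_p\le(2t_{NoN}+t_N)/(\kappa_u+\kappa_{ad})$ in cases~2 and~3, and $2\tilde q_p-\tilde q_f\le t_{NoN}/\kappa_u$ in case~4, are exactly the conditions ensuring that the interior FOC solution respects $p_N\ge c$. \emph{No deviation to $z^{eq}=0$:} the non-neutral ISP could raise $p_{NoN}$ enough to push the CP back to $z=0$, and ruling this out is precisely the explicit inequality $\pi^{eq}_{NoN}>\pi_{NoN,z=0}(\tilde p^{eq}_{NoN},\tilde p)$ in condition~(iii). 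Completeness of the enumeration then follows because the CP's Stage~2 best response induces only finitely many cells with $z^{eq}=1$, and within each cell the quadratic structure gives at most one candidate (interior or corner).

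The main obstacle I anticipate is the cross-cell deviation step: concavity delivers local optimality within a cell for free, but a profitable deviation may require a jump across a cell boundary into a region where the CP plays a different quality. I would handle this by exploiting continuity of $\pi_N$ and $\pi_{NoN}$ across cell interfaces, inherited from continuity of the CP's and EUs' best responses in Part~I, together with piecewise concavity; this reduces the global check to a finite number of comparisons at cell boundaries. The asymmetric $\kappa_{ad}$-term in $\pi_{NoN}$---absent from the purely neutral benchmark treated in Section~\ref{section:proofsBenchamrk}---is what breaks the Hotelling symmetry and makes this book-keeping the delicate part of the argument.
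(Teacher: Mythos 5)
Your overall architecture---partition the price plane into cells indexed by $\Delta p$ according to the CP's Stage-3 best response, solve the concave FOC system in each interior, pin down corner candidates via $p\ge c$, and append the self-consistency, $p_N\ge c$, and no-switch-to-$z=0$ conditions---matches the paper's proof for candidates 1, 2, and 3, and correctly identifies the origin of conditions (i)--(iii). (Note also that the theorem only enumerates \emph{necessary} forms of NE; the full cross-cell deviation check you worry about is deferred to the remark following the theorem and to Theorem~\ref{theorem:bigt}, so it is not needed here.)

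The genuine gap is your treatment of the cell boundaries. You propose to handle them by ``exploiting continuity of $\pi_N$ and $\pi_{NoN}$ across cell interfaces, inherited from continuity of the CP's and EUs' best responses.'' That continuity fails at exactly the boundary that matters: at $\Delta p=\Delta p_t=\kappa_u(2\tilde q_p-\tilde q_f)-t_{NoN}$ the CP's quality response jumps from $(0,\tilde q_p)$ to $(\tilde q_f,\tilde q_p)$, so $n_{NoN}$ and hence $\pi_{NoN}$ are discontinuous in $p_{NoN}$ there. The paper's Part C-2 computes this jump explicitly and shows that an infinitesimal decrease of $p_{NoN}$ into region $B_1$ is strictly profitable if and only if $p_{NoN}>c$; this is what forces $p^{eq}_{NoN}=c$ and produces candidate~4, with $p^{eq}_N=c-\kappa_u(2\tilde q_p-\tilde q_f)+t_{NoN}$ then determined by the boundary equation and the condition $2\tilde q_p-\tilde q_f\le t_{NoN}/\kappa_u$ coming from $p^{eq}_N\ge c$. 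Your description of candidate~4 as ``a corner solution where the interior optimizer would violate $p_N\ge c$ \ldots{} and one ISP \ldots{} capture[s] the entire market'' is therefore wrong on both counts: both ISPs retain positive share ($n^{eq}_N=\kappa_u\tilde q_p/(t_N+t_{NoN})>0$), and marginal-cost pricing is forced by the discontinuity of $\pi_{NoN}$ at the cell interface, not by a binding corner of an interior FOC. The same discontinuity argument (a deviation by ISP~N this time) is what the paper uses to exclude any NE sitting on the other boundary $\Delta p^{eq}=t_N+\kappa_u(\tilde q_p-\tilde q_f)$; a continuity-based book-keeping would miss this exclusion and could also admit spurious boundary candidates, so the enumeration would not be complete as you have set it up.
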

\begin{remark}
Note that these candidate strategies are NE if and only if the conditions listed in the theorem hold and no unilateral deviation is profitable for each of the ISPs.
\end{remark}

We will show in the next corollary that the first two sets of strategies are associated with the case that the CP offers with premium quality on ISP NoN and with zero quality on ISP N. With the first strategies, ISP N would be driven out of the market, while with the second strategy, ISP N would be active. The third and fourth set of strategies are associated with the case that both ISPs are active and the CP offers her content with premium quality on ISP NoN and with free quality on ISP N.

\begin{corollary} \label{corollary:outcome_q<} If Strategy 1 of Theorem~\ref{theorem:NE_stage1_new_q<} is an NE, it yields $\tilde{p}^{eq}=\tilde{p}_{t,1}=\kappa_{ad}(1-\frac{\tilde{q}_f}{\tilde{q}_p})$, $(q^{eq}_N,q^{eq}_{NoN})=(0,\tilde{q}_p)\in F^L_1$, $n^{eq}_N=0$, and $n^{eq}_{NoN}=1$. If  Strategy 2 of Theorem~\ref{theorem:NE_stage1_new_q<} is an NE, it yields $\tilde{p}^{eq}=\tilde{p}_{t,2}=\kappa_{ad}(n^{eq}_{NoN}-\frac{\tilde{q}_f}{\tilde{q}_p})$, $(q^{eq}_N,q^{eq}_{NoN})=(0,\tilde{q}_p)\in F^I_1$, $n^{eq}_N=\frac{t_N+2t_{NoN}-\tilde{q}_p(\kappa_u+\kappa_{ad})}{3(t_N+t_{NoN})}$, and $n^{eq}_{NoN}=\frac{2t_N+t_{NoN}+\tilde{q}_p(\kappa_u+\kappa_{ad})}{3(t_N+t_{NoN})}$. If Strategy 3 of Theorem~\ref{theorem:NE_stage1_new_q<} is an NE, it yields $\tilde{p}^{eq}=\tilde{p}_{t,3}=\kappa_{ad}n^{eq}_{NoN}(1-\frac{\tilde{q}_f}{\tilde{q}_p})$, $(q^{eq}_N,q^{eq}_{NoN})=(\tilde{q}_f,\tilde{q}_p)\in F^I_1$, $n^{eq}_N=\frac{t_N+2t_{NoN}-(\tilde{q}_p-\tilde{q}_f)(\kappa_u+\kappa_{ad})}{3(t_N+t_{NoN})}$, and $n^{eq}_{NoN}=\frac{2t_N+t_{NoN}+(\tilde{q}_p-\tilde{q}_f)(\kappa_u+\kappa_{ad})}{3(t_N+t_{NoN})}$. If Strategy 4 of Theorem~\ref{theorem:NE_stage1_new_q<} is an NE, it yields $\tilde{p}^{eq}=\tilde{p}_{t,3}=\kappa_{ad}n^{eq}_{NoN}(1-\frac{\tilde{q}_f}{\tilde{q}_p})$, $(q^{eq}_N,q^{eq}_{NoN})=(\tilde{q}_f,\tilde{q}_p)\in F^I_1$,  $n^{eq}_N=\frac{\kappa_u \tilde{q}_p}{t_N+t_{NoN}}$, and $n^{eq}_{NoN}=\frac{t_N+t_{NoN}-\kappa_u \tilde{q}_p}{t_N+t_{NoN}}$.
\end{corollary}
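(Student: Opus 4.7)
The plan is to derive the equilibrium outcomes stated in the corollary by backward substitution through the Stage 2--4 best responses of Part I, applied to each of the four candidate strategies of Theorem~\ref{theorem:NE_stage1_new_q<} in turn. The argument is essentially algebraic: once the CP's quality choice is pinned down from the Stage 1 price pair, the advertising price and market shares follow mechanically from formulas already derived in Part I.

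For each strategy I would first examine the price gap $\Delta p^{eq}=p^{eq}_{NoN}-p^{eq}_N$ against the thresholds characterizing the CP's Stage 2 best response. Strategies 1 and 2 are specified so that $\Delta p^{eq}$ falls in the region where the CP offers premium quality only on the non-neutral ISP, hence $(q^{eq}_N,q^{eq}_{NoN})=(0,\tilde{q}_p)$, while Strategies 3 and 4 fall in the region where the CP offers free quality on ISP N and premium quality on ISP NoN, hence $(q^{eq}_N,q^{eq}_{NoN})=(\tilde{q}_f,\tilde{q}_p)$. Plugging each quality pair into the Stage 3 advertising-price best response then yields the claimed $\tilde{p}^{eq}$: it becomes $\tilde{p}_{t,1}$ under Strategy 1 (since $n^{eq}_{NoN}=1$ specializes the general Stage 3 expression), $\tilde{p}_{t,2}$ under Strategy 2, and $\tilde{p}_{t,3}$ under Strategies 3 and 4.

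The market shares follow by substituting the prices and qualities into the indifferent-EU expression implicit in \eqref{equ:UN_new}--\eqref{equ:UNoN_new}, namely $n^{eq}_N=\frac{t_{NoN}+\kappa_u(q^{eq}_N-q^{eq}_{NoN})+p^{eq}_{NoN}-p^{eq}_N}{t_N+t_{NoN}}$ and $n^{eq}_{NoN}=1-n^{eq}_N$. Strategy 1 collapses to $n^{eq}_N=0$, placing the outcome on the boundary $F^L_1$ with ISP N driven out of the market. Strategies 2 and 3 produce the symmetric interior expressions weighted by $\tilde{q}_p(\kappa_u+\kappa_{ad})$ and $(\tilde{q}_p-\tilde{q}_f)(\kappa_u+\kappa_{ad})$ respectively, both lying in $F^I_1$. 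Strategy 4, with $p^{eq}_{NoN}=c$ and $p^{eq}_N=c-\kappa_u(2\tilde{q}_p-\tilde{q}_f)+t_{NoN}$ together with qualities $(\tilde{q}_f,\tilde{q}_p)$, reduces to $n^{eq}_N=\frac{\kappa_u\tilde{q}_p}{t_N+t_{NoN}}$.

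The only non-mechanical step is confirming that the region assignments are internally consistent with the inequalities appearing in Theorem~\ref{theorem:NE_stage1_new_q<}; these inequalities were imposed in Part I precisely to single out the corresponding Stage 2 best response, so no new obstacle arises. The main bookkeeping hurdle, I expect, will be keeping track of which of $\tilde{p}_{t,2}$ and $\tilde{p}_{t,3}$ applies in each interior case, since they differ only by whether the neutral ISP is loaded with the free-quality content, and by the resulting expression for $n^{eq}_{NoN}$ that feeds back into the advertising price.
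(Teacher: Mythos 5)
Your proposal is correct and follows essentially the same route as the paper: for each candidate price pair, locate $\Delta p^{eq}$ in the region structure of Theorem 2-I to read off the CP's equilibrium qualities, invoke the Stage-2 result of Part I to get $\tilde{p}^{eq}\in\{\tilde{p}_{t,1},\tilde{p}_{t,2},\tilde{p}_{t,3}\}$, and substitute into the linear EU-splitting formula to obtain $n^{eq}_N$ and $n^{eq}_{NoN}$. The only slip is a labeling one — in this model the side payment $\tilde{p}$ is chosen in Stage 2 and the qualities in Stage 3, the reverse of what you wrote — but this does not affect the substance of the argument.
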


Next, we prove that when either of the transport costs is large enough, then the NE strategy by which $z^{eq}=1$ is the third candidate strategy of the previous theorem. 

\begin{theorem}\label{theorem:bigt}
	When either $t_N$ or $t_{NoN}$ is large enough, for the case that $\tilde{q}_p<\frac{t_N+t_{NoN}}{\kappa_u}$, the only NE strategy by which $z^{eq}=1$ is $p^{eq}_{NoN}=c+\frac{t_{NoN}+2t_N+  (\tilde{q}_{p}-\tilde{q}_f)(\kappa_u -2\kappa_{ad})}{3}$ and $p^{eq}_{N}=c+\frac{2t_{NoN}+t_N-(\tilde{q}_{p}-\tilde{q}_f)(\kappa_u +\kappa_{ad})}{3}$. 
\end{theorem}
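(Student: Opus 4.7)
The plan is to systematically rule out Strategies 1, 2 and 4 of Theorem~\ref{theorem:NE_stage1_new_q<} as soon as either $t_N$ or $t_{NoN}$ is large enough, leaving Strategy 3 as the only admissible candidate for an NE with $z^{eq}=1$. The arguments differ in flavour: Strategies 1 and 2 are eliminated through violation of the necessary conditions/ranges stated in Theorem~\ref{theorem:NE_stage1_new_q<}, whereas Strategy 4 requires an explicit deviation argument for ISP $NoN$.

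Step 1 (Strategy 1): The candidate requires $p^{eq}_{NoN}=c+\kappa_u\tilde q_p-t_{NoN}\ge c$, so it is immediately discarded when $t_{NoN}>\kappa_u\tilde q_p$. In the complementary regime (large $t_N$, small $t_{NoN}$), I would show that the non-neutral ISP has a profitable upward deviation. At the candidate prices $n^{eq}_{NoN}=1$, and a marginal increase of $p_{NoN}$ pushes the profile into $F^I_1$ with $(q_N,q_{NoN})=(0,\tilde q_p)$, so $\pi_{NoN}=(p_{NoN}-c)n_{NoN}+\kappa_{ad}\tilde q_p(n_{NoN}-\tilde q_f/\tilde q_p)$ with $n_{NoN}=(t_N+\kappa_u\tilde q_p+c-p_{NoN})/(t_N+t_{NoN})$. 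Computing the right-derivative at $p_{NoN}=c+\kappa_u\tilde q_p-t_{NoN}$ gives $[t_N+2t_{NoN}-(\kappa_u+\kappa_{ad})\tilde q_p]/(t_N+t_{NoN})$, which is strictly positive once $t_N$ or $t_{NoN}$ is large enough; this contradicts the NE property.

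Step 2 (Strategy 2): Substituting the prescribed prices yields $\Delta p^{eq}=\tfrac{1}{3}[t_N-t_{NoN}+\tilde q_p(2\kappa_u-\kappa_{ad})]$. Plugging this into the two admissible sub-ranges of condition~(i), routine algebra shows that the upper bound of the first sub-range is equivalent to $t_N+2t_{NoN}<4\kappa_u\tilde q_p+\kappa_{ad}\tilde q_p-3\kappa_u\tilde q_f$, and the lower bound of the second sub-range is equivalent to $2t_N+t_{NoN}<3\kappa_u\tilde q_f-(\kappa_u+\kappa_{ad})\tilde q_p$. Both right-hand sides are constants independent of the transport costs, so once $t_N$ or $t_{NoN}$ exceeds a threshold, $\Delta p^{eq}$ falls in neither sub-range, and Strategy 2 cannot be NE.

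Step 3 (Strategy 4): Here $p^{eq}_{NoN}=c$, hence $\pi^{eq}_{NoN}=\kappa_{ad}(\tilde q_p-\tilde q_f)n^{eq}_{NoN}$ with $n^{eq}_{NoN}=1-\kappa_u\tilde q_p/(t_N+t_{NoN})$. Consider the deviation $p_{NoN}=c+\epsilon$ with the CP/EU response staying in $F^I_1$ at $(\tilde q_f,\tilde q_p)$, so $\tilde p=\tilde p_{t,3}=\kappa_{ad}n_{NoN}(1-\tilde q_f/\tilde q_p)$ and $n_{NoN}=1-(\kappa_u\tilde q_p+\epsilon)/(t_N+t_{NoN})$. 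Then $\pi_{NoN}(\epsilon)=n_{NoN}\bigl[\epsilon+\kappa_{ad}(\tilde q_p-\tilde q_f)\bigr]$, and a direct differentiation at $\epsilon=0$ gives $\bigl[t_N+t_{NoN}-\kappa_u\tilde q_p-\kappa_{ad}(\tilde q_p-\tilde q_f)\bigr]/(t_N+t_{NoN})$, which is strictly positive as soon as $t_N+t_{NoN}$ exceeds the constant $\kappa_u\tilde q_p+\kappa_{ad}(\tilde q_p-\tilde q_f)$. Therefore ISP $NoN$ strictly prefers a small upward deviation, and Strategy 4 fails to be NE.

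Step 4 (Conclusion) and the main obstacle: Once Strategies 1, 2 and 4 are eliminated, Theorem~\ref{theorem:NE_stage1_new_q<} leaves Strategy 3 as the unique candidate with $z^{eq}=1$, giving the stated prices. The main subtlety lies in Steps 1 and 3: the upward deviation argument is clean only if, for the perturbed $p_{NoN}$, the stage 2–4 best response of the CP remains in the same regime $F^I_1$ (same $(q_N,q_{NoN})$ and same $z=1$). I would handle this by invoking the continuity of the stage 2–4 equilibrium strategies established in Part I, so that for $\epsilon$ small enough the regime is preserved; the sign of the derivative is determined by constants and $t_N+t_{NoN}$, so the same conclusion then persists for any sufficiently large $t_N$ or $t_{NoN}$.
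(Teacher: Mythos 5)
Your elimination of Strategies 1, 2 and 4 is essentially sound, and in places cleaner than the paper's own argument: for Strategy 1 with $t_N$ large, the paper exhibits a concrete deviation $p'_{NoN}=\tfrac12 t_N$ landing in region C, whereas your right-derivative computation at the boundary of region $B_1$ (numerator $t_N+2t_{NoN}-(\kappa_u+\kappa_{ad})\tilde q_p$) reaches the same conclusion more directly; your Strategy 2 and Strategy 4 computations match the paper's Part 1-ii and Part 1-iii. One small slip in Step 1: the paper only imposes $p^{eq}_{NoN}\geq c$ when $z=0$, so for the $z^{eq}=1$ candidate you cannot discard $p^{eq}_{NoN}<c$ "immediately" — a below-cost access fee could be rational when compensated by the side payment. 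You need the payoff computation $\pi_{NoN}=\kappa_u\tilde q_p-t_{NoN}+\kappa_{ad}(\tilde q_p-\tilde q_f)<0$ (so the threshold on $t_{NoN}$ is $\kappa_u\tilde q_p+\kappa_{ad}(\tilde q_p-\tilde q_f)$, not $\kappa_u\tilde q_p$) together with the deviation to $p'_{NoN}=c$ yielding a nonnegative payoff.

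The genuine gap is in Step 4. The theorem asserts that Strategy 3 \emph{is} the (unique) NE with $z^{eq}=1$ for large transport costs, not merely that it is the only surviving candidate. Theorem~\ref{theorem:NE_stage1_new_q<} explicitly lists only \emph{candidate} strategies, each an NE "if and only if the conditions listed hold and no unilateral deviation is profitable," so ruling out candidates 1, 2 and 4 leaves open the possibility that no NE with $z^{eq}=1$ exists at all. The paper's proof devotes its entire Part 2 to closing this: it verifies condition (i) that $\Delta p^{eq}=\tfrac13(t_N-t_{NoN}+(\tilde q_p-\tilde q_f)(2\kappa_u-\kappa_{ad}))$ lies strictly inside region C, condition (ii) $\tilde q_p-\tilde q_f\leq\frac{2t_{NoN}+t_N}{\kappa_u+\kappa_{ad}}$, condition (iii) $\pi^{eq}_{NoN}>\pi_{NoN,z=0}$ (which requires computing both payoffs and comparing them term by term), and then — the longest part — rules out every unilateral deviation by each ISP into regions A, $B_1$, $B_2$ and D via order-of-magnitude comparisons such as $\pi^{eq}_{NoN}\approx\frac{4t_N}{9}$ versus $\pi'_{NoN}\approx\frac{t_N}{3}$. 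None of this appears in your proposal, and it does not follow from the elimination steps; you would need to supply it to establish the statement as the paper intends it.
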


Proof of Theorem \ref{theorem:bigt} is presented in Appendix~\ref{appendix:theorem:bigt}. 
\begin{remark}
	Note that when at least one of $t_N$ and $t_{NoN}$ is large, then the effect of $\tilde{q}_p$ can be ignored. Thus, this scenario can be considered to be similar to the case that both ISPs are neutral, i.e. the benchmark case.  Later, in Theorem~\ref{lemma:NEz=0}, we prove that a unique SPNE exists in this case, and it is similar to the NE strategies characterized in Theorem~\ref{theorem:bigt} with $\tilde{q}_p=\tilde{q}_f$. 
\end{remark}

Now, we characterize the equilibrium strategy by which $z^{eq}=0$. In Theorem~\ref{theorem:neutralz=0_q<}, we characterize the only candidate NE strategy by which $z^{eq}=0$, when inertias are not small.

\begin{theorem}\label{theorem:neutralz=0_q<}
	If $\tilde{q}_p<\frac{t_N+t_{NoN}}{\kappa_u}$, the only possible NE strategy by which  $(q^{eq}_N,q^{eq}_{NoN})\in F_0$, i.e. $z^{eq}=0$ is 	$p^{eq}_N=c+\frac{1}{3}(2t_{NoN}+t_N)$ and $p^{eq}_{NoN}=c+\frac{1}{3}(2t_N+t_{NoN})$.  The necessary condition for this strategy to be a candidate NE strategy is $\pi_{NoN}(p^{eq}_{NoN},z=0)\geq \pi_{NoN}(p^{eq}_{NoN},\tilde{p}_t) $.  
\end{theorem}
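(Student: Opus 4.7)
\medskip

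The plan is to exploit the fact that, under $z=0$, both ISPs carry the CP's content at the same (free) quality, so $(q^{eq}_N,q^{eq}_{NoN})\in F_0$ forces $q^{eq}_N=q^{eq}_{NoN}$. Plugging this into \eqref{equ:UN_new} and \eqref{equ:UNoN_new} removes the $\kappa_u(q_N-q_{NoN})$ term and reduces the pricing subgame to a standard asymmetric Hotelling duopoly with transport costs $t_N,t_{NoN}$ and common marginal cost $c$. The payoff of each ISP is then quadratic and strictly concave in its own price on the interior region $0\leq x_n\leq 1$, so the best-response functions are obtained by setting $\partial \pi_N/\partial p_N=0$ and $\partial \pi_{NoN}/\partial p_{NoN}=0$, yielding the linear system
\begin{equation*}
2p_N=c+t_{NoN}+p_{NoN},\qquad 2p_{NoN}=c+t_N+p_N,
\end{equation*}
whose unique solution is $p^{eq}_N=c+(2t_{NoN}+t_N)/3$ and $p^{eq}_{NoN}=c+(2t_N+t_{NoN})/3$, matching the claimed expressions.

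Next I would verify that this interior candidate is indeed consistent: substituting back gives $n^{eq}_N=(2t_{NoN}+t_N)/(3(t_N+t_{NoN}))$ and $n^{eq}_{NoN}=(2t_N+t_{NoN})/(3(t_N+t_{NoN}))$, both strictly in $(0,1)$, so the interior form of \eqref{equ:UN_new}--\eqref{equ:UNoN_new} is valid and both ISPs earn strictly positive payoff. To establish that this is the \emph{only} possible candidate with $z^{eq}=0$, I would rule out corner configurations: if $n^{eq}_N=0$, the neutral ISP earns zero and can profitably undercut to $p_N=p^{eq}_{NoN}+t_{NoN}-\varepsilon$ (for small $\varepsilon>0$) since $p^{eq}_{NoN}\geq c$, contradicting the NE property; the mirror argument eliminates $n^{eq}_N=1$. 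Hence any NE must lie in the interior region, where the FOCs uniquely pin down the stated prices.

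Finally, to get the ``candidate NE'' status I would check that no unilateral deviation into the $z=1$ regime is profitable. The neutral ISP gains nothing by inducing $z=1$ (its share and per-unit margin are governed by the same Hotelling logic). For the non-neutral ISP, deviating in price can cause the CP to switch in Stage~2 from offering free quality on both ISPs to charging premium quality and paying the side payment $\tilde{p}_t$ on ISP NoN. The best such deviation, at price $p^{eq}_{NoN}$, yields $\pi_{NoN}(p^{eq}_{NoN},\tilde{p}_t)$, so NE requires exactly $\pi_{NoN}(p^{eq}_{NoN},z=0)\geq \pi_{NoN}(p^{eq}_{NoN},\tilde{p}_t)$, the condition stated in the theorem.

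The main obstacle I anticipate is the uniqueness/ruling-out step: one must be careful that no discontinuous deviation by the non-neutral ISP (triggering a regime change in the CP's Stage~2 response from $F_0$ to $F_1$, or in the EU locational-choice region) produces a higher payoff than the interior Hotelling candidate. This requires comparing the NoN ISP's payoff across the different feasibility regions $F^L_1,F^I_1,F_0$ characterized in Part~I, and is precisely why the inequality $\pi_{NoN}(p^{eq}_{NoN},z=0)\geq \pi_{NoN}(p^{eq}_{NoN},\tilde{p}_t)$ appears explicitly as a necessary condition rather than being automatic.
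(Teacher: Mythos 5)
Your interior analysis is the same as the paper's: on $-t_{NoN}<\Delta p<t_N$ the CP's $z=0$ response is $(\tilde{q}_f,\tilde{q}_f)$, the payoffs reduce to the asymmetric Hotelling form, and the first-order conditions pin down the unique candidate; the final inequality is correctly identified as the Stage-2 consistency requirement (at the candidate prices, ISP NoN must weakly prefer inducing $z=0$ over setting $\tilde p=\tilde p_t$ and inducing $z=1$) rather than a price-deviation check, even though your wording conflates the two. Your elimination of the upper corner $\Delta p\geq t_N$ by having ISP NoN undercut also works (the paper uses this for $\Delta p=t_N$ and an upward deviation by ISP N for $\Delta p>t_N$; both are fine once one notes $p_{NoN}\geq p_N+t_N>c$).

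The genuine gap is in your elimination of the lower corner $n^{eq}_N=0$, i.e. $\Delta p\leq -t_{NoN}$. You have ISP N undercut to $p_N=p^{eq}_{NoN}+t_{NoN}-\varepsilon$ and claim a positive share, but that claim presumes the continuation stays at $z=0$ with equal qualities. It does not: after the deviation $\Delta p'=-t_{NoN}+\varepsilon$ still satisfies $\Delta p'\leq \kappa_u\tilde q_p-t_{NoN}$ for small $\varepsilon$, which is region A of Theorem 5-I, where ISP NoN's Stage-2 best response is to set $\tilde p=\tilde p_{t,1}$ and the CP responds with $(0,\tilde q_p)\in F^L_1$, so $n_N$ remains $0$ and the deviation earns ISP N nothing. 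The paper closes this case by a different and airtight argument: for \emph{any} prices with $\Delta p\leq \kappa_u\tilde q_p-t_{NoN}$ (a region that strictly contains $\Delta p\leq -t_{NoN}$), ISP NoN gets at most $p_{NoN}-c$ under $z=0$ but gets $p_{NoN}-c+\kappa_{ad}(\tilde q_p-\tilde q_f)>p_{NoN}-c$ by choosing $\tilde p_{t,1}$ in Stage 2, so no subgame-perfect continuation there can have $z^{eq}=0$ in the first place; no price deviation is needed. You should replace your undercutting step with this Stage-2 argument. A smaller inaccuracy: $(q^{eq}_N,q^{eq}_{NoN})\in F_0$ does not force equal qualities — at the corners the CP sets quality $0$ on the ISP with no subscribers — though this does not affect your interior computation.
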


\begin{remark}
	This strategy is an NE strategy if and only if there is no  unilateral profitable deviation for any of the ISPs. 
\end{remark}

\begin{corollary} \label{corollary:outcome_z=0}
If the strategy  of Theorem~\ref{theorem:neutralz=0_q<} is an NE, it yields  $(q^{eq}_N,q^{eq}_{NoN})=(\tilde{q}_f,\tilde{q}_f)\in F^L_0$, $n^{eq}_N=\frac{2t_{NoN}+t_N}{3(t_{NoN}+t_N)}$, and $n^{eq}_{NoN}=\frac{2t_N+t_{NoN}}{3(t_N+t_{NoN})}$. Since $z^{eq}=0$, $\tilde{p}^{eq}$ is of no importance.
\end{corollary}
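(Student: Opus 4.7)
The plan is to read this corollary off as a direct consequence of Theorem~\ref{theorem:neutralz=0_q<} combined with the Stage 2--4 analysis inherited from Part I, rather than as a fresh derivation. Since $z^{eq}=0$ means no side payment is charged, the Stage 3 best response of the CP (established in Part I) is to offer her content at free quality on each ISP; hence $(q^{eq}_N, q^{eq}_{NoN}) = (\tilde{q}_f, \tilde{q}_f)$. Membership of this pair in $F^L_0$ is then just the definition of $F^L_0$ from Part I (the sub-region of $F_0$ in which both ISPs are served at free quality). Because $z^{eq}=0$, the side-payment $\tilde{p}^{eq}$ does not enter any ISP's payoff nor any EU's utility, and is therefore of no importance.

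Next, I would compute the two shares by substituting the equilibrium prices of Theorem~\ref{theorem:neutralz=0_q<} into the share formula implicit in \eqref{equ:UN_new}. Since $q^{eq}_N = q^{eq}_{NoN}$, the quality-gap term $\kappa_u(q_N - q_{NoN})$ vanishes and the neutral-ISP share collapses to
\begin{equation*}
n^{eq}_N \;=\; \frac{t_{NoN} + p^{eq}_{NoN} - p^{eq}_N}{t_N + t_{NoN}}.
\end{equation*}
A one-line subtraction yields $p^{eq}_{NoN} - p^{eq}_N = \tfrac{1}{3}(t_N - t_{NoN})$, and substituting this gives $n^{eq}_N = \frac{2t_{NoN}+t_N}{3(t_N+t_{NoN})}$; then $n^{eq}_{NoN} = 1 - n^{eq}_N = \frac{2t_N+t_{NoN}}{3(t_N+t_{NoN})}$, matching the stated formulas.

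The only consistency check needed is that the assumption $0 \le x_n \le 1$ underlying \eqref{equ:UN_new} is indeed satisfied at this candidate; that is immediate from the positivity of $t_N$ and $t_{NoN}$, which forces both $n^{eq}_N$ and $n^{eq}_{NoN}$ to lie strictly in $(0,1)$. I expect no genuine obstacle: the corollary is essentially a bookkeeping step that records, for later reference, what the NE candidate of Theorem~\ref{theorem:neutralz=0_q<} means for the CP's quality choice and the resulting market split, so the proof reduces to one short derivation plus a citation of the Part I Stage 3 best response.
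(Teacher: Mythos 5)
Your proposal is correct and follows essentially the same route as the paper: cite the Stage-3 result of Part I (item 1 of the $z=0$ CP theorem, which applies because $\Delta p^{eq}=\tfrac{1}{3}(t_N-t_{NoN})\in(-t_{NoN},t_N)$) to get $(q^{eq}_N,q^{eq}_{NoN})=(\tilde{q}_f,\tilde{q}_f)$, then substitute the price gap into the EU-share formula. One caveat: your gloss of $F^L_0$ as ``the sub-region where both ISPs are served at free quality'' is not its definition --- $F^L$ is the region with $x_N\le 0$, i.e.\ $n_N=0$ --- and since both shares here are strictly positive the pair in fact lies in $F^I_0$, which is what the paper's own proof states (the $F^L_0$ in the corollary's statement appears to be a typo that you inherited rather than caught).
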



 \section{Benchmark Case: A Neutral Regime}\label{section:proofsBenchamrk}
In this section, we consider a benchmark case in which both ISPs are forced to be neutral. Our goal is to find the SPNE when both ISPs are neutral. We compare the results of the benchmark case with the results we found in the previous section. Note that we do not restrict the analysis of this section to any particular range of transport costs, and the analysis is done for a general case.

 First, we find the equilibrium strategies for this benchmark case. The main result of this section is Theorem~\ref{lemma:NEz=0}. Then, in Corollary~\ref{corollary:outcome_neutral}, we characterize the equilibrium outcome of the game given the equilibrium strategies. 
 
 In order to characterize the equilibrium in this case, we can consider a simple change to our previous model and use some of the previous results. We assume that in this case, the CP chooses $z^{eq}=0$, regardless of the strategy of ISPs. Thus, $(q^{eq}_N,q^{eq}_{NoN})\in F_0$, and as a result both ISPs are neutral. 
 
 Note that in this case, the equilibrium strategy of Stage 4 is similar as before, and the equilibrium strategy of Stage 3 is characterized in Theorem~\ref{l-lemma:CP_z=0_new}-I. Recall that in Theorem~\ref{l-lemma:CP_z=0_new}-I, we characterize the equilibrium strategies within $F_0$ without  considering the strategies in $F_1$. In addition, note that the strategy of Stage 2 is of no importance  since with $z^{eq}=0$, the effect of $\tilde{p}$ would be eliminated in all analyses. Thus, we only need to find the new equilibrium strategies in Stage 1 of the game:

 \begin{theorem}\label{lemma:NEz=0}
 	The unique NE strategies chosen by the ISP are $p^{eq}_N=c+\frac{1}{3}(2t_{NoN}+t_N)$ and $p^{eq}_{NoN}=c+\frac{1}{3}(2t_N+t_{NoN})$. 
 \end{theorem}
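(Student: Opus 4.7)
The plan is to exploit the benchmark reduction: since the CP is forced to choose $z^{eq}=0$, the Stage 3 equilibrium characterized in Theorem~\ref{l-lemma:CP_z=0_new}-I sets $q^{eq}_N = q^{eq}_{NoN} = \tilde{q}_f$, so Stage 1 collapses to a classical linear Hotelling duopoly with asymmetric transport costs $t_N, t_{NoN}$ and no quality differentiation. Substituting $q_N = q_{NoN}$ and $z = 0$ into \eqref{equ:UN_new}--\eqref{equ:UNoN_new} yields
\begin{equation}
\pi_N(p_N,p_{NoN}) = (p_N - c)\,\frac{t_{NoN} + p_{NoN} - p_N}{t_N + t_{NoN}}, \quad
\pi_{NoN}(p_N,p_{NoN}) = (p_{NoN} - c)\,\frac{t_N + p_N - p_{NoN}}{t_N + t_{NoN}}.
\end{equation}

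First, I would establish the interior candidate. Each payoff is a concave quadratic in the ISP's own price, with a unique interior maximizer given by the first-order conditions $2p_N - p_{NoN} = c + t_{NoN}$ and $2p_{NoN} - p_N = c + t_N$. Solving this $2 \times 2$ linear system gives exactly $p^{eq}_N = c + \tfrac{1}{3}(2t_{NoN} + t_N)$ and $p^{eq}_{NoN} = c + \tfrac{1}{3}(2t_N + t_{NoN})$. Plugging back, one checks that $x_n = \frac{2t_{NoN} + t_N}{3(t_N + t_{NoN})} \in (0,1)$, so both ISPs are active and the payoff expressions above are indeed the relevant branch.

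Second, I would verify that this is a genuine NE. Concavity in the own price handles all deviations that keep the opponent's share strictly between $0$ and $1$. The only non-routine check is ruling out ``regime-switching'' deviations, namely (i) an ISP lowering its price enough to drive the rival's share to $0$ (monopolization), or (ii) an ISP raising its price enough to exit the market. For (i), the price needed to push $x_n$ to its corner is bounded below by $p_{NoN}^{eq} - t_N$ or $p_N^{eq} - t_{NoN}$ respectively; the induced monopoly profit is then explicitly smaller than the interior profit $\frac{(2t_{NoN}+t_N)^2}{9(t_N+t_{NoN})}$ (resp.\ $\frac{(2t_N+t_{NoN})^2}{9(t_N+t_{NoN})}$) earned at the candidate, a calculation that follows the standard asymmetric-Hotelling argument. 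Case (ii) is trivial since exiting yields zero profit while the candidate yields a strictly positive one.

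Third, for uniqueness, I would argue that no corner NE exists: if some candidate had one ISP inactive (say $n_N = 0$), then that ISP could undercut marginally to $p_N = p_{NoN} - t_{NoN} - \varepsilon$ and earn strictly positive profit, contradicting best response. Hence every NE lies in the interior region, where the FOC system has a unique solution, so the pair above is the unique NE. The main obstacle is making the regime-switching deviation check rigorous in the asymmetric case $t_N \neq t_{NoN}$; however, because the candidate prices and payoffs are explicit polynomial expressions in $t_N, t_{NoN}, c$, this reduces to a finite set of algebraic inequalities that can be verified directly.
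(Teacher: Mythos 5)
Your proposal is correct and follows essentially the same route as the paper: the first-order conditions in the interior region $-t_{NoN}<\Delta p<t_N$ give the unique candidate, regime-switching deviations are ruled out by comparing the best corner deviation ($p'_{N}=p^{eq}_{NoN}-t_{N}$, resp.\ $p'_{NoN}=p^{eq}_{N}-t_{NoN}$) against the interior profits $\frac{(2t_{NoN}+t_N)^2}{9(t_N+t_{NoN})}$ and $\frac{(2t_N+t_{NoN})^2}{9(t_N+t_{NoN})}$ (the algebraic inequality you defer reduces to $t_N^2+4t_{NoN}^2+4t_Nt_{NoN}>0$, so it does go through), and corner equilibria are eliminated by the marginal-undercut argument, which is exactly the content of the paper's Lemmas~\ref{lemma:pnonDominated} and \ref{lemma:pnDominated}. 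One small slip: the inactive ISP N re-enters the market by pricing just below $p_{NoN}+t_{NoN}$, not $p_{NoN}-t_{NoN}$, and one must also note (as the paper does) that such a price strictly above $c$ always exists because $p_{NoN}\geq c$ and $t_{NoN}>0$.
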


 \begin{remark}\label{remark:pNpNoNmulti}
 	Note that if $t_N$ (respectively, $t_{NoN}$) increases, $p_{NoN}$ (respectively, $p_N$) increases with a rate $\frac{2}{3}$rd the rate of the growth of this transport cost. In addition, counter-intuitively, $p_N$ (respectively, $p_{NoN}$) also increases with a rate $\frac{1}{3}$rd of the rate of growth of $t_N$ (respectively, $t_{NoN}$). This counter-intuitive result (Internet access fee of an ISP being an increasing function of the transport cost of this ISP) is because of competition between ISPs: For example, with the  increase of $t_{NoN}$, the neutral ISP can set a higher price. This allows her competitor, i.e. ISP NoN, to increases her price. 
 \end{remark}

Now, using the equilibrium strategies characterized previously, we characterize the equilibrium outcome of the game in a neutral setting in the following corollary:
 
 \begin{corollary}\label{corollary:outcome_neutral}
 	If both ISPs are neutral, then in the equilibrium, ISPs chooses  $p^{eq}_N=c+\frac{1}{3}(2t_{NoN}+t_N)$ and $p^{eq}_{NoN}=c+\frac{1}{3}(2t_N+t_{NoN})$ 	as  the Internet access fees. The CP chooses the vector of qualities $(q^{eq}_N,q^{eq}_N)=(\tilde{q}_f,\tilde{q}_f)$. The fraction of EUs with each ISP is $n^{eq}_{N}=\frac{2t_{NoN}+t_N}{3(t_{NoN}+t_N)}$ and $n^{eq}_{NoN}=\frac{2t_N+t_{NoN}}{3(t_{NoN}+t_N)}$.
 \end{corollary}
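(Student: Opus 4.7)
The plan is to specialize the ISP payoff expressions \eqref{equ:UN_new}--\eqref{equ:UNoN_new} to the benchmark regime. Since both ISPs are forced to be neutral, the preceding stages (Theorem~\ref{l-lemma:CP_z=0_new}-I) give $q^{eq}_N=q^{eq}_{NoN}=\tilde{q}_f$ and $z^{eq}=0$, so the quality-differential and side-payment terms vanish. Whenever the induced shares lie in $(0,1)$, the payoffs collapse to the classical Hotelling duopoly form
\begin{equation*}
\pi_N(p_N)=(p_N-c)\frac{t_{NoN}+p_{NoN}-p_N}{t_N+t_{NoN}}, \qquad \pi_{NoN}(p_{NoN})=(p_{NoN}-c)\frac{t_N+p_N-p_{NoN}}{t_N+t_{NoN}}.
\end{equation*}
Each $\pi_i$ is a downward-opening parabola in its own price, hence strictly concave on the interior branch.

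For the interior best responses, I would solve the first-order conditions to obtain $p^{BR}_N(p_{NoN})=\tfrac{1}{2}(c+t_{NoN}+p_{NoN})$ and $p^{BR}_{NoN}(p_N)=\tfrac{1}{2}(c+t_N+p_N)$. Substituting one into the other yields the unique linear-system fixed point $p^{eq}_N=c+\tfrac{1}{3}(2t_{NoN}+t_N)$ and $p^{eq}_{NoN}=c+\tfrac{1}{3}(2t_N+t_{NoN})$. Plugging back, the induced shares $n^{eq}_N=\frac{2t_{NoN}+t_N}{3(t_N+t_{NoN})}$ and $n^{eq}_{NoN}=\frac{2t_N+t_{NoN}}{3(t_N+t_{NoN})}$ both lie strictly in $(0,1)$, which confirms that the interior Hotelling expressions are indeed the relevant payoffs at the candidate and that each ISP is globally best-responding on the interior branch.

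To establish uniqueness I would rule out corner candidates in which $n_i\in\{0,1\}$ for some $i$. If ISP $i$ is cornered ($n_i=0$) while its rival charges $p_j\geq c$, then undercutting $p_i$ just enough to reach the EU indifference boundary yields a strictly positive share at a strictly positive markup (for $p_j>c$) or at zero cost of deviation (reducing down to $c$ when $p_j=c$ allows $i$ to re-enter), contradicting optimality. Symmetrically, an ISP that corners the entire market at $p_j>c$ cannot sustain that: the rival can profitably enter just above $c$. These arguments exhaust the possible kinked regimes of the piecewise-linear allocation, leaving the interior profile as the unique NE.

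The main obstacle is the boundary bookkeeping: the EU allocation is piecewise linear in $(p_N,p_{NoN})$, so a deviator's payoff has kinks where the share formula switches regime, and one must verify that no such kinked deviation dominates the interior best response. Strict concavity on the interior branch, together with the fact that the candidate shares are bounded strictly away from $0$ and $1$, reduces this to a small finite case check comparing the interior payoff against the closed-form corner payoffs — routine once the regime thresholds are written out explicitly in terms of $t_N$, $t_{NoN}$, and $c$, but it is the step that separates uniqueness from mere existence of the interior NE.
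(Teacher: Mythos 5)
Your proposal is correct and follows essentially the same route as the paper: the corollary itself is a one-line consequence of Theorem~\ref{lemma:NEz=0} (whose appendix proof likewise derives the interior candidate from the first-order conditions of the Hotelling payoffs, rules out $\Delta p\leq -t_{NoN}$ and $\Delta p\geq t_N$ as equilibria via undercutting/entry arguments, and then compares the interior payoff against the best corner deviation, e.g.\ $\tfrac{1}{3}(t_N-t_{NoN})$ versus $\tfrac{(2t_N+t_{NoN})^2}{9(t_N+t_{NoN})}$). The only difference is that you defer that final payoff comparison as ``routine,'' whereas the paper carries out the algebra explicitly; the step you identify is indeed exactly the one it checks.
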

 Results follow from Theorem~\ref{l-lemma:CP_z=0_new}-I (note that  $-t_{NoN}<\Delta p^{eq}<t_N$), and  \eqref{l-equ:EUs_linear}-I.

\section{The Outcome of the Game and Discussions}\label{section:discussion}

First, in Section~\ref{section:summaryof resutls}, we summarize, discuss, and interpret the possible outcomes of the model characterized in Part I and II. Then, in Section~\ref{section:summary_benchmark}, we summarize  and discuss the results for a benchmark case in which both ISPs are neutral. 

\subsection{Possible Outcomes of the Market}  \label{section:summaryof resutls}
Recall that in Section~\ref{section:theory}, we have characterized all the possible SPNE strategies and their outcomes in Theorems  \ref{l-theorem:NE_stage1_new_q>}-I, \ref{theorem:NE_stage1_new_q<}, and  \ref{theorem:neutralz=0_q<}, and Corollaries~\ref{l-corollary:outcome_q>}-I, \ref{corollary:outcome_q<}, and \ref{corollary:outcome_z=0}. In this section, we summarize, discuss, and interpret these strategies.

\textbf{Candidate Strategy (a):}   $p^{eq}_{NoN}=c+\kappa_u \tilde{q}_{p}-t_{NoN}$, $p^{eq}_N=c$, $z^{eq}=1$, i.e. the CP pays for the premium quality, $\tilde{p}^{eq}=\tilde{p}_{t,1}=\kappa_{ad}(1-\frac{\tilde{q}_f}{\tilde{q}_p})$, $(q^{eq}_N,q^{eq}_{NoN})=(0,\tilde{q}_p)\in F^L_1$, $n^{eq}_N=0$, and $n^{eq}_{NoN}=1$ ({Theorem~\ref{l-theorem:NE_stage1_new_q>}-I-1} and {Theorem~\ref{theorem:NE_stage1_new_q<}-1}). Also, recall that in Theorems~\ref{l-theorem:neutralnotexists_q>}-I and \ref{l-theorem:NE_stage1_new_q>}-I, we prove that when $t_N$ and $t_{NoN}$ are sufficiently small, i.e. EUs are not locked-in with ISPs, then the unique SPNE of  the game is (a). 

This strategy represents the outcome of the market in which the CP offers the content with premium quality and pays the side-payment to the non-neutral ISP. Note that EUs can receive a better quality of content on the non-neutral ISP, and that yields a better advertisement revenue for the CP. Thus, in the equilibrium, the CP offers her content only on the non-neutral ISP to increase the number of EUs connecting to the Internet via the non-neutral ISP. By doing so, given the conditions of this candidate strategy, all  EUs would join the non-neutral ISP and the neutral ISP would be driven out of the market. 

In addition, note that the Internet access fee chosen by ISP NoN ($p^{eq}_{NoN}$) increases with (i) increasing the sensitivity of end-users to the quality ($\kappa_u$), (ii) increasing the value of the premium quality ($\tilde{q}_p$), and (iii) decreasing the transport cost of ISP NoN ($t_{NoN}$)\footnote{Note that $t_{NoN}$ has an inverse relationship with the market power of ISP NoN if $t_N$ is fixed.}. 

Moreover, note that the side-payment charged for the premium quality ($\tilde{p}^{eq}\tilde{q}_p$) is positive, and is dependent on (i) the sensitivity of the payoff of the CP to the quality of the advertisement, i.e. $\kappa_{ad}$, and (ii) the difference between the premium and free quality, i.e. $\tilde{q}_p-\tilde{q}_f$. The latter implies that ISP NoN chooses the side-payment in proportion to the additional value created for the CP.

\textbf{Candidate Strategy (b):} $p^{eq}_{NoN}=c+\frac{t_{NoN}+2t_N+\tilde{q}_{p}(\kappa_u -2\kappa_{ad})}{3}$, $p^{eq}_{N}=c+\frac{2t_{NoN}+t_N-\tilde{q}_{p}(\kappa_u +\kappa_{ad})}{3}$, $z^{eq}=1$, $\tilde{p}^{eq}=\tilde{p}_{t,2}=\kappa_{ad}(n^{eq}_{NoN}-\frac{\tilde{q}_f}{\tilde{q}_p})$, $(q^{eq}_N,q^{eq}_{NoN})=(0,\tilde{q}_p)\in F^I_1$, $n^{eq}_N=\frac{t_N+2t_{NoN}-\tilde{q}_p(\kappa_u+\kappa_{ad})}{3(t_N+t_{NoN})}$, and $n^{eq}_{NoN}=\frac{2t_N+t_{NoN}+\tilde{q}_p(\kappa_u+\kappa_{ad})}{3(t_N+t_{NoN})}$ ({Theorem~\ref{l-theorem:NE_stage1_new_q>}-I-2} and {Theorem~\ref{theorem:NE_stage1_new_q<}-2}). 

Candidate strategy (b) represents the outcome of the market in which both ISPs are active. However, similar to (a), with this outcome, the CP does not offer her content over the neutral ISP, and  offers her content only over the non-neutral ISP  with premium quality. Thus, although the CP stops offering her content on the neutral ISP, she cannot move all EUs to ISP NoN. The loss in the number of EUs would be compensated by receiving higher advertisement revenue (due to the premium quality) and paying  a lower side payment to ISP NoN (will be explained the paragraph discussing the side payment).

It is noteworthy to observe that if $t_N$ (respectively, $t_{NoN}$) increases, $p^{eq}_{NoN}$ (respectively, $p^{eq}_N$) increases with a rate $\frac{2}{3}$rd the rate of the growth of this transport cost. This is intuitive. The higher $t_N$ (while $t_{NoN}$ fixed), the higher would be the market power of ISP NoN, and subsequently the higher would be $p^{eq}_{NoN}$. In addition, counter-intuitively, $p_N$ (respectively, $p_{NoN}$) also increases with a rate $\frac{1}{3}$rd of the rate of growth of $t_N$ (respectively, $t_{NoN}$). This counter-intuitive result (Internet access fee of an ISP being an increasing function of the transport cost of this ISP) is because of competition between ISPs\footnote{For example, with the increase of $t_{NoN}$, EUs have more incentive to join the neutral ISP and less incentive to switch to the non-neutral ISP. Thus the neutral ISP can set a higher price for EUs. This allows her competitor, i.e. ISP NoN, to increases her price, but with a rate lower than the rate by which the price of ISP N increases.}. 

In addition, note that $p^{eq}_N$ is a decreasing function of $\tilde{q}_p$, $\kappa_u$, and $\kappa_{ad}$: The higher the premium quality or the sensitivity of EUs and the CP to the quality, the lower would be the Internet access fee of ISP N. On the other hand, the relationship between these parameters and $p^{eq}_{NoN}$ is more complicated. The Internet access fee of ISP NoN is increasing with respect to the sensitivity of EUs to the quality, and is decreasing with respect to the sensitivity of the CP to the quality. Thus, the more the CP is sensitive to the quality, the more the ISP NoN provides subsidies for EUs (cheaper Internet access fees), and compensates the payoff through charging the CP. In addition, note that $p^{eq}_{NoN}$ is decreasing or increasing with respect to the amount of premium quality ($\tilde{q}_p$) depending on the sensitivity of EUs and the CP to the quality: If the sensitivity of EUs to the quality dominates the sensitivity of the CP ($\kappa_u>2\kappa_{ad}$), then $p^{eq}_{NoN}$ is increasing with respect to $\tilde{q}_p$. If not, then ISP NoN can generate more revenue from the CP, and thus provide a cheaper Internet access fee for EUs. The higher this sensitivity, the higher would be the side payment from the CP (as we can see from the expression of $\tilde{p}^{eq}$), and the higher would be the discount on the Internet access fees for EUs. 

Moreover, note that the side-payment charged for the premium quality ($\tilde{p}^{eq}\tilde{q}_p$) is increasing with respect to (i) $\kappa_{ad}$ (the sensitivity of the CP to the quality), (ii) the premium quality ($\tilde{q}_p$), (iii) number of EUs with the non-neutral ISP ($n^{eq}_{NoN}$), and it is decreasing with respect to the free quality ($\tilde{q}_f$). Note that since in this case $n_{NoN}<1$, the side payment would be lower than the side payment in candidate strategy (a). This side-payment can be positive or negative.  However, as we explain later, the numerical results reveal that the side-payment is positive whenever this candidate strategy is an NE. 

In addition, note that $n_{NoN}$ is increasing with respect to the premium quality, i.e. $\tilde{q}_p$, and the sensitivity of the CP and EUs to the quality, i.e. $\kappa_u$ and $\kappa_{ad}$. The relationship between $n_{NoN}$ and the transport costs, i.e. $t_N$ and $t_{NoN}$ is more complex and is discussed in Section~\ref{section:simulation_findNE}.\footnote{Recall that $n_N=1-n_{NoN}$. Thus, the relationship between $n_N$ and the parameters are the inverse of that of $n_{NoN}$.}

\textbf{Candidate Strategy (c):}  $p^{eq}_{NoN}=c+\frac{t_{NoN}+2t_N+  (\tilde{q}_{p}-\tilde{q}_f)(\kappa_u -2\kappa_{ad})}{3}$, $p^{eq}_{N}=c+\frac{2t_{NoN}+t_N-(\tilde{q}_{p}-\tilde{q}_f)(\kappa_u +\kappa_{ad})}{3}$, $z^{eq}=1$, $\tilde{p}^{eq}=\tilde{p}_{t,3}=\kappa_{ad}n^{eq}_{NoN}(1-\frac{\tilde{q}_f}{\tilde{q}_p})$, $(q^{eq}_N,q^{eq}_{NoN})=(\tilde{q}_f,\tilde{q}_p)\in F^I_1$, $n^{eq}_N=\frac{t_N+2t_{NoN}-(\tilde{q}_p-\tilde{q}_f)(\kappa_u+\kappa_{ad})}{3(t_N+t_{NoN})}$, and $n^{eq}_{NoN}=\frac{2t_N+t_{NoN}+(\tilde{q}_p-\tilde{q}_f)(\kappa_u+\kappa_{ad})}{3(t_N+t_{NoN})}$  ({Theorem~\ref{theorem:NE_stage1_new_q<}-3}).  In Theorem~\ref{theorem:bigt}, we prove that when either of $t_N$ or $t_{NoN}$ is large, then the only candidate strategy by which $z^{eq}=1$ is (c).

 Candidate strategy (c) represents the outcome of the market in which both ISPs are active, and the CP  offers her content over both ISPs, with free quality on the  neutral ISP and with premium quality on the non-neutral one.  
 
 The dependencies of the access fees ($p^{eq}_N$ and $p^{eq}_{NoN}$) to $t_N$, $t_{NoN}$, $\kappa_u$, and $\kappa_{ad}$ are the same as what described for candidate strategy (b).  In addition, note that $p^{eq}_N$ is decreasing with the difference between the premium and free qualities, i.e. $\tilde{q}_p-\tilde{q}_f$, and $p^{eq}_{NoN}$ is decreasing or increasing with respect to the difference in the qualities depending on the sensitivity of EUs and the CP to the quality (similar to the description  for the candidate strategy (b)). 
 
 Moreover, note that the side-payment charged for the premium quality ($\tilde{p}^{eq}\tilde{q}_p$) is increasing with respect to (i) $\kappa_{ad}$ (the sensitivity of the CP to the quality), (ii) the difference between the premium  and free qualities ($\tilde{q}_p-\tilde{q}_f$), (iii) number of EUs with the non-neutral ISP ($n^{eq}_{NoN}$). This side-payment is always positive. The dependencies of $n_{NoN}$ to the parameters is similar to what described for candidate strategy (b), with the difference that $n_{NoN}$ depends on the difference in the qualities, i.e. $\tilde{q}_p-\tilde{q}_f$, instead of only $\tilde{q}_p$.

 Note that when either of $t_N$ or $t_{NoN}$ is large, then (c) is the only candidate strategy by which $z^{eq}=1$. First, recall that the payoff that an ISP receives depends on both the number of EUs and the Internet connection fee of that ISP. In addition, we discussed that when either of $t_N$ or $t_{NoN}$ is large, then both of  the Internet connection fees would be large in candidate strategies (b) and (c). It turns out that when $t_N$ or $t_{NoN}$ is large, ISPs prefer candidate strategies (b) and (c) to the strategies by which they discount the price heavily to attract EUs (Candidate strategies (a) and (d)). 
 
 
 Moreover, when both ISPs are active, large $t_{NoN}$ or $t_N$ decreases the effect of quality of the content on the decision of EUs (both through the high transport costs and increase in the Internet access fees).  Thus, the CP cannot control the number of EUs with each ISP by strategically controlling her quality. Therefore the CP simply chooses to provide with maximum possible quality on both ISPs instead of choosing strategic qualities to control EUs with each ISP. Note that the only candidate strategy that yields this outcome is (c).
 
\textbf{Candidate Strategy (d):} $p^{eq}_{NoN}=c$, $p^{eq}_N=c-\kappa_u(2\tilde{q}_{p}-\tilde{q}_f)+t_{NoN}$, $z^{eq}=1$, $\tilde{p}^{eq}=\tilde{p}_{t,3}=\kappa_{ad}n^{eq}_{NoN}(1-\frac{\tilde{q}_f}{\tilde{q}_p})$, $(q^{eq}_N,q^{eq}_{NoN})=(\tilde{q}_f,\tilde{q}_p)\in F^I_1$,  $n^{eq}_N=\frac{\kappa_u \tilde{q}_p}{t_N+t_{NoN}}$, and $n^{eq}_{NoN}=\frac{t_N+t_{NoN}-\kappa_u \tilde{q}_p}{t_N+t_{NoN}}$  ({Theorem~\ref{theorem:NE_stage1_new_q<}-4}).

Candidate strategy (d) and its outcome represent the scenario in which the non-neutral ISP is forced to provide  a low Internet access fee for EUs. This strategy can only be valid when $t_{NoN}$ is large (so that $p^{eq}_N\geq c$). In other words, the only scenario that this strategy is possible is when EUs are reluctant joining the non-neutral ISP. Thus, this ISP should provide large discounts for EUs. Note that in this case, both ISPs are active, and the CP  offers her content over both ISPs, with free quality on the  neutral ISP and with premium quality on the non-neutral one. 

Note that in this case, $p^{eq}_N$ is decreasing with respect to $\kappa_u$ and $\tilde{q}_p$, and increasing with respect to $\tilde{q}_f$ and $t_{NoN}$. In addition, the side payment is similar to the candidate strategy (c).  

In this candidate strategy, $p^{eq}_{NoN}$ is fixed, while $p^{eq}_N$ is decreasing with respect to $\tilde{q}_p$ and $\kappa_u$. In addition, the rate of decrease of $p^{eq}_N$ is twice of the rate of increase of utility of EUs from $\kappa_u$ and $\tilde{q}_p$ when connecting to ISP NoN. Thus, The rate of increase in the utility of EUs for ISP N is higher than that of ISP NoN, and as result confirms, $n^{eq}_N$ would be increasing with respect to the premium quality and the sensitivity of EUs to the quality. In addition,  $p^{eq}_N$ is increasing with $t_{NoN}$. Thus, as result confirms, $n^{eq}_N$ would be decreasing with respect to the transport cost of ISP NoN.\footnote{Note that the utility of EUs connecting to ISP NoN is also decreasing with $t_{NoN}$ \eqref{l-equation:CP_2}-I. However, the rate of decrease in the utility of EUs connecting to ISP NoN ($t_{NoN}$ is multiplies to a coefficient smaller than one) is lower than the rate of increase of the price of the neutral ISP (multiplied by one). Thus, overall, the number of EUs with the neutral (respectively, non-neutral) ISP is decreasing (respectively, increasing).} Finally, note that the Internet access fees are independent of $t_N$, but the utility of EUs connecting to neutral ISP is decreasing with $t_N$ \eqref{l-equation:CP_2}-I. Thus, as result confirms, the number of EUs with the neutral ISP, i.e. $n^{eq}_{N}$, is decreasing with respect to both $t_N$.

\textbf{Candidate Strategy (e):}  $p^{eq}_{NoN}=c+\frac{1}{3}(2t_N+t_{NoN})$, $p^{eq}_N=c+\frac{1}{3}(2t_{NoN}+t_N)$, $z^{eq}=0$, $(q^{eq}_N,q^{eq}_{NoN})=(\tilde{q}_f,\tilde{q}_f)\in F^L_0$, $n^{eq}_N=\frac{2t_{NoN}+t_N}{3(t_{NoN}+t_N)}$, $n^{eq}_{NoN}=\frac{2t_N+t_{NoN}}{3(t_N+t_{NoN})}$, and since $z^{eq}=0$, $\tilde{p}^{eq}$ is of no importance ({Theorem~\ref{theorem:neutralz=0_q<}}).  

This case characterizes the only possible NE strategy by which $z^{eq}=0$. This strategy and its outcome is similar to the benchmark case presented in 
Section~\ref{section:summary_benchmark}.  (c) would be reduced to (e), if $\tilde{q}_p=\tilde{q}_f$.

\textbf{Remark: } Note that candidate strategies in different theorems (defined for different regions of $t_N$ and $t_{NoN}$) can be similar, e.g. {Theorem~\ref{l-theorem:NE_stage1_new_q>}-I-1} and {Theorem~\ref{theorem:NE_stage1_new_q<}-1}. In addition, there is no outcome in which the CP offers her content  only on the neutral ISP. From the expression of payoff of the CP \eqref{l-equ:payoffCP_new}-I, the CP can get at most $\kappa_{ad}\tilde{q}_f$ by offering only on the neutral ISP. On the other hand, the CP can guarantee a payoff of this amount by offering on both ISPs and $z=0$. Assumption~\ref{l-assumption:tie_diversify}-I, i.e. the CP prefers to offer on both ISP whenever she is indifferent, yields that the CP never choose the strategy in which she offers only on the neutral ISP.  

\textbf{Interplay Between Sensitivities to Quality and the Outcome:} Intuitively, we expect that high sensitivity of EUs and the CP to the quality, i.e.  large $\kappa_u$ and $\kappa_{ad}$, respectively, yields high payoff for the non-neutral ISP, since this ISP can provide a premium quality and charge the EUs accordingly to increase her payoff. Thus, the payoff can be collected from EUs or the CP, or both. Results reveal that in all candidate strategies ISP NoN charges the CP in proportion to her sensitivity to the quality of the content. In addition, in candidate strategies (a) to (c), the payoff collected from EUs through the Internet connection fees is always increasing with respect to the sensitivity of the EUs to the quality. In candidate strategies (b) and (c), the Internet connection fees are decreasing with respect to the sensitivity of the CP to the quality. Thus, in these candidate strategies EUs receive a discount in proportion to the sensitivity of the CP to the quality. In candidate strategy (d), the Internet connection fee of ISP NoN does not depends on the qualities, but it is as low as the marginal cost. 

\textbf{Existence of NE:} An SPNE may not always exist. For example, for the parameters $\tilde{q}_f=1$, $\tilde{q}_p=1.5$, $c=1$, $\kappa_u=1$, $\kappa_{ad}=0.5$, $t_N=3$, and $t_{NoN}=2$, none of the candidate strategies listed above would be an SPNE. The reason is that there exists a profitable deviation for at least one of the ISPs for those candidate strategies that their conditions are satisfied given these parameters. Later in Section~\ref{section:simulation_findNE}, we identify the regions with no SPNE.

\subsection{Benchmark: A Neutral Scenario}\label{section:summary_benchmark}
In the benchmark case, i.e. when both ISPs are neutral, we proved that there exists a unique SPNE, and the unique equilibrium outcome of the game is\footnote{The subscript B in the results refers to ``Benchmark".}:
\begin{itemize}
	\item Stage 1 - Internet access Fees chosen by ISPs: $p^{eq}_{N,B}=c+\frac{1}{3}(2t_{NoN}+t_N)$ and $p^{eq}_{NoN,B}=c+\frac{1}{3}(2t_N+t_{NoN})$.
	\item Stage 2 -  Side Payment chosen by ISP NoN is of no importance. 
	\item Stage 3 -  Qualities chosen by the CP: $q^{eq}_{NoN,B}=\tilde{q}_f$ and ${q}^{eq}_{N,B}=\tilde{q}_f$.
	\item Stage 4 - Fractions of EUs with ISPs: $n^{eq}_{N,B}=\frac{2t_{NoN}+t_N}{3(t_{NoN}+t_N)}$ and $n^{eq}_{NoN,B}=\frac{2t_N+t_{NoN}}{3(t_{NoN}+t_N)}$ .
\end{itemize}

Note that this case is similar to candidate strategy (e), i.e. the only candidate strategy of our model by which $z^{eq}=0$. In this case, both ISPs are active and the CP offers the free quality on both ISPs. Note that in this case, the asymmetries of the model only arise from the asymmetry in  $t_N$ and $t_{NoN}$. Thus, EUs are divided between ISPs depending on $t_N$ and $t_{NoN}$, and the Internet access fees ($p_N$ and $p_{NoN}$) are a function of transport costs ($t_N$ and $t_{NoN}$). Also, similar to the candidate strategy (b) of the previous section, if $t_N$ (respectively, $t_{NoN}$) increases, $p_{NoN}$ (respectively, $p_N$) increases with a rate $\frac{2}{3}$rd the rate of the growth of this transport cost. In addition, counter-intuitively, $p_N$ (respectively, $p_{NoN}$) also increases with a rate $\frac{1}{3}$rd of the rate of growth of $t_N$ (respectively, $t_{NoN}$).

Note that in this case, Internet access fees are independent of the quality provided for EUs, i.e. $\tilde{q}_f$. Recall that in contrast, in a non-neutral regime, the Internet access fee quoted by ISP NoN is dependent on the quality she provides ($\tilde{q}_p$).  The reason for that is the product differentiation in the latter. The non-neutral ISP can charge for the quality she provides for EUs through differentiating her product from the neutral ISP. While in a neutral regime, no ISP can charge for the quality they provide because of competition. It is noteworthy that if $t_{NoN}\& t_N\rightarrow 0$,  $p^{eq}_{NoN,B}\&p^{eq}_{N,B}\rightarrow c$. In other words,  in the absence of inertias, since there is no differentiation between the quality offered by the ISPs in the neutral regime, price competition drives the access fees to the marginal cost. This implies that by removing the inertias ($t_N$ and $t_{NoN}$), the model would be similar to a Bertrand competition \cite{MWG}. Thus, considering the inertias brings a realistic dimension to the model.

The relationship between $n^{eq}_N$ and $n^{eq}_{NoN}$ and the transport costs are similar to that of candidate strategies (b) and (c) of the previous section, and is investigated with numerical examples in Section~\ref{section:simulation_findNE}.

\section{Numerical Results}\label{section:numericalresults}

First, in Section~\ref{section:simulation_findNE}, we find the NE strategies for the entire range of transport costs, for various parameters, using numerical analysis. In Section~\ref{section:numerical_more_results},  we complement the discussions in Section~\ref{section:summaryof resutls},  by providing more intuitions about $n^{eq}_{NoN}$, $\tilde{p}^{eq}$, and the payoff of ISPs, based on the numerical results.	 We assess the benefits of non-neutrality by comparing the results of the model  with the benchmark case in Section~\ref{section:compare_results}.  In Section~\ref{section:regulation}, we provide regulatory comments based on the results.

\subsection{NE Strategies}\label{section:simulation_findNE}

Recall that if SPNE exists, it would of the form of strategies (a)-(e) (Section~\ref{section:summaryof resutls}).  Now, we check whether these strategies are indeed  SPNE. In order to do so, we check for any profitable deviation  for each of the ISPs. Note that to check for unilateral deviations, we consider different regions of $\Delta p$ (regions are characterized in Theorem \ref{l-theorem:p_tilde_new}-I). In each region, we can identify the potential profitable deviations (using first order condition for some regions, and the fact that payoff of ISPs are monotonic in other regions). Thus, the search for the best deviation for each ISP is equivalent to comparing the payoff of finite number of candidate deviations with the payoff of the candidate equilibrium. We also check additional conditions listed in Theorems \ref{l-theorem:NE_stage1_new_q>}-I, \ref{theorem:NE_stage1_new_q<}, and \ref{theorem:neutralz=0_q<}.

\begin{figure}[t]
	\centering
	\includegraphics[width=0.35\textwidth]{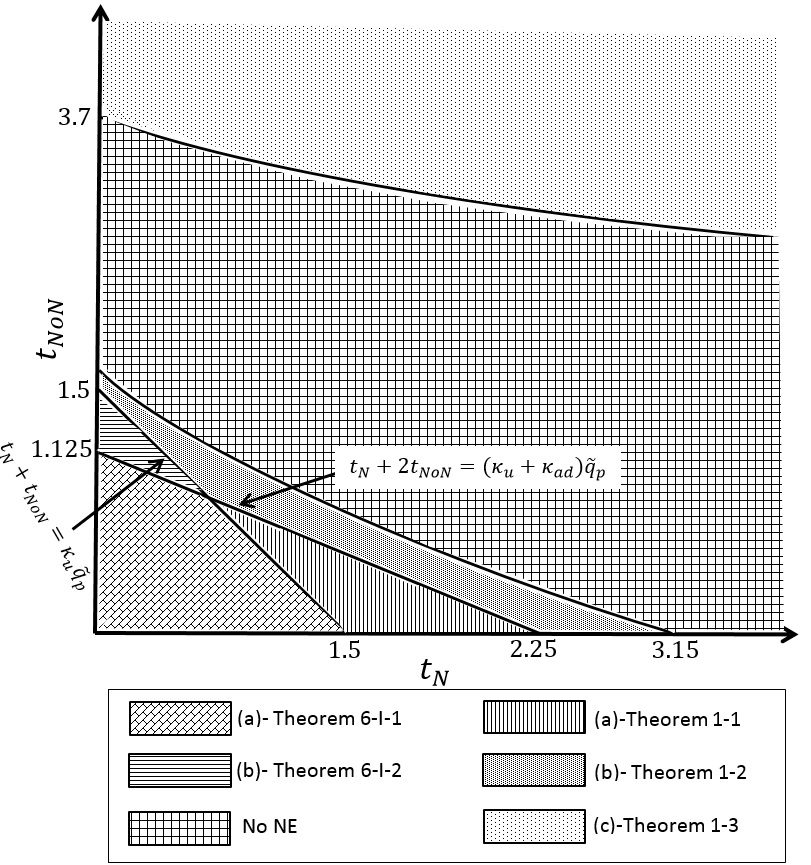}
	\caption{NE strategies of Stage 1 for various $t_N$ and $t_{NoN}$ when $\kappa_u=1$ and $\kappa_{ad}=0.5$}
	\label{figure:NE_general_ku>kad}
\end{figure}

\begin{figure}[t]
	\centering
	\includegraphics[width=0.35\textwidth]{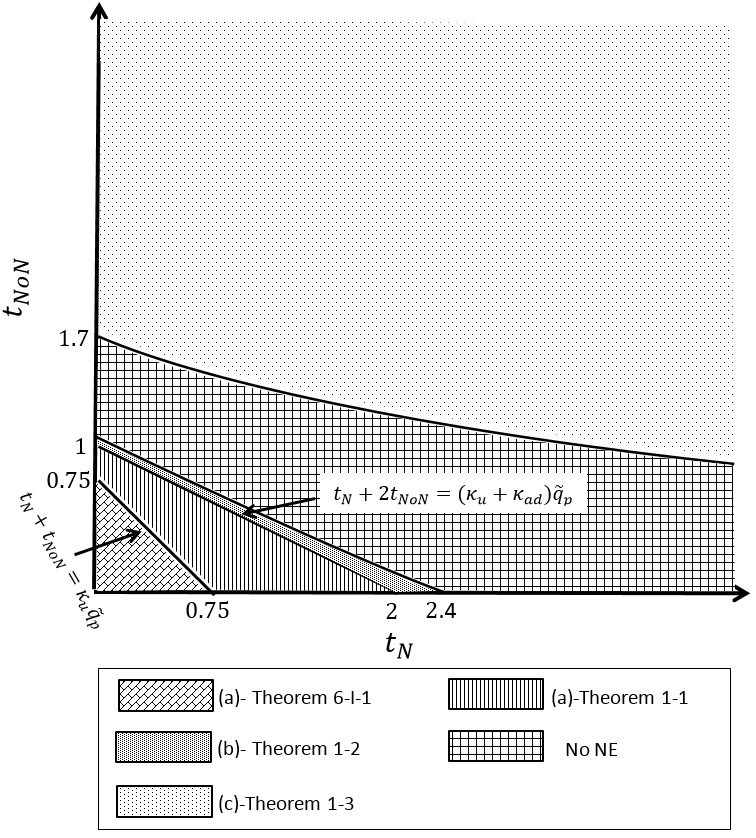}
	\caption{NE strategies of Stage 1 for various $t_N$ and $t_{NoN}$ when $\kappa_u=0.5$ and $\kappa_{ad}=1$}
	\label{figure:NE_general_ku<kad}
\end{figure}

%
%

We now present two illustrative examples. In Figure~\ref{figure:NE_general_ku>kad},  we identify the NE strategies of stage 1 for different regions of $t_N$ and $t_{NoN}$ when $\kappa_u=1$ and $\kappa_{ad}=0.5$. In Figure~\ref{figure:NE_general_ku<kad}, we identify the NE strategies when $\kappa_u=0.5$ and $\kappa_{ad}=1$. In the figures, we consider $\tilde{q}_f=1$, $\tilde{q}_p=1.5$, and $c=1$. Numerical results for a large set of parameters reveal that the pattern of NE strategies for different values of parameters is similar to one of the two pattern presented in Figures  \ref{figure:NE_general_ku>kad} and \ref{figure:NE_general_ku<kad}. Overall, the outcome in which the neutral ISP is driven out of the market occurs when  $t_N$ and $t_{NoN}$ are sufficiently small. As $t_N$ and $t_{NoN}$ increases, we expect to have equilibrium outcomes in which both ISPs are active. Next, we discuss about the trends we observe in the results.

Note that in Theorem~\ref{l-theorem:NE_stage1_new_q>}-I, we proved that,  for $\tilde{q}_{p}\geq \frac{t_N+2t_{NoN}}{\kappa_u+\kappa_{ad}}$ and $\tilde{q}_p\geq \frac{t_N+t_{NoN}}{\kappa_u} $, candidate strategy (Theorem~\ref{l-theorem:NE_stage1_new_q>}-I-1) is an NE. Numerical results for a large set of parameters also reveal that for   $\tilde{q}_{p}\geq  \frac{t_N+2t_{NoN}}{\kappa_u+\kappa_{ad}}$ and $\tilde{q}_p <  \frac{t_N+t_{NoN}}{\kappa_u} $, candidate strategy (Theorem~\ref{theorem:NE_stage1_new_q<}-1) is also an NE strategy. Note that these two strategies are the same and are listed as candidate strategy (a). Therefore, when  $  \frac{t_N+2t_{NoN}}{\kappa_u+\kappa_{ad}}\leq \tilde{q}_{p}$, (a) is an NE strategy.
Thus, the outcome in which the neutral ISP is  driven out of the market (candidate strategy (a)) occurs when $t_N$ and $t_{NoN}$ are sufficiently small. In this case, since the transport costs are low, EUs can easily switch ISPs. Thus, ISP NoN is able to attract all EUs by using some of the side payment fee she receives from the CP, and 
discounting the Internet access fee for EUs.

With increase in  $t_N$ or $t_{NoN}$, EUs have more inertia. Thus, one of the ISPs should provide a low Internet access fee for EUs to attract them all. However, in this case, ISPs prefer to maintain a high Internet access fee for EUs\footnote{As we discussed, when both  ISPs are active, the Internet connection fees are increasing with the transport costs. In other words, each ISP lock in some EUs and charge high Internet access fees to them.}, and split the EUs. Thus, as $t_N$ and $t_{NoN}$ increases, we expect to have equilibrium outcomes in which both ISPs are active. Numerical results reveal that if $\tilde{q}_{p}< \frac{t_N+2t_{NoN}}{\kappa_u+\kappa_{ad}}$ and $\tilde{q}_p\geq \frac{t_N+t_{NoN}}{\kappa_u} $, candidate strategy (Theorem~\ref{l-theorem:NE_stage1_new_q>}-I-2) is an NE. In addition, consider the lines $t_N+2t_{NoN}=\tilde{q}_p(\kappa_u+\kappa_{ad})$ and $t_N+t_{NoN}=\kappa_u \tilde{q}_p$. Results reveal that when the point $(t_N ,t_{NoN})$ is just above these lines, the candidate strategy (Theorem~\ref{theorem:NE_stage1_new_q<}-2) is an NE strategy.  When $(t_N,t_{NoN})$ is substantially above these lines, then Candidate strategy (Theorem~\ref{theorem:NE_stage1_new_q<}-3) is an NE strategy. This result have been proved in Theorem~\ref{theorem:bigt}, i.e. when either of $t_N$ and $t_{NoN}$ are high, then the unique NE outcome of the game is (Theorem~\ref{theorem:NE_stage1_new_q<}-3). In addition, when $(t_N,t_{NoN})$ is above these lines, but is in an intermediate range,  then no NE exists.
  

 Numerical results for large set of parameters also reveal that  the NE is unique, if it were to exist (in Figures there exists only one NE in each region). In addition, extensive numerical results reveal that candidate strategies (d) and (e) are never SPNE. Thus, henceforth we do not include (d) and (e) in our discussions about the results.
  


\subsection{Dependencies of $n^{eq}_{NoN}$, $\tilde{p}^{eq}$, and Payoffs of ISPs to $t_N$ and $t_{NoN}$}\label{section:numerical_more_results}

In this section, using numerical results, we discuss about the dependencies of $n^{eq}_{NoN}$, $\tilde{p}^{eq}$, and Payoffs of ISPs to $t_N$ and $t_{NoN}$. Note that in Section~\ref{section:summaryof resutls}, we explained that the relationship between $n^{eq}_{NoN}$ and the transport costs is not obvious from the expressions.  Thus, in this section, we provide intuitions for the behavior of $n^{eq}_{NoN}$, and subsequently $\tilde{p}^{eq}$ and the payoffs of ISPs with respect to $t_N$ and $t_{NoN}$.

\textbf{Numerical Results on $n^{eq}_{NoN}$:} Numerical results reveal that $n^{eq}_{NoN}$ is non-increasing  with respect to both transport costs. For instance, in Figure \ref{figure:nNoN_double}, we plot the value of $n^{eq}_{NoN}$ with respect  to $t_{NoN}$ and $t_N$, when $\tilde{q}_f=1$ and $\tilde{q}_p=1.5$.\footnote{Recall that $n^{eq}_N=1-n^{eq}_{NoN}$. Thus, we only plot $n^{eq}_{NoN}$.} 

%
%

\begin{figure}
	\begin{subfigure}{.25\textwidth}
		\centering
		\includegraphics[width=\linewidth]{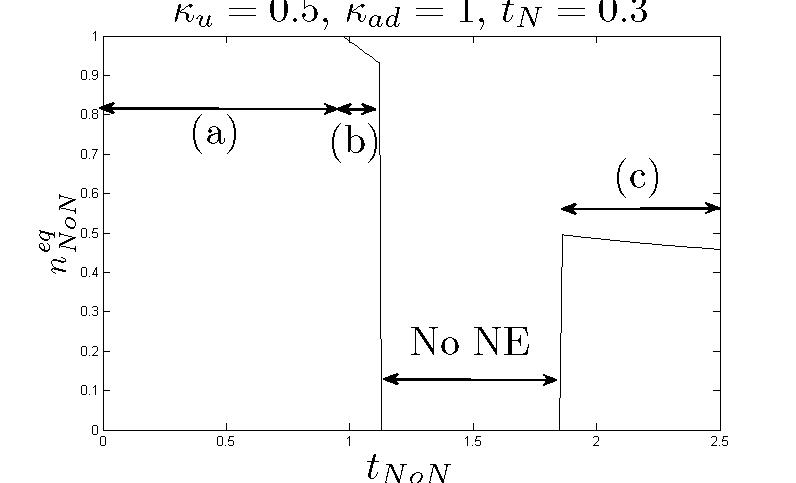}
		\label{fig:sfig1}
	\end{subfigure}%
	\begin{subfigure}{.25\textwidth}
		\centering
		\includegraphics[width=\linewidth]{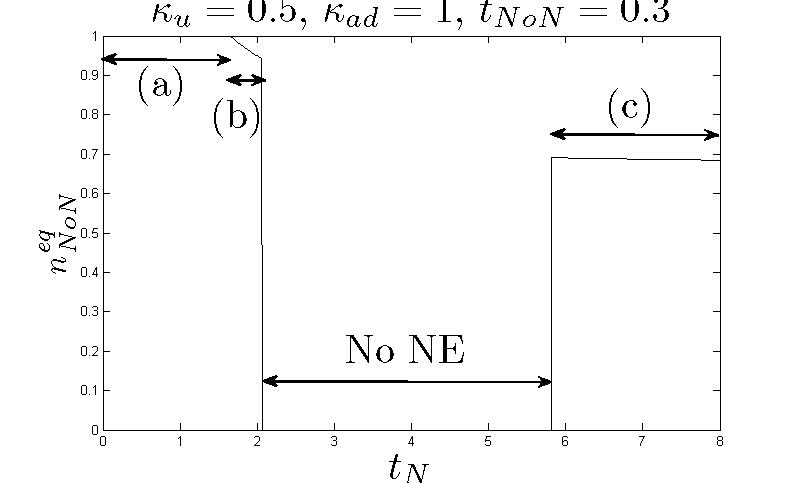}
		\label{fig:sfig2}
	\end{subfigure}
	\caption{$n^{eq}_{NoN}$ with respect to $t_N$ and $t_{NoN}$}\label{figure:nNoN_double}
\end{figure}

Note that for candidate strategy (a), as we know from the results, $n^{eq}_{NoN}=1$. To understand the results for candidate strategies (b) and (c), note that from \eqref{l-equ:EUs_linear}-I the number of EUs with each ISP has a decreasing relation with (i) the transport costs of the ISP, and (ii) the Internet access fee of the ISP which itself is increasing with both transport costs. In addition, the number of EUs with the ISP has an increasing relation with respect to the same parameters for the other ISP. Thus, different factors, some decreasing and some increasing with respect to the transport cost of an ISP, play a role in determining the number of EUs with each ISP. Overall, it turns out that the effect of increasing either of the transport costs decreases the incentive of EUs to  join ISP NoN. Thus, in candidate strategies (b) and (c), $n^{eq}_{NoN}$ is decreasing with respect to both transport costs.

 \textbf{Numerical Results on $\tilde{p}^{eq}$:} Note that the higher the number of EUs with ISP NoN, the higher would be the benefit of the CP from the premium quality. Thus, we expect the side-payment, i.e. $\tilde{p}^{eq}$ to be increasing with respect to number of EUs with ISP NoN. Results in Section \ref{section:summaryof resutls} also confirms this fact. Thus, the relationship between $\tilde{p}^{eq}$ and the transport costs is similar to the relationship between $n^{eq}_{NoN}$ and the transport costs. Therefore, in candidate strategy (b) and (c), the higher one of the transport costs, the lower would be the side payments. For instance, in Figures \ref{figure:tildep_kadlessku}, we plot the value of $\tilde{p}^{eq}$ with respect  to $t_{NoN}$ and $t_N$, respectively, when $\tilde{q}_f=1$ and $\tilde{q}_p=1.5$. 
 
 Note that as we discussed in Section~\ref{section:summaryof resutls}, in candidate strategy (b), $\tilde{p}^{eq}$ can be positive or negative. However, numerical results for  a large set of parameters reveal that $\tilde{p}^{eq}$ is positive, whenever this candidate strategy is NE.

%

\begin{figure}
	\begin{subfigure}{.25\textwidth}
		\centering
		\includegraphics[width=\linewidth]{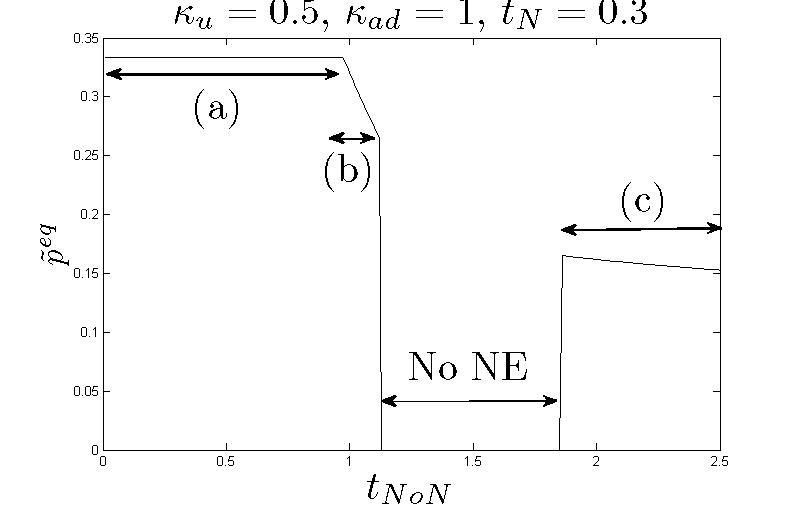}
		\label{figure:tildep_kadlessku_tNoN}
	\end{subfigure}%
	\begin{subfigure}{.25\textwidth}
		\centering
		\includegraphics[width=\linewidth]{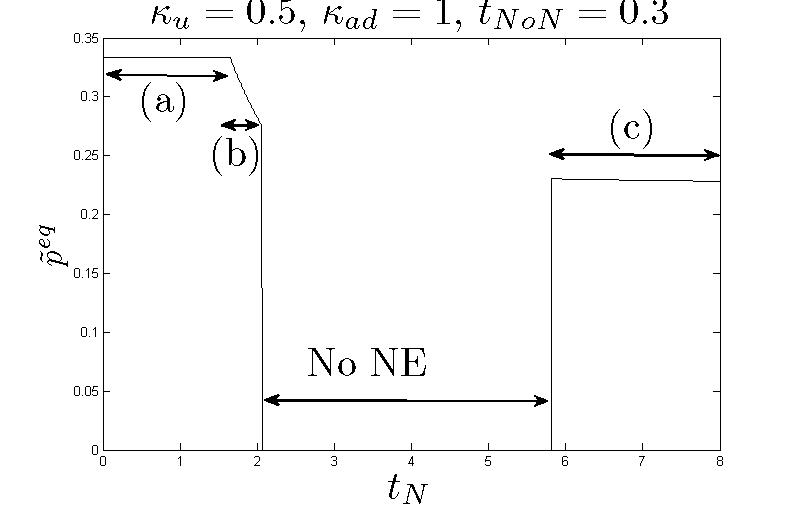}
		\label{figure:tildep_kadlessku_tN}
	\end{subfigure}
	\caption{$\tilde{p}^{eq}$ with respect to $t_N$ and $t_{NoN}$}\label{figure:tildep_kadlessku}
\end{figure}

 \textbf{Numerical Results on the Payoffs of ISPs:} Numerical results for the case  $\tilde{q}_f=1$ and $\tilde{q}_p=1.5$  are plotted in Figure \ref{figure:payoffs_double}. If  there is no NE strategy,  we plot the payoff of ISPs in the benchmark case, i.e. when both ISPs are neutral.
 
 \begin{figure}
 	\begin{subfigure}{.25\textwidth}
 		\centering
 		\includegraphics[width=\linewidth]{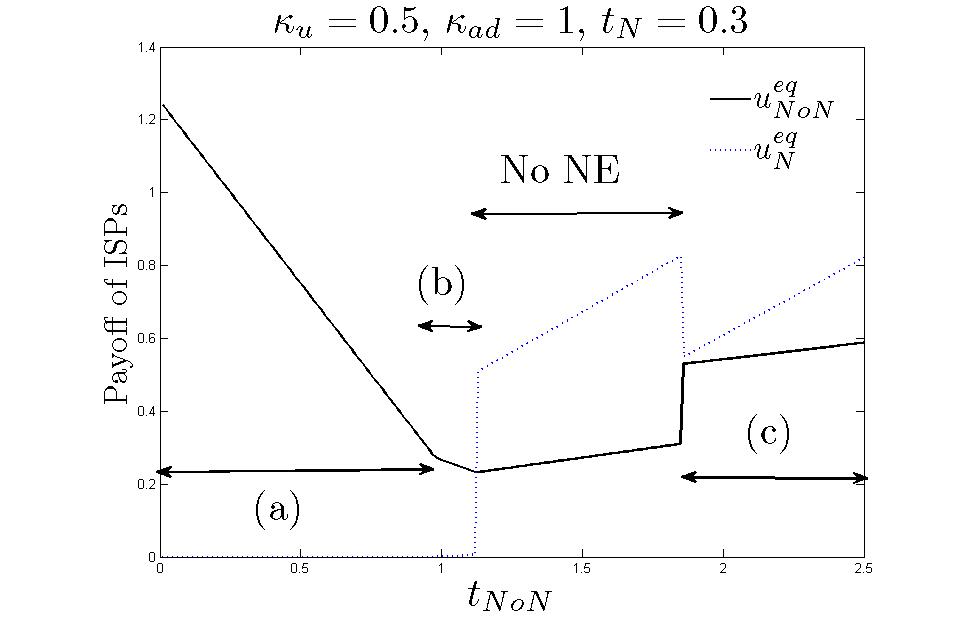}
 	\end{subfigure}%
 	\begin{subfigure}{.25\textwidth}
 		\centering
 		\includegraphics[width=\linewidth]{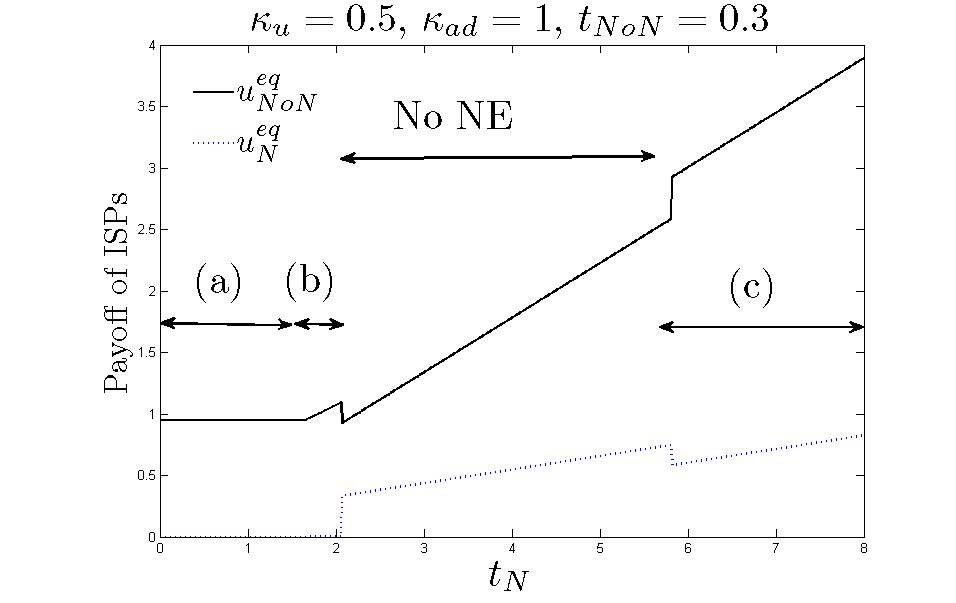}
 	\end{subfigure}
 	\caption{Payoff of ISPs with respect to $t_N$ and $t_{NoN}$}\label{figure:payoffs_double}
 \end{figure}

Note that when the market of power ISP NoN is small, i.e. the fraction $\frac{t_{N}}{t_N+t_{NoN}}$ is small, then the payoff of ISP NoN would be lower than the payoff of ISP N (Figure~\ref{figure:payoffs_double}-left). 

For candidate strategy (a), the payoff of ISP N is zero (since the number of EUs with this ISP is zero), and the payoff of ISP NoN is independent of $t_N$ (since ISP N is out of the market), but decreasing with respect to $t_{NoN}$ (since $p^{eq}_{NoN}$ is decreasing with $t_{NoN}$). 
   Intuitively, we expect  the utility of an ISP to be decreasing with respect to the transport cost of that ISP, and increasing with respect to the transport cost of the other ISP.  
    However, for some parameters and some of the candidate strategies, results reveal that the  payoff of an ISP is increasing with the transport cost of the ISP.  Next, we explain the underlying reasons for this counter-intuitive behavior.

   Note that the payoff of an ISP is increasing with (i) the number of EUs with the ISP and also (ii) the Internet access fee charged to the EUs. Recall that for the neutral ISP, in candidate strategies (b), (c), and the benchmark case, both of (i) and (ii) are increasing  with respect to  both transport costs. Thus, the payoff of ISP N is increasing with respect to both transport costs. On the other hand, for ISP NoN, the number of EUs is decreasing and the Internet access fee is increasing with the transport costs. Thus, depending on which of these factors overweights the other one, the payoff of ISP NoN can be decreasing or increasing with respect to the transport costs.

  

\subsection{Profits of Entities Due to Non-neutrality}\label{section:compare_results}

We compare  the results of the model and the benchmark case in which both ISPs are neutral. We compare   Internet access fees,  payoff of ISPs, the welfare of EUs, and  the payoff of the CP in Sections   \ref{section:comp_access}, \ref{section:comp_payoffs}, \ref{section:EUW}, and \ref{section:comp_CP}, respectively.

\subsubsection{Internet Access Fees}\label{section:comp_access} In a non-neutral case, the neutral ISP would always decrease her Internet access fee, while the Internet access fee of the non-neutral ISP could be higher or lower depending on the parameters of the market. We now provide insights on when each of these scenarios happens. 

First, note that the discount that ISP N provides for EUs in a non-neutral case, i.e. $p^{eq}_{N,B}-p^{eq}_{N}$, is always positive for candidate strategies (a), (b), and (c) (using previous results). Thus, the neutral ISP would always decrease her Internet access fee in a non-neutral scenario in order to compete with the non-neutral ISP which is now offering a better quality. 

  In a non-neutral regime, if (a) occurs, then  the discount that ISP NoN provides for EUs in a non-neutral case is $p^{eq}_{NoN,B}-p^{eq}_{NoN}=\frac{1}{3}(5t_{NoN}+t_N)-\kappa_u\tilde{q}_p$ (using the previous results). This discount can be negative or positive, and is decreasing with  $\kappa_u$ and $\tilde{q}_p$, and increasing with  $t_{NoN}$ and $t_N$. Thus, if (i) EUs are not sensitive to the quality, i.e. small $\kappa_u$, (ii) ISP NoN
does not provide a high premium quality, i.e. small $\tilde{q}_p$, (iii) end-users cannot switch between ISPs easily, i.e. $t_N$ and $t_{NoN}$  large enough, or a combination of these factors, then ISP NoN provides a cheaper Internet access fee for EUs in comparison to the neutral scenario.

For candidate strategies (b) (respectively, (c)), using the results in Sections~\ref{section:summaryof resutls} and \ref{section:summary_benchmark}, the amount of discount is $p^{eq}_{NoN,B}-p^{eq}_{NoN}=\frac{1}{3}\tilde{q}_p(2\kappa_{ad}-\kappa_u)$ (respectively, $p^{eq}_{NoN,B}-p^{eq}_{NoN}=\frac{1}{3}(\tilde{q}_p-\tilde{q}_f)(2\kappa_{ad}-\kappa_u)$). Thus, if $2\kappa_{ad}>\kappa_u$, i.e. the sensitivity of the CP is high enough, then the discount is positive and is increasing with the premium quality (respectively, the difference  between the premium and free quality). On the other hand, if the sensitivity of the CP is low, then the discount is negative, i.e. ISP NoN charges higher access fees to the EUs. The reason is that  if the CP is sensitive to the quality, ISP NoN can charge higher side-payments to the CP. Thus, she can provide some of these new revenue to EUs as a discount even though they receive a premium quality. This is not possible when the CP is not sensitive to the quality of her content. In this case, ISP NoN charges  the premium quality to the EUs directly, i.e. higher Internet access fees for EUs.

\subsubsection{Payoff of ISPs}\label{section:comp_payoffs} 
Consider the payoffs of the neutral and non-neutral ISPs under both neutral and non-neutral scenarios.  The difference in the payoffs for the case  $\kappa_u=0.5$, $\kappa_{ad}=1$, $\tilde{q}_f=1$, $\tilde{q}_p=1.5$, and $t_N=0.3$ are plotted in Figure \ref{figure:diff_payoffs_NoN_double}.\footnote{Using different parameters values yields the same intuitions.}

\emph{Results reveal that the neutral ISP will lose payoff in all of the non-neutral NE strategies}, i.e. those that yield $z^{eq}=1$ (Figure \ref{figure:diff_payoffs_NoN_double}-right). Note that in case (a), ISP N would be driven out of the market. Thus, $\pi^{eq}_{N}=0$, while $\pi^{eq}_{N,B}>0$. In cases (b) and (c), although ISP N is active, she has to subsidize the Internet connection fee for EUs to be able to compete with ISP NoN, while possibly can attract lower number of EUs. This yields a loss in the payoff under a non-neutral scenario. 

\emph{Results also reveal that for a wide range of parameters, ISP NoN receives a better payoff under a non-neutral scenario} (Figure \ref{figure:diff_payoffs_NoN_double}-left). We discussed that ISP NoN extracts the additional profit of the CP (from the premium quality her EUs receive) in a non-neutral scenario. In addition, we also explained that for some parameters ($\kappa_u>2\kappa_{ad}$), ISP NoN  charges higher prices to EUs. Even when ISP NoN subsidizes the Internet access fee for EUs ($2\kappa_{ad}>\kappa_{ad}$), she would compensate through the side payment charged to the CP (high $\kappa_{ad}$ yields a high side payment). Moreover, ISP NoN can potentially attract more EUs by providing a cheaper fee or a premium quality (or both). Thus, overall we expect the non-neutral ISP to receives a better payoff under a non-neutral regime. 

However, we can find scenarios in which the non-neutral ISP loses payoff by switching to non-neutrality. For example, with $\kappa_u=\kappa_{ad}=0.85$, $\tilde{q}_f=1$, $\tilde{q}_p=1.03$, $t_N=0.05$, and $t_{NoN}=0.8$, then $\pi^{eq}_{NoN}<\pi^{eq}_{NoN,B}$. In particular, the payoff of ISP NoN decreases in a non-neutral regime if the outcome of the market is (a), and  $\kappa_u$, $\kappa_{ad}$, $\tilde{q}_p-\tilde{q}_f$, and $\frac{t_{N}}{t_N+t_{NoN}}$ (the market power of ISP NoN) are small.

We now explain the underlying reason for this counter-intuitive result. Note that knowing that the other ISP has switched to non-neutrality, the neutral ISP would decrease her Internet access fee for EUs to compensate for the superior quality that her competitor offers. On the other hand, the non-neutral ISP also has to significantly decrease her Internet access fee for EUs (because of her low market power, competition, and low sensitivity of EUs to the quality), while not generating enough revenue from the side-payments received from the CP (because of low sensitivity of the CP to quality or a premium quality that is not significantly better than a free quality)\footnote{Note that the non-neutral ISP still extracts the additional profit she creates for EUs.}. This makes both ISPs, lose revenue in a non-neutral setting under the specified conditions.

\begin{figure}
 	\begin{subfigure}{.25\textwidth}
 		\centering
 		\includegraphics[width=\linewidth]{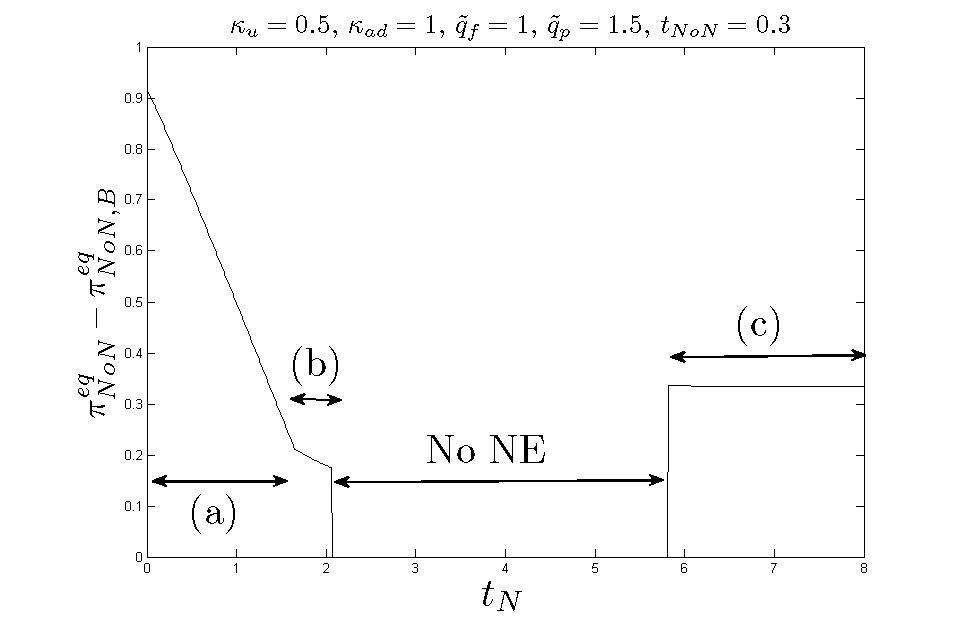}
 	\end{subfigure}%
 	\begin{subfigure}{.25\textwidth}
 		\centering
 		\includegraphics[width=\linewidth]{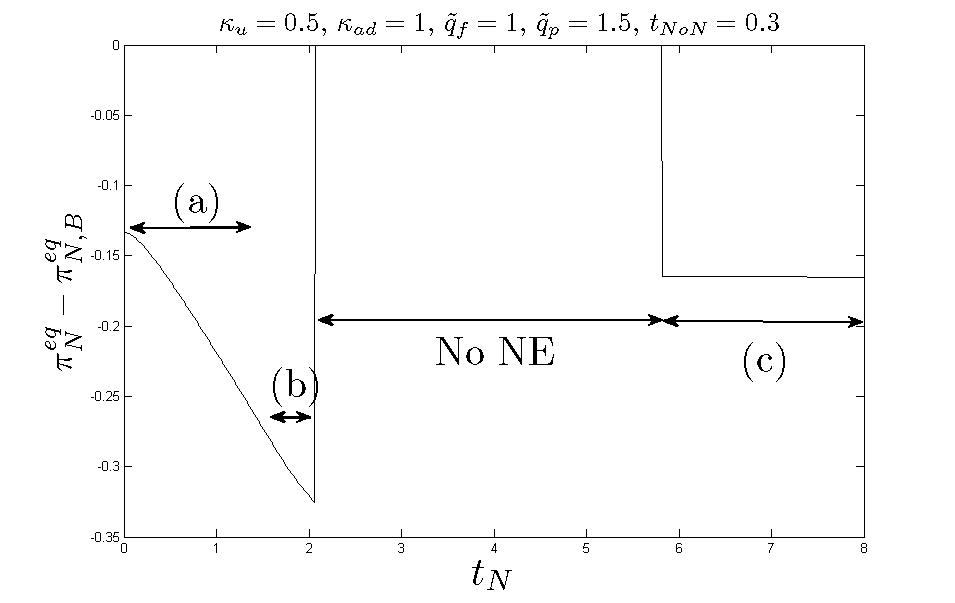}
 	\end{subfigure}
 	\caption{The difference between the payoff of ISPs for two scenarios with respect to $t_N$ and $t_{NoN}$}\label{figure:diff_payoffs_NoN_double}
 \end{figure}

\subsubsection{EU's Welfare}\label{section:EUW}
 \begin{figure}
 	\begin{subfigure}{.25\textwidth}
 		\centering
 		\includegraphics[width=\linewidth]{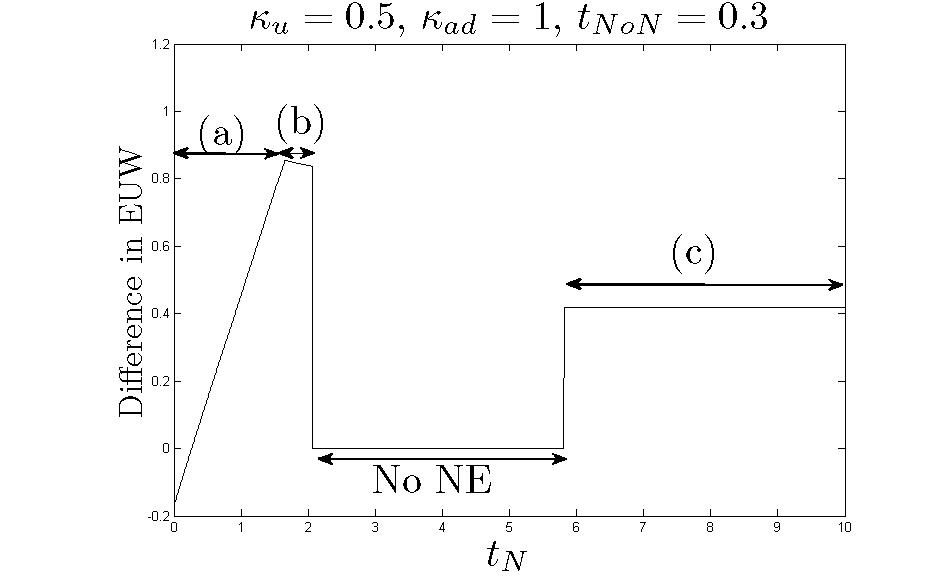}
 	\end{subfigure}%
 	\begin{subfigure}{.25\textwidth}
 		\centering
 		\includegraphics[width=\linewidth]{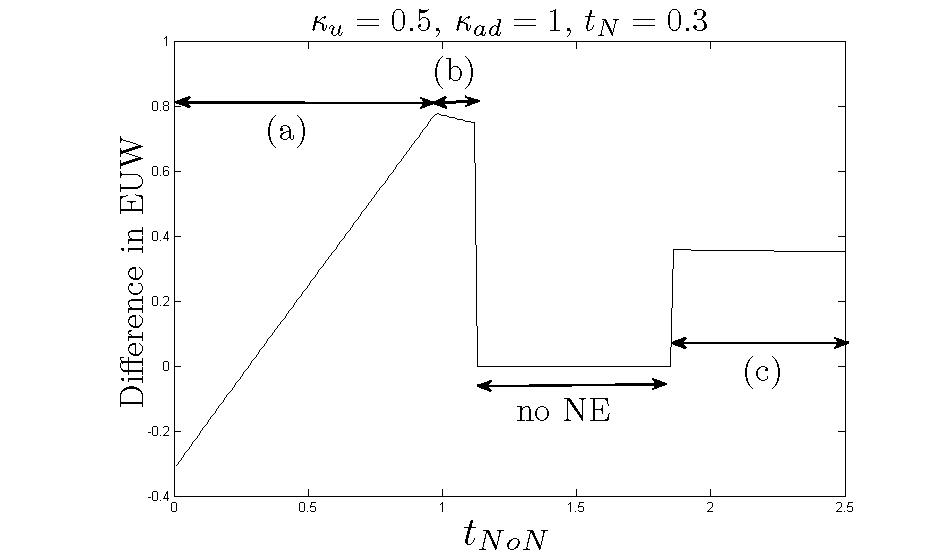}
 	\end{subfigure}
 	\caption{EUW with respect to $t_N$ and $t_{NoN}$}\label{figure:EUW_double}
 \end{figure}

 \begin{figure}
 	\begin{subfigure}{.25\textwidth}
 		\centering
 		\includegraphics[width=\linewidth]{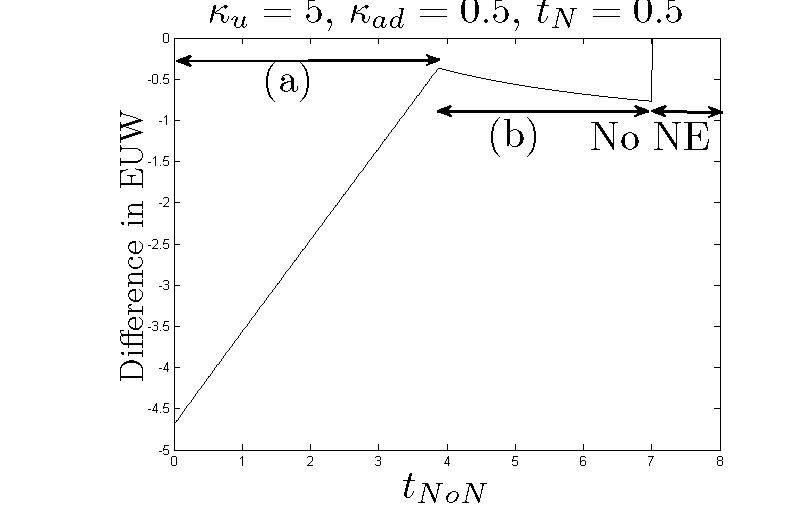}
 	\end{subfigure}%
 	\begin{subfigure}{.25\textwidth}
 		\centering
 		\includegraphics[width=\linewidth]{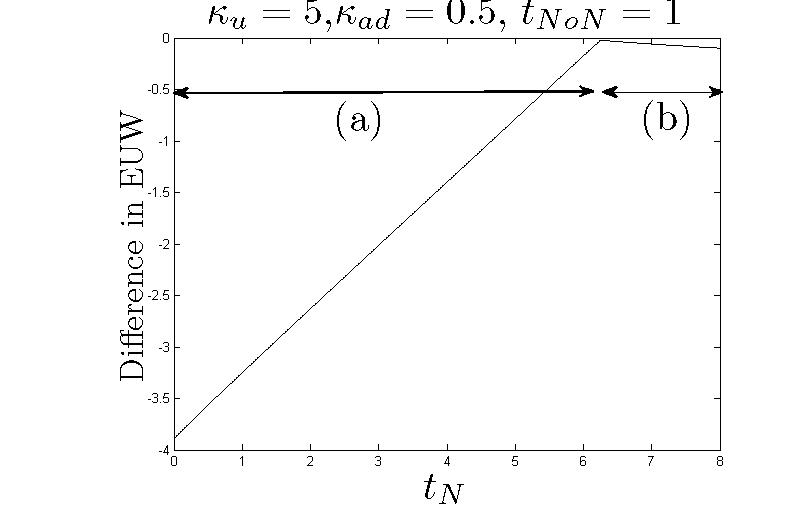}
 	\end{subfigure}
 	\caption{EUW with respect to $t_N$ and $t_{NoN}$}\label{figure:EUW_double_kularge}
 \end{figure}

Recall that from \eqref{l-equation:CP_2}-I, the utility of an EU who connects to the ISP $j\in \{\text{N},\text{NoN}\}$ located at distance $x_j$ of the ISP, and is receiving the content with quality $q_j$, is $u_{EU,j}=v^*+\kappa_u q_j-t_j x_j -p_j$. Now, let us define  the  Welfare of EUs (EUW) for an EU connected to ISP $j$ located at distance $x_j$ from this ISP to be $u_{EU,j}(x_j)-v^*=\kappa_u q_j-p_j-t_j x_j$. Note that we dropped the common valuation $v^*$ since it is equal for all EUs in all scenarios, and is only used to guarantee the full coverage of the market, i.e. to prevent negative utility for EUs. Thus, the total welfare of EUs is:

{\small
	\be\label{equ:SW}
	\ba 
	EUW&=\int_0^{n_N} \big{(}\kappa_u q_N-p_N-t_N x\big{)}dx \\
	&\qquad \qquad +\int_{n_N}^1\big{(}\kappa_u q_{NoN}-p_{NoN}-t_{NoN}(1-x)\big{)}dx\\
	&= (\kappa_u q_N-p_N)n_N-\frac{t_N}{2}n^2_N+(\kappa_u q_{NoN}-p_{NoN})n_{NoN}\\
	&\qquad \qquad -\frac{t_{NoN}}{2}n^2_{NoN}
	\ea 
	\ee}
\normalsize 
Note that since we dropped $v^*$, EUW could be negative. In Figures \ref{figure:EUW_double} and \ref{figure:EUW_double_kularge}, we plot the difference in the EUW of the non-neutral case with the benchmark case for various parameters of the market, when $\tilde{q}_f=1$ and $\tilde{q}_p=1.5$.

Results reveal that in general, EUW would be higher in a non-neutral setting if (i) the market power of ISP NoN is low, (ii) the sensitivity of the CP to the quality is high, or (iii) EUs are not very sensitive to the quality, or a combination of these conditions. However, when both transport costs are sufficiently small, or the sensitivity of EUs (respectively, the CP) to the quality is high (respectively, low), then  the benchmark case yields a better EUW in comparison to the non-neutral case. We next explain the reasons behind these results.

Consider the benchmark case. In this case, the welfare of EUs is dependent on the transport costs and the Internet access fees determined by ISPs N and NoN. Recall that both access fees are increasing with $t_N$ and $t_{NoN}$. Thus, intuitively, EUW of the  benchmark case is decreasing with $t_N$ and $t_{NoN}$.\footnote{Note that $n_N$ and $n_{NoN}$ are sum up to one. Thus, the effect of access fees on EUW is more than the effect of number of EUs with each ISP.} 

In case (a), in which only the non-neutral ISP is active, EUW is dependent on the Internet access fee of ISP NoN, i.e. $p^{eq}_{NoN}=c+\kappa_u \tilde{q}_p-t_{NoN}$.  Thus, EUW of the non-neutral scenario with outcome (a) is increasing with respect to $t_{NoN}$.  In other words, if $t_{NoN}$ is large, ISP NoN should provide a cheaper Internet access fee (subsidizing the access fee), to attract EUs and keep the neutral ISP out of the market. Thus, EUW would be high. In addition, the EUW is independent of $t_N$. Thus, as Figures~\ref{figure:EUW_double} and \ref{figure:EUW_double_kularge} confirm , the difference between the EUW of the non-neutral scenario in case (a) and the EUW of the benchmark case is increasing with respect to $t_N$ and $t_{NoN}$. 

We observe that when both transport costs are sufficiently small, the benchmark case yields a higher payoff than the non-neutral scenario. Note that if $t_{NoN}$ is small, i.e. EUs can join (switch to) ISP NoN without incurring high transport costs, ISP NoN attracts all EUs even when quoting a high Internet access fee for EUs (since it offers a premium quality). Thus, ISP NoN charges a high Internet access fee, and the EUW would be small. On the other hand, if $t_N$ is also small, the EUW of the benchmark case would be high (as discussed previously). Thus, when both transport costs are sufficiently small, we expect the benchmark case to yield a better EUW in comparison to the non-neutral case. Negative differences in Figures \ref{figure:EUW_double} and \ref{figure:EUW_double_kularge} confirm this intuition. Note that in Figure~\ref{figure:EUW_double_kularge}, because of high sensitivity of EUs to the quality, EUW of the neutral scenario is higher than the non-neutral scenario even when $t_N$ or $t_{NoN}$ are not small. Finally, observe that the maximum difference in the EUWs is achieved for the highest $t_N$ and $t_{NoN}$ by which the outcome of the game is (a), i.e. when only the non-neutral is active.

For candidate strategies (b) and (c), similar to the benchmark case, the Internet access fees are increasing with respect to $t_N$ and $t_{NoN}$. Thus, EUW is expected to be decreasing with respect to these transport costs. Results in the figures reveal that  the difference in EUWs is decreasing with respect to  $t_{NoN}$ and $t_N$.\footnote{From figures, it seems that the difference in (c) is constant. However, a closer look on the results reveals a slow decreasing behaviour.} This means that EUW of the non-neutral case decreases more than EUW of the benchmark case. This difference is positive when the sensitivity of the EUs to the quality is low, i.e. small $\kappa_u$ (Figure \ref{figure:EUW_double}), and negative when $\kappa_u$ is large (Figure \ref{figure:EUW_double_kularge}). Recall that the non-neutral ISP provides discount to EUs when the sensitivity of the CP to the quality is high enough. If not, ISP NoN charges  higher prices to EUs in comparison to the benchmark case.  This is the reason that EUW of the non-neutral case is lower than the benchmark case when EUs are highly sensitive to the quality they receive.

Thus,  the transport costs and the sensitivity of EUs and the CP to the quality  are the important factors  in comparing the EUW of the neutral and non-neutral scenario. Note that (as explained) the higher the sensitivity of the CP (respectively, EUs) to the quality, the higher (respectively, lower) would be EUW in the non-neutral case.

\subsubsection{Payoff of the CP}\label{section:comp_CP}  Using \eqref{l-equ:payoffCP_new}-I, the candidate strategies and their outcomes listed in Section~\ref{section:summaryof resutls}, and the outcome of the benchmark case in Section~\ref{section:summary_benchmark}, we can calculate the payoff of the CP in different outcomes. Results yield that  the equilibrium payoff of the CP in  all of the possible outcomes of the non-neutral scenario  and also in the benchmark scenario are equal and are $\pi^{eq}_{CP}=\pi^{eq}_{CP,B}=\kappa_{ad}\tilde{q}_f$. The reason is that the non-neutral ISP  is the leader in the this leader-follower game. Thus, knowing the parameters of the game and the tie-breaking assumption 2 of the CP, it can extract all the profits of the CP and make it indifferent between taking the non-neutral option and not taking it. 

\subsection{Does the Market Need to be Regulated?}\label{section:regulation}

We showed that in the presence  of a ``big" monopolistic CP and when EUs can switch between ISPs, if a non-neutral regime emerges, then neutral ISPs are likely to lose their market share, and are expected to be forced out of the market. In addition, in any NE outcome, the neutral ISP would lose payoff. Thus, if, the regulator is interested in keeping some of the neutral ISPs in the market\footnote{For example, the reason could be to prevent non-neutral ISPs from becoming monopoly or it could be the social pressure to preserve some neutrality in the market.}, she should provide incentives for them. These incentives could be in the form of monetary subsidies or tax deductions.

Although for many parameters, the payoff of the non-neutral ISP would be higher by adopting a non-neutral regime, as explained before, with certain conditions on the parameters, an ISP is likely to receive a lower payoff by switching to non-neutral regime. 
These conditions are when (i) EUs are not sensitive to the quality, i.e. small $\kappa_u$, (ii)  the CP is not sensitive to the quality her EUs receive, i.e. small $\kappa_{ad}$,  (iii) ISP NoN does not offer enough differentiation in the quality, i.e. small $\tilde{q}_p-\tilde{q}_f$,  (iv) the market power of the non-neutral ISP is low, or a combination of these factors. Thus, with these conditions a non-neutral regime is unlikely to emerge, and there is no need for a government intervention.	

\section{Discussions on Generalization of the Model}\label{section:implicationAssum}

Note that we assumed $q_N\in \{0,\tilde{q}_f\}$ and $q_{NoN}\in \{0,\tilde{q}_f,\tilde{q}_p\}$. This assumption can be generalized to selecting quality strategies from continuous sets, i.e. $q_N\in [0,\tilde{q}_f]$ and $q_{NoN}\in[0,\tilde{q}_p]$. In this case, the CP pays a side payment of $\tilde{p}q_{NoN}$ if she chooses $q_{NoN}\in (\tilde{q}_f,\tilde{q}_p]$. In Appendix~\ref{section:general}, we prove that our results herein would continue to hold under this generalization.

The result that over some parameters, an ISP can lose payoff by switching to a non-neutral regime is dependent on the assumption that the neutral and non-neutral ISPs first decide on the Internet access fees, and then the non-neutral ISP decides on the side-payment in the second stage. If we swap the order of these two stages, then the non-neutral ISP would not lose payoff by switching to non-neutrality since in this case, she is the leader of the game. Thus, ISP NoN, in the worst case, obtains the payoff of the neutral scenario. Recall that the reason for our choice  of the orders of the stages of the game is that Internet access fees are expected to be kept constant for a longer time horizons in comparison to  the side-payment.

Recall that in the hotelling model, we considered EUs to be distributed uniformly between zero and one. We now provide insights on a generalization of the uniform distribution to a non-uniform one. In that case, depending on the skewness of the probability measure, results would be similar to small $t_N$ or  $t_{NoN}$. For example, if the probability measure is skewed toward zero, i.e. EUs are distributed close to the neutral ISP, results would be similar to uniform distribution and $t_N$ small.
\bibliographystyle{IEEEtran}
\bibliography{bmc_article}

\appendices

\section{Proof of Theorem~\ref{theorem:NE_stage1_new_q<} and Corollary~\ref{corollary:outcome_q<}}\label{appendix:theorem:NE_stage1_new_q<}

First, we prove Theorem~\ref{theorem:NE_stage1_new_q<}. Then, using the results of this theorem, we prove Corollary~\ref{corollary:outcome_q<}.

\begin{proof}
In this case, note that $\tilde{q}_f<\tilde{q}_{p}<\frac{t_N+t_{NoN}}{\kappa_u}$. Thus, we characterize the optimum strategies for the CP using items 1, 2, and 4 of Theorem~\ref{l-theorem:p_tilde_new}-I. 

First, note that by Lemma~\ref{l-lemma:deltap_t}-I, since $\tilde{q}_{p}< \frac{t_N+t_{NoN}}{\kappa_u}$,  $\kappa_u \tilde{q}_{p}-t_{NoN} <  \Delta p_t < t_N+\kappa_u(\tilde{q}_{p}-\tilde{q}_f)$, where $\Delta p_{t}=\kappa_{u}(2\tilde{q}_{p}-\tilde{q}_f)-t_{NoN}$ characterized in Lemma~\ref{l-lemma:thresh_on_delta_p_z1_new}-I. Thus, using this result, we characterize  the regions characterized in items 1, 2, and 4 of Theorem~\ref{l-theorem:p_tilde_new}-I. We denote 
 $\Delta p\leq \kappa_u \tilde{q}_{p}-t_{NoN}$ by region A, $\kappa_u \tilde{q}_{p}-t_{NoN}<\Delta p<\kappa_u (2\tilde{q}_{p}-\tilde{q}_f)-t_{NoN}$ by region $B_1$, $\kappa_u (2\tilde{q}_{p}-\tilde{q}_f)-t_{NoN}\leq \Delta p<t_N+\kappa_u(\tilde{q}_{p}-\tilde{q}_f)$ by region C, $t_N+\kappa_u(\tilde{q}_{p}-\tilde{q}_f)\leq \Delta p<t_N+\kappa_u \tilde{q}_{p}$ by set $B_2$, and $\Delta p\geq t_N+\kappa_u \tilde{q}_{p}$ by D. Using Theorem~\ref{l-theorem:p_tilde_new}-I, if $z^{eq}=1$, then $\Delta p<t_N+\kappa_u \tilde{q}_{p}$. Thus, we characterize any possible NE strategies by which $z^{eq}=1$, in regions A and $B_1$, $C$, and $B_2$:

\textbf{Case A:} First, we consider $\Delta p\leq \kappa_u \tilde{q}_{p}-t_{NoN}$. In this case, we show that the payoff of ISP NoN is an increasing function of $\Delta p$. Then, we characterize the NE as  $p^{eq}_{NoN}=c+\kappa_u \tilde{q}_{p}-t_{NoN}$ and $p^{eq}_N=c$, using the fact that when choosing an NE, no player can increase her payoff by unilaterally changing her strategy. 

Note that by Theorem~\ref{l-theorem:p_tilde_new}-I, for region A, $(q^{eq}_N,q^{eq}_{NoN})=(0,\tilde{q}_{p})\in F^L_1$ if and only if $\tilde{p}\leq \tilde{p}_{t,1}=\kappa_{ad}(1-\frac{\tilde{q}_f}{\tilde{q}_{p}})$. In addition, by Theorem~\ref{l-theorem:NE_stage2_new}-I, if $z^{eq}=1$ then $\tilde{p}^{eq}=\tilde{p}_{t,1}=\kappa_{ad}(1-\frac{\tilde{q}_f}{\tilde{q}_{p}})$ (Definition~\ref{l-def:pt1,pt2}-I).  Thus, in  this region, if $z^{eq}=1$, the payoff of ISP NoN is equal to $p_{NoN}-c+\tilde{q}_{p}\tilde{p}_{t,1}$ (by \eqref{l-equ:payoffISPsGeneral_new}-I) since $n_{NoN}=1$. Therefore, the payoff is an increasing function of $p_{NoN}$. In addition, note that in region A, $n_N=0$ and regardless of $p_N$, the neutral ISP receives a payoff of zero (by \eqref{l-equ:payoffISPsGeneral_new}-I). Thus, $p^{eq}_{NoN}$, i.e. the equilibrium Internet access fee, should be such that the neutral ISP cannot get a positive payoff  by increasing or decreasing $p_N$, and changing the region of $\Delta p$ to $B_1$, $B_2$, or $C$. Using this condition, we find the equilibrium strategy.

First consider a unilateral deviation by ISP N. Note that increasing $p_N$ decreases $\Delta p$, and cannot change the region of $\Delta p$.\footnote{Recall that in this region, $n_N=0$, and ISP N fetch a payoff of zero.} Thus, a deviation of this kind would not be profitable. We claim that by decreasing $p_N$ to $p'_N$ such that $p_{NoN}-p'_N>\kappa_u \tilde{q}_{p}-t_{NoN}$, the ISP N can fetch a positive payoff as long as $p'_N>c$ (the claim is proved in the next paragraph). Therefore, in the equilibrium, $p^{eq}_{NoN}$ is such that even with $p'_N=c$ (the minimum plausible price), $\Delta p \leq \kappa_u \tilde{q}_{p}-t_{NoN}$. Thus, $p^{eq}_{NoN}\leq c+ \kappa_u \tilde{q}_{p}-t_{NoN}$.\footnote{Otherwise, there exists a $p'_N>c$ by which  $\Delta p > \kappa_u \tilde{q}_{p}-t_{NoN}$.} Given that the payoff of ISP NoN is an increasing function of $p_{NoN}$, we get  $p^{eq}_{NoN}=c+\kappa_u \tilde{q}_{p}-t_{NoN}$. In addition, we claim that $p^{eq}_N=c$. If not, then $p^{eq}_N>c$. In this case, $\Delta p=p^{eq}_N-p^{eq}_{NoN}<\kappa_u \tilde{q}_{p}-t_{NoN}$.  We argued that the payoff of ISP NoN is an increasing function of $p_{NoN}$. Thus, by increasing $p_{NoN}$ such that $\Delta p=\kappa_u \tilde{q}_{p}-t_{NoN}$, ISP NoN can increase her payoff, which is a contradiction with $p^{eq}_N$ and $p^{eq}_{NoN}$ being NE strategies.


To prove the claim, note that if  $p_{NoN}-p'_N>\kappa_u \tilde{q}_{p}-t_{NoN}$, then either (i) $z^{eq}=0$ or (ii) $z^{eq}=1$. For case (i), since  $\kappa_u \tilde{q}_{p}-t_{NoN}>-t_{NoN}$, when $\Delta p>\kappa_u \tilde{q}_{p}-t_{NoN}$, then $(q^{eq}_N,q^{eq}_{NoN})$ is of the form of items 1 or  2 of Theorem~\ref{l-lemma:CP_z=0_new}-I.  Thus, $n_N>0$. Therefore ISP N can fetch a positive payoff as long as $p_N>c$ (by \eqref{l-equ:payoffISPsGeneral_new}-I). Now consider case (ii), i.e. $z^{eq}=1$. In this case, if $z^{eq}=1$, then by using item 2 of Thoerem~\ref{l-theorem:p_tilde_new}-I, $n_N>0$ (since solutions that yield $z^{eq}=1$ are in $F^I$.). Thus, ISP N can fetch a positive payoff as long as $p_N>c$ (by \eqref{l-equ:payoffISPsGeneral_new}-I). This completes the proof of the claim that by decreasing $p_N$ to $p'_N$ such that $p_{NoN}-p'_N>\kappa_u \tilde{q}_{p}-t_{NoN}$, the ISP N can fetch a positive payoff as long as $p'_N>c$.

Therefore, the NE strategies are $p^{eq}_{NoN}=c+\kappa_u \tilde{q}_{p}-t_{NoN}$ and $p^{eq}_N=c$, and the payoff of the ISP NoN at this price by \eqref{l-equ:payoffISPsGeneral_new}-I and $\tilde{p}_{t,1}=\kappa_{ad}(1-\frac{\tilde{q}_f}{\tilde{q}_{p}})$ is equal to  (note that $n_{NoN}=1$), and
\be \label{equ:payoff_NoN_eq_1_new}
\pi^{eq}_{NoN}=\kappa_u \tilde{q}_{p}-t_{NoN}+\tilde{q}_{p} \tilde{p}_{t,1}=\kappa_u \tilde{q}_p-t_{NoN}+\kappa_{ad}(\tilde{q}_p-\tilde{q}_f)
\ee
which is strictly positive since $\tilde{q}_{p}>\frac{t_N+t_{NoN}}{\kappa_u}$ and $\tilde{q}_p>\tilde{q}_f$. Note that Lemma~\ref{l-lemma:appendix_caseA}-I yields that with $p^{eq}_N$ and $p^{eq}_{NoN}$, $z^{eq}=1$. The first item of the theorem follows. 


\textbf{Case $B_1$ and $B_2$:} Now, consider regions $B_1$ and $B_2$, i.e. $\kappa_u \tilde{q}_{p}-t_{NoN}<\Delta p<\kappa_u (2\tilde{q}_{p}-\tilde{q}_f)-t_{NoN}$ and $t_N+\kappa_u(\tilde{q}_{p}-\tilde{q}_f)\leq \Delta p<t_N+\kappa_u \tilde{q}_{p}$, respectively.

Note that in these regions,  by items 2-a-ii and 2-b of Theorem~\ref{l-theorem:p_tilde_new}-I, if $z^{eq}=1$, then $(q^{eq}_N,q^{eq}_{NoN})=(0,\tilde{q}_{p})\in F^I_1$. In addition, by Theorem~\ref{l-theorem:NE_stage2_new}-I, $\tilde{p}^{eq}=\tilde{p}_{t,2}=\kappa_{ad} (n_{NoN}-\frac{\tilde{q}_f}{\tilde{q}_{p}})$ and $n_{NoN}=\frac{t_N+\kappa_u \tilde{q}_{p}-\Delta p}{t_N+t_{NoN}}$ (Definition~\ref{l-def:pt1,pt2}-I). Thus, by \eqref{l-equ:payoffISPsGeneral_new}-I, the payoff of ISP NoN in this region is $\pi_{NoN,B}(p_{NoN},\tilde{p}_{t,2})=(p_{NoN}-c)n_{NoN}+\tilde{p}_{t,2} \tilde{q}_{p}$, and the payoff of ISP N is  $\pi_{N,B}=(p_{N}-c)(1-n_{NoN})$. Note that $\tilde{p}_{t,2} \tilde{q}_{p}=\kappa_{ad}(\tilde{q}_{p} n_{NoN}-\tilde{q}_f)$. Thus, using the expression of $n_{NoN}$, the payoffs are:

\footnotesize
\be\label{equ:equ:Theorem7_help_2}
\ba 
\pi_{NoN,B}(p_{NoN},\tilde{p}_{t,2})&=(p_{NoN}-c+\kappa_{ad}\tilde{q}_{p})(\frac{t_N+\kappa_u \tilde{q}_{p}+p_N-p_{NoN}}{t_N+t_{NoN}})\\
&\qquad \qquad \qquad -\kappa_{ad}\tilde{q}_f
\ea
\ee 
\be
\pi_{N,B}(p_N)=(p_N-c)(\frac{t_{NoN}-\kappa_u \tilde{q}_{p}+p_{NoN}-p_N}{t_N+t_{NoN}}) 
\ee
\normalsize

First, we rule out any NE such that $\Delta p^{eq}=t_N+\kappa_u(\tilde{q}_{p}-\tilde{q}_f)$. Suppose that $\Delta p^{eq}=p^{eq}_{NoN}-p^{eq}_{N}=t_N+\kappa_u(\tilde{q}_{p}-\tilde{q}_f)$. Consider a deviation by ISP N such that $p'_N=p^{eq}_{N}+\epsilon>c$ for $\epsilon>0$ such that $\Delta p'=p^{eq}_{NoN}-p'_N$ to be in region C. Note that by item 2-a-i of Theorem~\ref{l-theorem:p_tilde_new}-I, in region C, $(q^{eq}_N,q^{eq}_{NoN})=(\tilde{q}_f,\tilde{q}_p)\in F_1^I$. Thus, the payoff of this ISP with this deviation is (by \eqref{equ:UN_new}):
\be \nonumber
\pi_{N}(p'_N)=(p^{eq}_N+\epsilon-c)(\frac{t_{NoN}-\kappa_u (\tilde{q}_{p}-\tilde{q}_f)+p^{eq}_{NoN}-p^{eq}_N-\epsilon}{t_N+t_{NoN}}) 
\ee 

Note that $\lim_{\epsilon\downarrow 0} \pi_N(p'_N)>\pi_{N,B}(p^{eq}_N)$. Thus, for $\epsilon>0$ small enough, this deviation is profitable. Thus, the strategies by which $\Delta p^{eq}=t_N+\kappa_u(\tilde{q}_{p}-\tilde{q}_f)$ cannot be NE. 

Now, we characterize any NE in    $\kappa_u \tilde{q}_{p}-t_{NoN}<\Delta p<\kappa_u (2\tilde{q}_{p}-\tilde{q}_f)-t_{NoN}$ and $t_N+\kappa_u(\tilde{q}_{p}-\tilde{q}_f)<\Delta p<t_N+\kappa_u \tilde{q}_{p}$. Note that any NE inside this region should satisfy the first order optimality condition (note that the payoffs are concave). Thus,
\be
\ba
 \frac{d \pi_N}{d p_{N}}=0& \Rightarrow t_{NoN}-\kappa_u \tilde{q}_{p} + p_{NoN}-2 p_N+c=0\\
\frac{d \pi_{NoN,B}}{d p_{NoN}}=0& \Rightarrow t_N+\tilde{q}_{p}(\kappa_u-\kappa_{ad})+p_N-2 p_{NoN}+c=0
\ea 
\ee

Solving the equation yields:
\be \label{equ:equ:B>_eq_pNoN_new}
p^{eq}_{NoN}=c+\frac{t_{NoN}+2t_N+  \tilde{q}_{p}(\kappa_u -2\kappa_{ad})}{3}
\ee 
\be \label{equ:equ:B>_eq_pN_new}
p^{eq}_{N}=c+\frac{2t_{NoN}+t_N-\tilde{q}_{p}(\kappa_u +\kappa_{ad})}{3}
\ee


First, note that if $\tilde{q}_{p}>\frac{2t_{NoN}+t_N}{\kappa_u+\kappa_{ad}}$, then $p^{eq}_N<c$, and $p^{eq}_N$ cannot be an NE. Thus, the first necessary condition for these strategies to be NE is  $\tilde{q}_{p}\leq \frac{2t_{NoN}+t_N}{\kappa_u+\kappa_{ad}}$.
 In addition, by Theorem~\ref{l-theorem:NE_stage2_new_suff}-I,\footnote{Note that in Regions $B_1$ and $B_2$, $\Delta p<t_N+\kappa_u \tilde{q}_p$.} $\pi^{eq}_{NoN}>\pi_{NoN,z=0}(\tilde{p}^{eq}_{NoN},\tilde{p})$ (for these strategies to yield $z^{eq}=1$).  The second item of the theorem follows.

\textbf{Case C: } Now, consider region C, i.e. $\Delta p_t=\kappa_u (2\tilde{q}_{p}-\tilde{q}_f)-t_{NoN}\leq \Delta p<t_N+\kappa_u (\tilde{q}_{p}-\tilde{q}_f)$. Note that in this regions,  by items 2-a-i of Theorem~\ref{l-theorem:p_tilde_new}-I, if $z^{eq}=1$, then $(q^{eq}_N,q^{eq}_{NoN})=(\tilde{q}_f,\tilde{q}_{p})\in F^I_1$. In addition, by Theorem~\ref{l-theorem:NE_stage2_new}-I and Definition \ref{l-def:pt}-I, $\tilde{p}^{eq}=\tilde{p}_{t,3}=\kappa_{ad}n_{NoN} (1-\frac{\tilde{q}_f}{\tilde{q}_{p}})$ and $n_{NoN}=\frac{t_N+\kappa_u (\tilde{q}_{p}-\tilde{q}_f)-\Delta p}{t_N+t_{NoN}}$ (Definition~\ref{l-def:pt1,pt2}-I). Thus, by \eqref{l-equ:payoffISPsGeneral_new}-I, the payoff of ISP NoN in this region is $\pi_{NoN,C}(p_{NoN},\tilde{p}_{t,3})=(p_{NoN}-c)n_{NoN}+\tilde{p}_{t,3} \tilde{q}_{p}$, and the payoff of ISP N is  $\pi_{N,B}=(p_{N}-c)(1-n_{NoN})$. Note that $\tilde{p}_{t,3} \tilde{q}_{p}=\kappa_{ad}n_{NoN}(\tilde{q}_{p} -\tilde{q}_f)$. Thus, using the expression of $n_{NoN}$, the payoffs are:

\footnotesize
\be\label{equ:equ:Theorem7_help}
\ba 
\pi_{NoN,C}&(p_{NoN},\tilde{p}_{t,3})=\\
&(p_{NoN}-c+\kappa_{ad}(\tilde{q}_{p}-\tilde{q}_f))(\frac{t_N+\kappa_u (\tilde{q}_{p}-\tilde{q}_f)+p_N-p_{NoN}}{t_N+t_{NoN}})
\ea 
\ee 
\be\label{equ:equ:Theorem7_help_5}
\pi_{N,C}(p_N)=(p_N-c)(\frac{t_{NoN}-\kappa_u (\tilde{q}_{p}-\tilde{q}_f)+p_{NoN}-p_N}{t_N+t_{NoN}}) 
\ee
\normalsize

First, in Part C-1, we characterize any NE in region  $\kappa_u (2\tilde{q}_{p}-\tilde{q}_f)-t_{NoN}< \Delta p<t_N+\kappa_u (\tilde{q}_{p}-\tilde{q}_f)$. Later, in Part C-2, we consider the case that $\Delta p^{eq}=\kappa_u(2\tilde{q}_{p}-\tilde{q}_f)-t_{NoN}$. 

\textbf{Part C-1:}
 Note that any NE in region  $\kappa_u (2\tilde{q}_{p}-\tilde{q}_f)-t_{NoN}< \Delta p<t_N+\kappa_u (\tilde{q}_{p}-\tilde{q}_f)$ should satisfy the first order optimality condition (note that the payoffs are concave). Thus,
 
 \small
\be
\ba \label{equ:equ:Theorem7_help_4}
 \frac{d \pi_{N,C}}{d p_{N}}=0& \Rightarrow t_{NoN}-\kappa_u (\tilde{q}_{p}-\tilde{q}_f) + p_{NoN}-2 p_N+c=0\\
\frac{d \pi_{NoN,C}}{d p_{NoN}}=0& \Rightarrow t_N+(\tilde{q}_{p}-\tilde{q}_f)(\kappa_u-\kappa_{ad})+p_N-2 p_{NoN}+c=0
\ea 
\ee  
\normalsize

Solving the equation yields:

\small
\be \label{equ:equ:B>_eq_pNoN_new}
p^{eq}_{NoN}=c+\frac{t_{NoN}+2t_N+  (\tilde{q}_{p}-\tilde{q}_f)(\kappa_u -2\kappa_{ad})}{3}
\ee 
\be \label{equ:equ:B>_eq_pN_new}
p^{eq}_{N}=c+\frac{2t_{NoN}+t_N-(\tilde{q}_{p}-\tilde{q}_f)(\kappa_u +\kappa_{ad})}{3}
\ee 
\normalsize

First, note that if $\tilde{q}_{p}-\tilde{q}_f>\frac{2t_{NoN}+t_N}{\kappa_u+\kappa_{ad}}$, then $p^{eq}_N<c$, and $p^{eq}_N$ cannot be an NE. Thus, the first necessary condition for these strategies to be NE is $\tilde{q}_{p}-\tilde{q}_f\leq \frac{2t_{NoN}+t_N}{\kappa_u+\kappa_{ad}}$. 
In addition, by Theorem~\ref{l-theorem:NE_stage2_new_suff}-I,   $\pi^{eq}_{NoN}>\pi_{NoN,z=0}(\tilde{p}^{eq}_{NoN},\tilde{p})$ (in order for these strategies to yield $z^{eq}=1$).  The third item of the theorem follows.


\textbf{Part C-2:} Now, consider $p^{eq}_N$ and $p^{eq}_{NoN}$ such that $\Delta p^{eq}=p^{eq}_{NoN}-p^{eq}_{N}=\kappa_u(2\tilde{q}_{p}-\tilde{q}_f)-t_{NoN}$. These strategies are not NE if ISP NoN can strictly increase her payoff by decreasing her price such that $\Delta p$ in region $B_1$. Note that using \eqref{equ:equ:Theorem7_help} and the expression for $
\Delta p^{eq}$, the payoff of ISP NoN in this case is:

\small
\be\label{equ:equ:Theorem7_help_3}
\pi_{NoN}(p^{eq}_{NoN},\tilde{p}_{t,3})=(p_{NoN}-c+\kappa_{ad}(\tilde{q}_{p}-\tilde{q}_f))(\frac{t_N-\kappa_u \tilde{q}_{p}+t_{NoN}}{t_N+t_{NoN}})
\ee
\normalsize

By choosing $p'_{NoN}=p^{eq}_{NoN}-\epsilon$ such that $\epsilon\downarrow 0$, ISP NoN can get a limit payoff of (since $\Delta p=\Delta p^{eq}$ when $\epsilon\rightarrow 0$, and it is in  region $B_1$, and using \eqref{equ:equ:Theorem7_help_2}):

\footnotesize
 $$
 \ba 
\pi'_{NoN}&=\lim_{\epsilon\downarrow 0} \pi^{eq}_{NoN}(p_{NoN}-\epsilon,\tilde{p}_{t,3})\\
&=(p_{NoN}^{eq}-c+\kappa_{ad}\tilde{q}_{p})(\frac{t_N-\kappa_u (\tilde{q}_{p}-\tilde{q}_f)+t_{NoN}}{t_N+t_{NoN}})-\kappa_{ad}\tilde{q}_f
\ea
$$
\normalsize

Thus, $p^{eq}_N$ and $p^{eq}_{NoN}$ such that $\Delta p^{eq}=p^{eq}_{NoN}-p^{eq}_{N}=\kappa_u(2\tilde{q}_{p}-\tilde{q}_f)-t_{NoN}$ are not NE if:
\footnotesize
$$
\ba
\pi'_{NoN}&>\pi_{NoN}(p^{eq}_{NoN},\tilde{p}_{t,3})\\
&\iff (p^{eq}_{NoN}-c+\kappa_{ad}\tilde{q}_{p})\frac{\kappa_u \tilde{q}_f}{t_N+t_{NoN}}-\frac{\kappa_{ad}\kappa_u\tilde{q}_f\tilde{q}_{p}}{t_N+t_{NoN}}>0\\
&\iff p^{eq}_{NoN}>c
\ea
$$
\normalsize
Thus, the necessary condition for these strategy to be NE is $p^{eq}_{NoN}\leq c$. 
 Note that from \eqref{equ:equ:Theorem7_help} and  \eqref{equ:equ:Theorem7_help_5}, since $\Delta p$ is fixed, the payoffs of ISP NoN and N are an increasing function of $p_{NoN}$ and $p_{N}$, respectively. Thus, $p^{eq}_{NoN}=c$, and $p^{eq}_N=c-\kappa_u(2\tilde{q}_{p}-\tilde{q}_f)+t_{NoN}$. Note that a necessary condition for $p^{eq}_N$ to be an NE is that $p^{eq}_N\geq c$. Thus, one necessary condition is that  $2\tilde{q}_{p}-\tilde{q}_f\leq \frac{t_{NoN}}{\kappa_{u}}$. In addition,  $\pi_{NoN}(\tilde{p}^{eq}_{NoN},\tilde{p}_{t,3})>\pi_{NoN,z=0}(\tilde{p}^{eq}_{NoN},\tilde{p})$ (using Theorem~\ref{l-theorem:NE_stage2_new_suff}-I, in order for these strategies to yield $z^{eq}=1$). The forth item of the theorem follows. 
\end{proof}

We now prove Corollary~\ref{corollary:outcome_q<}:

\begin{proof}
	First, consider Strategy 1 of Theorem~\ref{theorem:NE_stage1_new_q<}. Item 1 of Theorem~\ref{l-theorem:p_tilde_new}-I yields that  $(q^{eq}_N,q^{eq}_{NoN})=(0,\tilde{q}_p)\in F^L_1$. Thus,  $n^{eq}_N=0$, and $n^{eq}_{NoN}=1$. In addition, by Theorem~\ref{l-theorem:NE_stage2_new}-I,   $\tilde{p}^{eq}=\tilde{p}_{t,1}=\kappa_{ad}(1-\frac{\tilde{q}_f}{\tilde{q}_p})$. 
	
	Now, consider Strategy 2 of Theorem~\ref{theorem:NE_stage1_new_q<}. Note that we constructed this strategy such that $\Delta p$ satisfies items 2-a-ii or 2-b  of Theorem~\ref{l-theorem:p_tilde_new}-I. Thus, $(q^{eq}_N,q^{eq}_{NoN})=(0,\tilde{q}_p)\in F^I_1$. In addition, by Theorem~\ref{l-theorem:NE_stage2_new}-I, $\tilde{p}^{eq}=\tilde{p}_{t,2}=\kappa_{ad}(n^{eq}_{NoN}-\frac{\tilde{q}_f}{\tilde{q}_p})$. Using the expression for $\Delta p=p^{eq}_{NoN}-p^{eq}_N$, and \eqref{l-equ:EUs_linear}-I, the expressions for $n^{eq}_N$ and $n^{eq}_{NoN}$ follow.
	
	Consider Strategies 3 and 4 of Theorem \ref{theorem:NE_stage1_new_q<}. In this case, $\Delta p$ satisfies item 2-a-i of Theorem~\ref{l-theorem:p_tilde_new}-I (by construction of these strategies). Thus,   $(q^{eq}_N,q^{eq}_{NoN})=(\tilde{q}_f,\tilde{q}_p)\in F^I_1$. In addition, by Theorem~\ref{l-theorem:NE_stage2_new}-I, $\tilde{p}^{eq}=\tilde{p}_{t,3}=\kappa_{ad}n^{eq}_{NoN}(1-\frac{\tilde{q}_f}{\tilde{q}_p})$. Using the expression of $\Delta p^{eq}$ for each of the strategies, $n^{eq}_N$ and $n^{eq}_{NoN}$ follow.
\end{proof}

\section{Proof of Theorem~\ref{theorem:bigt}}\label{appendix:theorem:bigt}
\begin{proof}
	We use Theorem~\ref{theorem:NE_stage1_new_q<} to prove the result. First, in Part 1, we prove that when one of $t_N$ or $t_{NoN}$ is large, then  strategies 1), 2), and 4) listed in Theorem~\ref{theorem:NE_stage1_new_q<} are not NE. In Part 2, we prove that when one of $t_N$ or $t_{NoN}$ is high, then strategy 3) of Theorem~\ref{theorem:NE_stage1_new_q<} is an NE. This completes the proof of the theorem.
	
	\textbf{Part 1:} We prove that strategies 1), 2), and 4) listed in Theorem~\ref{theorem:NE_stage1_new_q<} are not NE in Parts 1-i, 1-ii, and 1-iii, respectively.
	
\textbf{Part 1-i: } In this part, we prove that, item 1 of Theorem~\ref{theorem:NE_stage1_new_q<}, i.e. $p^{eq}_{NoN}=c+\kappa_u \tilde{q}_p-t_{NoN}$ and $p^{eq}_N=c$ is not an NE. We do so in Parts 1-i-a and 1-i-b, by introducing a unilateral profitable deviation for ISP NoN for the cases that $t_{NoN}$ is large and  $t_{N}$ is large, respectively. Note that in this case, by item 1 of Theorem~\ref{l-theorem:p_tilde_new}-I, $(q^{eq}_N,q^{eq}_{NoN})\in (0,\tilde{q}_f)\in F^L_1$. Thus, $n_{NoN}=1$, and the payoff of ISP NoN  is (by \eqref{l-equ:payoffISPsGeneral_new}-I, Theorem~\ref{l-theorem:NE_stage2_new}-I, and Definition \ref{l-def:pt1,pt2}-I):
\be \label{equ:theorem:larget1}
\pi_{NoN}=\kappa_u \tilde{q}_p-t_{NoN}+\kappa_{ad}(\tilde{q}_p-\tilde{q}_f)
\ee
\textbf{Part 1-i-a:} If $t_{NoN}$ is large, then \eqref{equ:theorem:larget1} would be less than zero. A deviation to price $p'_{NoN}=c$ yields a payoff of at least zero for the ISP NoN (by \eqref{l-equ:payoffISPsGeneral_new}-I). Thus, this is a profitable deviation. \\
\textbf{Part 1-i-b: } Now, consider $t_N$ to be large, and a deviation by ISP NoN such that $p'_{NoN}=\frac{1}{2}t_N$ (Thus, $\Delta p=p'_{NoN}-p^{eq}_N=\frac{1}{2}t_N-c$). Note that in this case, $\Delta p_t=\kappa_u(2\tilde{q}_p-\tilde{q}_f)-t_{NoN}<\Delta p<t_N+\kappa_u(\tilde{q}_p-\tilde{q}_f)$. Thus, by item 2-a-i of
Theorem~\ref{l-theorem:p_tilde_new}-I, $(q^{eq}_n,q^{eq}_{NoN})=(\tilde{q}_f,\tilde{q}_p)\in F^I_1$. 
Thus, by \eqref{l-equ:payoffISPsGeneral_new}-I, the payoff of ISP NoN after deviation is  at least\footnote{Note that the payoff of NoN is equal to the maximum of the payoff when $\tilde{p}^{eq}=\tilde{p}_t$ and when $\tilde{p}^{eq}>\tilde{p}_t$, i.e. when $z^{eq}=0$.} (by the definition of $\tilde{p}_{t,3}$ in Definition~\ref{l-def:pt1,pt2}-I and Theorem~\ref{l-theorem:NE_stage2_new}-I):
\be \label{equ:thoerem:larget2}
\pi'_{NoN}=\frac{1}{2}t_Nn_{NoN}+\kappa_{ad}n_{NoN}(\tilde{q}_p-\tilde{q}_f) 
\ee  
, where $n_{NoN}=\frac{\frac{1}{2}t_N+\kappa_u(\tilde{q}_p-\tilde{q}_f)+c}{t_N+t_{NoN}}$. Thus, for $t_N$ large, $n_{NoN}\rightarrow \frac{1}{2}$. Thus, comparing \eqref{equ:thoerem:larget2} with \eqref{equ:theorem:larget1} yields:
$$
\pi'_{NoN}=\frac{1}{4}t_N+\frac{1}{2}\kappa_{ad}(\tilde{q}_p-\tilde{q}_f)>\pi_{NoN} \qquad \text{since $t_N$ is large}
$$
Thus, this deviation is  profitable . 

\textbf{Part 1-ii:} In this part, we prove that item 2 of Theorem~\ref{theorem:NE_stage1_new_q<}, i.e. $p^{eq}_{NoN}=c+\frac{t_{NoN}+2t_N+\tilde{q}_{p}(\kappa_u -2\kappa_{ad})}{3}$ and $p^{eq}_{N}=c+\frac{2t_{NoN}+t_N-\tilde{q}_{p}(\kappa_u +\kappa_{ad})}{3}$ is not an NE. We do so by proving that $\Delta p^{eq}$ does not satisfy   $\kappa_u \tilde{q}_{p}-t_{NoN}<\Delta p^{eq}<\kappa_u (2\tilde{q}_{p}-\tilde{q}_f)-t_{NoN}$ and $t_N+\kappa_u(\tilde{q}_{p}-\tilde{q}_f)<\Delta p^{eq}<t_N+\kappa_u \tilde{q}_{p}$, 
in the cases that $t_{NoN}$ or $t_{N}$ is large.

First, note that:
\begin{equation}
\Delta p^{eq}=p^{eq}_{NoN}-p^{eq}_N=\frac{1}{3}(t_N-t_{NoN}+\tilde{q}_p(2\kappa_u -\kappa_{ad}))
\end{equation}
If $\Delta p^{eq}<\kappa_u(2\tilde{q}_p-\tilde{q}_f)-t_{NoN}$, then 
$t_N+2t_{NoN}<3\kappa_u (2\tilde{q}_p-\tilde{q}_f)-\tilde{q}_p(2\kappa_u -\kappa_{ad})$, which is not correct when $t_{NoN}$  or $t_N$ is large. Thus, (a) $\Delta p^{eq}\geq \kappa_u(2\tilde{q}_p-\tilde{q}_f)-t_{NoN}$. In addition, if $t_N+\kappa_u(\tilde{q}_p-\tilde{q}_f)<\Delta p^{eq}$, then $2t_N+t_{NoN}<\tilde{q}_p(2\kappa_u -\kappa_{ad})-3\kappa_u(\tilde{q}_p-\tilde{q}_f)$,  which is not correct when $t_{NoN}$  or $t_N$ is large. Thus,   (b) $\Delta p^{eq}\leq t_N+\kappa_u(\tilde{q}_p-\tilde{q}_f)$. Therefore, (a) and (b) yields that $\Delta p^{eq}$ is not in the regions specified.  Thus, item 2 cannot be an NE.

\textbf{Part 1-iii:} In this part, we prove that item 4 of Theorem~\ref{theorem:NE_stage1_new_q<}, i.e. $p^{eq}_{NoN}=c$ and $p^{eq}_N=c-\kappa_u(2\tilde{q}_{p}-\tilde{q}_f)+t_{NoN}$ is not an NE. To do so, we prove that there exists a profitable unilateral deviation for ISP NoN. Note that, in this  case, $\Delta p^{eq}=\Delta p_t$. By item 2-a-i of Theorem~\ref{l-theorem:p_tilde_new}-I, when $\Delta p_t\leq \Delta p <t_N+\kappa_u(\tilde{q}_p-\tilde{q}_f)$, then   $(q^{eq}_N,q^{eq}_{NoN})=(\tilde{q}_f,\tilde{q}_p)\in F^I_1$. Thus, the expression of the payoff of ISP NoN is (by  $\tilde{p}_t=\tilde{p}_{t,3}$, Definition \ref{l-def:pt1,pt2}-I, Theorem~\ref{l-theorem:NE_stage2_new}-I, and \eqref{l-equ:UNoN_new}-I):

\footnotesize
$$
\ba 
\pi_{NoN,C}&(p_{NoN},\tilde{p}_{t,3})\\
&=(p_{NoN}-c+\kappa_{ad}(\tilde{q}_{p}-\tilde{q}_f))(\frac{t_N+\kappa_u (\tilde{q}_{p}-\tilde{q}_f)+p_N-p_{NoN}}{t_N+t_{NoN}})
\ea 
$$  
\normalsize

Note that:

\footnotesize
$$
\frac{d \pi_{NoN,C}}{d p_{NoN}}= \frac{t_N+(\tilde{q}_{p}-\tilde{q}_f)(\kappa_u-\kappa_{ad})+p_N-2 p_{NoN}+c}{t_N+t_{NoN}}
$$
\normalsize
Thus,
\footnotesize
$$
\ba 
\frac{d \pi_{NoN,C}}{d p_{NoN}}&|_{p^{eq}_N,p^{eq}_{NoN}}\\
&=\frac{t_N+t_{NoN}+(\tilde{q}_{p}-\tilde{q}_f)(\kappa_u-\kappa_{ad})-\kappa_u(2\tilde{q}_p-\tilde{q}_f)}{t_N+t_{NoN}}
\ea
$$
\normalsize
Note that $\frac{d \pi_{NoN,C}}{d p_{NoN}}|_{p^{eq}_N,p^{eq}_{NoN}}>0$, when either $t_N$ or $t_{NoN}$ are large enough. Thus, in this case, the payoff is increasing with respect to $p_{NoN}$\footnote{Note that after this deviation, $\Delta p$ remains in the same region.}. Thus, $p'_{NoN}=p^{eq}_{NoN}+\epsilon$ for $\epsilon>0$ small, is a unilateral profitable deviation.  

\textbf{Part 2:} We now prove that when one of $t_N$ or $t_{NoN}$ is large, then strategy 3) of Theorem~\ref{theorem:NE_stage1_new_q<} is an NE. To do so, we check conditions (i), (ii), and (iii) of strategy 3) of Theorem~\ref{theorem:NE_stage1_new_q<}, in Parts 2-i, 2-ii, and 2-iii, respectively. Later, in Part 2-iv, we prove that there is no unilateral profitable deviation for ISPs.  This completes the proof. \\
\textbf{Part 2-i:} In this part, we check the condition, i.e. $\kappa_u (2\tilde{q}_{p}-\tilde{q}_f)-t_{NoN}< \Delta p^{eq}<t_N+\kappa_u (\tilde{q}_{p}-\tilde{q}_f)$. Note that in this case:
\be 
\Delta p^{eq}=\frac{1}{3}(t_N-t_{NoN}+(\tilde{q}_p-\tilde{q}_f)(2\kappa_u-\kappa_{ad}))
\ee 
Comparing the lower boundary yields that:
\footnotesize
$$
\ba 
\kappa_u &(2\tilde{q}_{p}-\tilde{q}_f)-t_{NoN}< \Delta p^{eq}\\
&\Rightarrow 2t_{NoN}+t_N+(\tilde{q}_p-\tilde{q}_f)(2\kappa_u-\kappa_{ad})-3\kappa_u (2\tilde{q}_{p}-\tilde{q}_f)>0
\ea 
$$
\normalsize
which is true when one of $t_N$ or $t_{NoN}$ is large. Now, consider the upper boundary:
\footnotesize
$$
\ba 
\Delta p^{eq}&<t_N+\kappa_u (\tilde{q}_{p}-\tilde{q}_f)\\
&\Rightarrow 2t_N+t_{NoN}+\kappa_u (\tilde{q}_{p}-\tilde{q}_f) -(\tilde{q}_p-\tilde{q}_f)(2\kappa_u-\kappa_{ad})>0
\ea 
$$
\normalsize
which is true when one of $t_N$ or $t_{NoN}$ is large. Thus, condition (i)  of strategy 3) of Theorem~\ref{theorem:NE_stage1_new_q<} is true.\\
\textbf{Part 2-ii:} Condition (ii) of this strategy is $\tilde{q}_p-\tilde{q}_f\leq \frac{2t_{NoN}+t_N}{\kappa_u+\kappa_{ad}}$. This condition holds when one of $t_N$ or $t_{NoN}$ is large. \\
\textbf{Part 2-iii:} Now, we check the third condition, i.e. $\pi^{eq}_{NoN}=\pi_{NoN}(\tilde{p}^{eq}_{NoN},\tilde{p}_{t,3})>\pi_{NoN,z=0}(\tilde{p}^{eq}_{NoN},\tilde{p})$. We use \eqref{l-equ:payoffISPsGeneral_new}-I to find $\pi^{eq}_{NoN}=\pi_{NoN}(\tilde{p}^{eq}_{NoN},\tilde{p}_{t,3})$. Note that by using item 2-a-i of Theorem~\ref{l-theorem:p_tilde_new}-I (since $z^{eq}=1$), $(q^{eq}_N,q^{eq}_{NoN})=(\tilde{q}_f,\tilde{q}_p)$. Thus, by the definition of 
$p^{eq}_{NoN}$, $\Delta p^{eq}$, $\tilde{p}_{t,3}$, and using Definition \ref{l-def:pt1,pt2}-I, Theorem~\ref{l-theorem:NE_stage2_new}-I:

\footnotesize
\be \label{equ:equ:B>_eq_piNoN_new}
\pi^{eq}_{NoN}=\frac{\big{(}t_{NoN}+2t_N+  (\tilde{q}_{p}-\tilde{q}_f)(\kappa_u +\kappa_{ad})\big{)}^2}{9(t_N+t_{NoN})}
\ee 
\normalsize

	Now, we obtain $\pi_{NoN,z=0}(\tilde{p}^{eq}_{NoN},\tilde{p})$. Consider the case that $\tilde{p}$ is such that $z^{eq}=0$. Note that since  $\kappa_u (2\tilde{q}_{p}-\tilde{q}_f)-t_{NoN}< \Delta p^{eq}<t_N+\kappa_u (\tilde{q}_{p}-\tilde{q}_f)$, then  $-t_{NoN}<\Delta p^{eq}<t_N$ or $\Delta p^{eq}\geq t_N$. Using item 2 of Theorem~\ref{l-lemma:CP_z=0_new}-I, if $\Delta p^{eq}\geq t_N$, then $n_{NoN}=0$, and by \eqref{l-equ:payoffISPsGeneral_new}-I, $\pi_{NoN,z=0}(\tilde{p}^{eq}_{NoN},\tilde{p})=0$. Thus, $\pi^{eq}_{NoN}>\pi_{NoN,z=0}(\tilde{p}^{eq}_{NoN},\tilde{p})$, and this part follows. Now, consider the case that $-t_{NoN}<\Delta p^{eq}<t_N$. Using item 1 of Theorem~\ref{l-lemma:CP_z=0_new}-I, if $-t_{NoN}<\Delta p^{eq}<t_N$, then $(q^{eq}_N,q^{eq}_{NoN})=(\tilde{q}_f,\tilde{q}_f)\in F^I_0$. Since  $(q^{eq}_N,q^{eq}_{NoN})\in F^I_0$, we can use \eqref{equ:UNoN_new}. Thus,   by using  $p^{eq}_{NoN}$, $\Delta p^{eq}$,  and , $\pi_{NoN,z=0}(\tilde{p}^{eq}_{NoN},\tilde{p})$ is:
	
	\small
	\be 
	\ba 
	&\pi_{NoN,z=0}(\tilde{p}^{eq}_{NoN},\tilde{p})\\
	&=\frac{1}{9(t_N+t_{NoN})}\big{(}2t_N+t_{NoN}+(\tilde{q}_p-\tilde{q}_f)(\kappa_u-2\kappa_{ad})\big{)}\\
	&\qquad \qquad \qquad \times\big{(}2t_N+t_{NoN}-(\tilde{q}_p-\tilde{q}_f)(2\kappa_u -\kappa_{ad})\big{)}
	\ea
	\ee 
	\normalsize
	
	 Next, we prove that $t_{NoN}+2t_N+  (\tilde{q}_{p}-\tilde{q}_f)(\kappa_u +\kappa_{ad})>2t_N+t_{NoN}+(\tilde{q}_p-\tilde{q}_f)(\kappa_u-2\kappa_{ad})$ and $t_{NoN}+2t_N+  (\tilde{q}_{p}-\tilde{q}_f)(\kappa_u +\kappa_{ad})>2t_N+t_{NoN}-(\tilde{q}_p-\tilde{q}_f)(2\kappa_u -\kappa_{ad})$. This yields  $\pi^{eq}_{NoN}>\pi_{NoN,z=0}(\tilde{p}^{eq}_{NoN},\tilde{p})$. To prove the inequalities, note that:
	 \footnotesize
	 $$
	 \ba 
	 &t_{NoN}+2t_N+  (\tilde{q}_{p}-\tilde{q}_f)(\kappa_u +\kappa_{ad})\\
	 &>2t_N+t_{NoN}+(\tilde{q}_p-\tilde{q}_f)(\kappa_u-2\kappa_{ad})\iff 3\kappa_{ad}(\tilde{q}_p-\tilde{q}_f)>0\\
	 &t_{NoN}+2t_N+  (\tilde{q}_{p}-\tilde{q}_f)(\kappa_u +\kappa_{ad})\\
	 &>2t_N+t_{NoN}-(\tilde{q}_p-\tilde{q}_f)(2\kappa_u -\kappa_{ad})\iff 3\kappa_u(\tilde{q}_p-\tilde{q}_f)>0
	 \ea 
	 $$
\normalsize

	 Since $\tilde{q}_p>\tilde{q}_f$, both inequalities hold. This completes the proof of this part. \\
	 \textbf{Part 2-iv:} In this part, we prove that there is no profitable unilateral deviation by ISPs when one of $t_N$ or $t_{NoN}$ is large. To do so, first, in Part 2-iv-NoN, we rule out the possibility of a profitable deviation by the non-neutral ISP. Then, in Part 2-iv-N, we rule out profitable deviations by the neutral ISP. 
	 
	 Note that,  by \eqref{equ:equ:B>_eq_piNoN_new}, the equilibrium payoff of ISP NoN, $\pi^{eq}_{NoN}=\pi_{NoN}(\tilde{p}^{eq}_{NoN},\tilde{p}_{t,3})$ is: 
	 $$
	  \pi^{eq}_{NoN}=\frac{\big{(}t_{NoN}+2t_N+  (\tilde{q}_{p}-\tilde{q}_f)(\kappa_u +\kappa_{ad})\big{)}^2}{9(t_N+t_{NoN})}
	 $$
	 In addition, using $(q^{eq}_N,q^{eq}_{NoN})=(\tilde{q}_f,\tilde{q}_p)$, $p^{eq}_{N}$, $\Delta p^{eq}$, and \eqref{equ:UN_new}, we can find $\pi^{eq}_{N}=\pi_{N}(\tilde{p}^{eq}_{N})$,:
\be \label{equ:equ:B>_eq_piN_new}
\pi^{eq}_{N}=\frac{\big{(}2t_{NoN}+t_N-(\tilde{q}_{p}-\tilde{q}_f)(\kappa_u +\kappa_{ad})\big{)}^2}{9(t_N+t_{NoN})}
\ee 
	Note that when $t_N$ and $t_{NoN}$ are large, $\pi^{eq}_N$ and $\pi^{eq}_{NoN}$ would be large.  
	 
	 Consider different regions in Theorem~\ref{l-theorem:p_tilde_new}-I.  We denote 
	 $\Delta p\leq \kappa_u \tilde{q}_{p}-t_{NoN}$ by region A, $\kappa_u \tilde{q}_{p}-t_{NoN}<\Delta p<\Delta p_t=\kappa_u (2\tilde{q}_{p}-\tilde{q}_f)-t_{NoN}$ by region $B_1$, $\Delta p_t=\kappa_u (2\tilde{q}_{p}-\tilde{q}_f)-t_{NoN}\leq \Delta p<t_N+\kappa_u(\tilde{q}_{p}-\tilde{q}_f)$ by region C, $t_N+\kappa_u(\tilde{q}_{p}-\tilde{q}_f)\leq \Delta p<t_N+\kappa_u \tilde{q}_{p}$ by  $B_2$, and $\Delta p\geq t_N+\kappa_u \tilde{q}_{p}$ by D. Recall that $\Delta p^{eq}=p^{eq}_{NoN}-p^{eq}_N$ is in region C. Note that the payoffs are concave in C, and we found the strategies by solving the first order condition. Thus, there is no unilateral profitable deviation in C.\\
	 \textbf{Part 2-iv-NoN:}  Now, we consider unilateral deviations by ISP NoN. We prove that any deviation to regions A, $B_1$, $B_2$, and D is not profitable in Cases 2-iv-NoN-A, 2-iv-NoN-$B_1$, 2-iv-NoN-$B_2$, and 2-iv-NoN-D, respectively.  This yields that no deviation is profitable for ISP NoN.\\
	 \textbf{Case 2-iv-NoN-A:} First, we  prove  that in Region A, $z^{eq}=1$. Note that in this case, by Definition \ref{l-def:pt}-I, $\tilde{p}_t=\tilde{p}_{t,1}$. Thus,  $\pi_{NoN}(p_{NoN},\tilde{p}_{t})=p_{NoN}-c+\tilde{q}_{p}\tilde{p}_{t,1}=p_{NoN}-c+\kappa_{ad}(\tilde{q}_p-\tilde{q}_f)$ (by \eqref{l-equ:payoffISPsGeneral_new}-I, Definition~\ref{l-def:pt1,pt2}-I, and since $n_{NoN}=1$ by item 1 of Theorem~\ref{l-theorem:p_tilde_new}-I). On the other hand, $\pi_{NoN,z=0}(p_{NoN},\tilde{p})=(p_{NoN}-c)n_{NoN}$. Thus, $\pi_{NoN}(p_{NoN},\tilde{p}_t)>\pi_{NoN,z=0}(p_{NoN},\tilde{p})$ (since $\tilde{q}_p>\tilde{q}_f$ and $0\leq n_{NoN}\leq 1$). Thus, by Theorem~\ref{l-theorem:NE_stage2_new_suff}-I, in region A, $z^{eq}=1$.

	 Now, consider  $p^{eq}_N$ fixed and decreasing $p_{NoN}$ such that $\Delta p$  in region A, i.e. $\Delta p\leq \kappa_u \tilde{q}_{p}-t_{NoN}$.  Since in Region A, $z^{eq}=1$, and by Theorem~\ref{l-theorem:NE_stage2_new}-I, the payoff after deviation is  $\pi_{NoN}'=p_{NoN}-c+\tilde{q}_p\tilde{p}_{t,1}$ (by \eqref{l-equ:payoffISPsGeneral_new}-I, Definition~\ref{l-def:pt1,pt2}-I, and since $n_{NoN}=1$ by item 1 of Theorem~\ref{l-theorem:p_tilde_new}-I).   
	  Thus, the payoff of the ISP NoN is an increasing function of $p_{NoN}$. Therefore, all other prices are dominated by $p'_{NoN}={p}^{eq}_N+\kappa_u \tilde{q}_{p}-t_{NoN}$. The payoff in this case is $\pi'_{NoN}={p}^{eq}_N+\kappa_u \tilde{q}_{p}-t_{NoN}-c+\tilde{q}_{p}\tilde{p}_{t,1}$ (by \eqref{l-equ:payoffISPsGeneral_new}-I). Therefore:
	 \be \label{equ:deviation:NoN}
	 \pi'_{NoN}=\frac{1}{3}(t_N-t_{NoN})+\alpha
	 \ee 
	where $\alpha$ is a constant independent of $t_N$ and $t_{NoN}$. Now, in Cases (i), (ii), and (iii),  
	 we prove that $\pi^{eq}_{NoN}>\pi'_{NoN}$ when  (i) $t_N$ 
	  is sufficiently larger than other parameters, (ii) $t_{NoN}$ is sufficiently larger than other parameters, and (iii) $t_N$ and $t_{NoN}$ are of the same order of magnitude and
	  both are sufficiently larger than other parameters, respectively.\\
	  \textbf{Case (i): }  If $t_N$ 
	  is sufficiently larger than other parameters, then:
	$$
	\pi^{eq}_{NoN}\approx \frac{4 t_N}{9}>\pi'_{NoN}\approx\frac{1}{3}t_N
	$$
	Thus, this deviation is not profitable. \\
	\textbf{Case (ii): } If $t_{NoN}$ is sufficiently larger than other parameters, then:
		$$
		\pi^{eq}_{NoN}\approx \frac{ t_{NoN}}{9}>\pi'_{NoN}\approx-\frac{1}{3}t_{NoN}
		$$
 Thus, this deviation is also not profitable.\\
 \textbf{Case (iii): } If $t_N$ and $t_{NoN}$ are of the same order of magnitude ($t_N\approx t_{NoN}$) and both are sufficiently larger than other parameters, then: 
 $$
 \pi^{eq}_{NoN}=\frac{t_N}{2}>\frac{t_N}{3}>\pi'_{NoN}
 $$
Thus, this deviation is not profitable.

Thus, any deviation to region A by ISP NoN is not profitable. This completes the proof of this case. \\
\textbf{Case 2-iv-NoN-$B_1$:} Now, consider a deviation by ISP NoN to region $B_1$, i.e. $\kappa_u \tilde{q}_{p}-t_{NoN}<\Delta p<\Delta p_t=\kappa_u (2\tilde{q}_{p}-\tilde{q}_f)-t_{NoN}$. Note that with this deviation, $p'_{NoN}=\frac{1}{3}(t_N-t_{NoN})+\alpha$, where $\alpha_l<\alpha<\alpha_u$, in which $\alpha_l$ and $\alpha_u$  are bounded. In addition, by \eqref{l-equ:EUs_linear}-I, after the deviation, $n'_{NoN}=\frac{t_N+t_{NoN}-\beta}{t_N+t_{NoN}}$, where $\beta>0$ is bounded ($\beta_l<\beta<\beta_u$, and $\beta_l$ and $\beta_u$  bounded ). Therefore, for large $t_N$ and $t_{NoN}$, $n'_{NoN}\rightarrow 1$. Thus, by \eqref{equ:UNoN_new}, the payoff of ISP NoN after deviation is:
$$
\pi'_{NoN}=\frac{1}{3}(t_N-t_{NoN})+\gamma
$$ 
where $\gamma$ is bounded (Note that $\tilde{p}$ is independent of $t_N$ and $t_{NoN}$). This expression is similar to  \eqref{equ:deviation:NoN}. Thus, we can exactly repeat the arguments in Cases i, ii, and iii to prove that  any deviation to region $B_1$ by ISP NoN is not profitable. This completes the proof of this case. \\
\textbf{Case 2-iv-NoN-$B_2$:} Now, consider a deviation by ISP NoN to region $B_2$, i.e. $t_N+\kappa_u(\tilde{q}_{p}-\tilde{q}_f)\leq \Delta p<t_N+\kappa_u \tilde{q}_{p}$. Note that with this deviation, $\Delta p'=t_N+\alpha$, and $p'_{NoN}=\frac{2t_{NoN}+4t_N}{3}+\gamma$
 where $\kappa_u(\tilde{q}_p-\tilde{q}_f)\leq \alpha\leq \kappa_u \tilde{q}_p$ and thus $\gamma$ is bounded. Thus,  by \eqref{l-equ:EUs_linear}-I, after this deviation, $n'_{NoN}=\frac{\beta}{t_N+t_{NoN}}$, where $\beta>0$ is a constant independent of $t_N$ and $t_{NoN}$, and the payoff of ISP NoN after deviation is $\pi'_{NoN}=\frac{2t_{NoN}+4t_N}{3(t_N+t_{NoN})}\beta +\eta$ (by \eqref{l-equ:payoffISPsGeneral_new}-I and considering that by Theorem~\ref{l-theorem:NE_stage2_new}-I, if $z^{eq}=1$, then $\tilde{p}=\tilde{p}_{t,2},$, and independent of $t_N$ and $t_{NoN}$), where $\eta$ is a constant independent of $t_N$ and $t_{NoN}$. Thus, when one of $t_N$ and $t_{NoN}$ is large, $\pi'_{NoN}\rightarrow constant$. Therefore, $\pi^{eq}_{NoN}>\pi'_{NoN}$. Thus, any deviation to region $B_2$ by ISP NoN is not profitable.\\
\textbf{Case 2-iv-NoN-D:} By item 4 of Theorem~\ref{l-theorem:p_tilde_new}-I, in region D, $n_{NoN}=0$. Thus, a deviation to this region, yields a payoff of zero, by \eqref{equ:UNoN_new} and $z^{eq}=0$. Thus, a deviation of this kind is not profitable for ISP NoN. \\
\textbf{Part 2-iv-N:} Now, consider unilateral deviations by the  neutral ISP. Similar to Part 2-iv-N, we prove that any deviation to regions $A$, $B_1$, $B_2$, and $D$ is not profitable. We do so in Cases 2-iv-N-A, 2-iv-N-$B_1$, 2-iv-N-$B_2$, and 2-iv-N-D, respectively.  \\
\textbf{Case 2-iv-N-A:} Consider a deviation by ISP N to region A. In this case, by item 1 of Theorem~\ref{l-theorem:p_tilde_new}-I, $n_N=0$. Thus, the payoff of ISP N after deviation is zero (by~\eqref{equ:UN_new}), and this deviation is  not profitable.\\
\textbf{Case 2-iv-N-$B_1$:}  Now, consider a deviation by ISP N to region $B_1$, i.e. $\kappa_u \tilde{q}_{p}-t_{NoN}<\Delta p<\Delta p_t=\kappa_u (2\tilde{q}_{p}-\tilde{q}_f)-t_{NoN}$. Note that with this deviation, $\Delta p=-t_{NoN}+\alpha$, and $p'_N=\frac{4t_{NoN}+2t_N}{3}+\gamma$, where $\kappa_u\tilde{q}_p<\alpha<\kappa_u(2\tilde{q}_p-\tilde{q}_f)$ and thus $\gamma$ is bounded. Thus, by \eqref{l-equ:EUs_linear}-I, $n'_{N}=\frac{\beta}{t_N+t_{NoN}}$, where $\beta>0$ is bounded. By \eqref{l-equ:payoffISPsGeneral_new}-I. The payoff of ISP N after deviation is $\pi_N=\frac{4t_{NoN}+2t_N}{3(t_N+t_{NoN})}\beta$ (by \eqref{l-equ:payoffISPsGeneral_new}-I). Thus, when one of $t_N$ and $t_{NoN}$ is large, $\pi'_{N}\rightarrow constant$. Thus, $\pi^{eq}_N>\pi'_N$. Therefore, any deviation to region $B_1$ by ISP N is not profitable. 

\textbf{case 2-iv-N-$B_2$:} Now, consider a deviation by ISP NoN to region $B_2$, i.e. $t_N+\kappa_u(\tilde{q}_{p}-\tilde{q}_f)\leq \Delta p<t_N+\kappa_u \tilde{q}_{p}$. Note that with this deviation, $\Delta p'=t_N+\alpha$, where $\kappa_u(\tilde{q}_p-\tilde{q}_f)\leq \alpha<\kappa_u\tilde{q}_p$. Thus, $p'_{N}=\frac{1}{3}(t_{NoN}-t_N)+\beta$, where $\beta$ is bounded.    In addition, by \eqref{l-equ:EUs_linear}-I, after the deviation, $n'_{N}=\frac{t_N+t_{NoN}-\gamma}{t_N+t_{NoN}}$, where $\gamma>0$ is bounded. Therefore, for large $t_N$ or $t_{NoN}$, $n'_{N}\rightarrow 1$. Thus, by \eqref{equ:UN_new}, the payoff of ISP N after deviation is:
\be \label{equ:deviation:N}
\pi'_{N}=\frac{1}{3}(t_{NoN}-t_{N})+\eta
\ee  
where $\eta$ is bounded.  Now, in Cases i, ii, and iii,  
we prove that $\pi^{eq}_{N}>\pi'_{N}$ when  (i) $t_N$ 
is sufficiently larger than other parameters, (ii) $t_{NoN}$ is sufficiently larger than other parameters, and (iii) $t_N$ and $t_{NoN}$ are of the same order of magnitude and
both are sufficiently larger than other parameters, respectively.\\
\textbf{Case i: }  If $t_N$ 
is sufficiently larger than other parameters, then:
$$
\pi^{eq}_{N}\approx \frac{ t_N}{9}>\pi'_{N}\approx-\frac{1}{3}t_N
$$
Thus, this deviation is not profitable. \\
\textbf{Case ii: } If $t_{NoN}$ is sufficiently larger than other parameters, then:
$$
\pi^{eq}_{N}\approx \frac{4 t_{NoN}}{9}>\pi'_{N}\approx\frac{1}{3}t_{NoN}
$$
Thus, this deviation is also not profitable.\\
\textbf{Case iii: } If $t_N$ and $t_{NoN}$ are of the same order of magnitude ($t_N\approx t_{NoN}$) and both are sufficiently larger than other parameters, then: 
$$
\pi^{eq}_{N}=\frac{t_{NoN}}{2}>\frac{t_{NoN}}{3}>\pi'_{N}
$$
Thus, this deviation is not profitable. 

Thus, any deviation to Region $B_2$ by ISP N is not profitable. This completes the proof of this case.\\
\textbf{Case 2-iv-N-D:}  Now, consider  decreasing $p_{NoN}$ such that $\Delta p$  in region D, i.e. $\Delta p\geq \kappa_u \tilde{q}_{p}+t_{N}$.  Note that by item 4 of Theorem~\ref{l-theorem:p_tilde_new}-I, $z^{eq}=0$, and $n_N=1$. Thus, the payoff of ISP N is equal to $p_{N}-c$ (by \eqref{l-equ:payoffISPsGeneral_new}-I). Thus, the payoff of the ISP N is an increasing function of $p_{N}$. Therefore, all other prices are dominated by $p'_{N}={p}^{eq}_{NoN}-(\kappa_u \tilde{q}_{p}+t_{N})$. Thus, the payoff in this case is $\pi'_{N}=\frac{1}{3}(t_{NoN}-t_N)+\alpha$, where $\alpha$ is a constant and is independent of $t_N$ and $t_{NoN}$. This expression is similar to \eqref{equ:deviation:N}. Thus, we can exactly repeat the arguments in Cases 2-iv-N-$B_2$-a, 2-iv-N-$B_2$-b, and 2-iv-N-$B_2$-c to prove that  any deviation to region $D$ by ISP NoN is not profitable. This completes the proof of this case. This completes the proof of this case, and the theorem.

\end{proof}

\section{Proof of Theorem~\ref{theorem:neutralz=0_q<} and Corollary~\ref{corollary:outcome_z=0}}\label{appendix:theorem:neutralz=0_q<}

First, we prove Theorem~\ref{theorem:neutralz=0_q<}:
\begin{proof}
	We consider different regions of $\Delta p$ in Theorem~\ref{l-lemma:CP_z=0_new}-I and Theorem~\ref{l-theorem:p_tilde_new}-I. For each region, we characterize all possible NE strategies. 
	
	First, consider $\Delta p\leq \kappa_u \tilde{q}_p-t_{NoN}$. Note that in this region,  the payoff of non-neutral ISP if $z^{eq}=0$ is at most $p^{eq}_{NoN}-c$ (by~\eqref{l-equ:payoffISPsGeneral_new}-I). On the other hand, by Theorem \ref{l-theorem:p_tilde_new}-I, by choosing $\tilde{p}'=\tilde{p}_{t,1}$, ISP NoN can ensure that the CP chooses $z^{eq}=1$. In this case, the payoff of non-neutral ISP (by~\eqref{l-equ:payoffISPsGeneral_new}-I) is $p'_{NoN}-c+\tilde{p}_{t,1}\tilde{q}_{NoN}=p^{eq}_{NoN}-c+\kappa_{ad}(\tilde{q}_{p}-\tilde{q}_f)>p^{eq}_{NoN}-c$. Thus, $\pi_{NoN}(p^{eq}_{NoN},\tilde{p}_{t,1})>\pi_{NoN,z=0}(p^{eq}_{NoN},\tilde{p})$. Therefore, in this case, there is no NE by which $z^{eq}=0$.
	
	Now, consider $\Delta p>\kappa_u \tilde{q}_p-t_{NoN}$. Note that $\tilde{q}_p<\frac{t_N+t_{NoN}}{\kappa_u}$. Thus, two possibility may arise: (i) $-t_{NoN}<\Delta p<t_N$, and (ii) $\Delta p\geq t_N$. We consider these two cases in Case 1 and 2, respectively. 
	
	\textbf{Case 1:} In this case, $-t_{NoN}<\Delta p<t_N$. By item 1 of Theorem \ref{l-lemma:CP_z=0_new}-I, $(q^{eq}_N,q^{eq}_{NoN})=(\tilde{q}_f,\tilde{q}_f)\in F^I_0$. Note that in this region, $0<x_N<1$, and an NE strategy for ISPs  should satisfy the first order optimality conditions. Thus, using \eqref{equ:UN_new} and \eqref{equ:UNoN_new}:
	
	$$
	\small
	\pi_N(p_N)=(p_N-c)\frac{t_{NoN}+p_{NoN}-p_N}{t_N+t_{NoN}} 
	$$
	
	$$
	\ba 
	\pi_{NoN}(p_{NoN},\tilde{p})&=(p_{NoN}-c)\frac{t_N+p_N-p_{NoN}}{t_N+t_{NoN}}
	\ea
	$$
	\normalsize
	Solving the first order optimality condition yields:

	\be\label{equ:ISPcandidateMultihome_z=0}
	\ba
	p^{eq}_N&= c+\frac{1}{3}(2t_{NoN}+t_N)\\
	p^{eq}_{NoN}&=c+\frac{1}{3}(2t_N+t_{NoN})
	\ea 
	\ee
	which is unique. Note that $p^{eq}_N\geq c$  and $p^{eq}_{NoN}\geq c$. First, note that $-t_{NoN}< \Delta p^{eq}=p^{eq}_{NoN}-p^{eq}_N=\frac{t_N-t_{NoN}}{3}< t_N$. 
	
	The necessary condition for this strategy to be an NE is $\pi_{NoN,z=0}(p^{eq}_{NoN})\geq \pi_{NoN}(p^{eq}_{NoN},\tilde{p}_t)$ (by Theorem~\ref{l-theorem:NE_stage2_new_suff}-I). The candidate strategies and this necessary  condition is listed in the statement of the theorem.  
	
	\textbf{Case 2:} Now, consider $\Delta p\geq t_{N}$. We consider two cases $\Delta p=t_N$ and $\Delta p>t_N$ in Cases 2-i and 2-ii, respectively. \\
	\textbf{Case 2-i:} Now consider strategies $p_{NoN}$ and $p_N$ such that $\Delta p=t_{N}$. In this case, using case 2 of Theorem~\ref{l-lemma:CP_z=0_new}-I, $(q^{eq}_N,q^{eq}_{NoN})=(\tilde{q}_f,0)\in F_0^U$. Thus, $n_{NoN}=0$ and $\pi_{NoN}(p_{NoN},z=0)=0$, i.e. the payoff of the non-neutral ISP is zero.  Consider $\epsilon>0$ such that $p'_{NoN}=p_{NoN}-\epsilon> c$. In this case, $p'_{NoN}-p_N<t_{N}$. Thus, by  Theorem~\ref{l-lemma:CP_z=0_new}-I,  $(q^{eq}_N,q^{eq}_{NoN})\in F^I_0$ or $(q^{eq}_N,q^{eq}_{NoN})\in F^L_0$. Thus, $n_{NoN}>0$, and $\pi_{NoN}(p'_{NoN},z=0)>0$. Thus, $p'_{NoN}$ is a profitable deviation for the non-neutral ISP. Therefore, as long as such a deviation exist  $p_{NoN}$ and $p_N$ such that $\Delta p=t_{N}$ cannot be NE.\\
\textbf{Case 2-ii:}  Now, consider $\Delta p> t_{N}$. Thus, by item 2 of Theorem~\ref{l-lemma:CP_z=0_new}-I,  $n^{eq}_N=1$.   Consider a unilateral deviation by neutral ISP such that $p'_N=p^{eq}_N+\epsilon$ in which $\epsilon>0$ such that $p^{eq}_{NoN}-p'_N>t_N$. Note that the values of $q^{eq}_N$ and $q^{eq}_{NoN}$ is the same as before, since still $\Delta p'=p^{eq}_{NoN}-p'_N>t_N$. Thus, again $n^{eq}_N=1$, and by \eqref{l-equ:payoffISPsGeneral_new}-I, the payoff of neutral ISP is an increasing function of $p_N$. Thus, $p'_N$ is a profitable unilateral deviation. This contradicts the assumption that $p^{eq}_N$ and $p^{eq}_{NoN}$ is NE. Thus, the result of the theorem follows.
\end{proof}

Now, we prove Corollary~\ref{corollary:outcome_z=0}:

\begin{proof}
	Note that we constructed this strategy such that $\Delta p$ satisfies item  1 of Theorem~\ref{l-lemma:CP_z=0_new}-I. Thus, $(q^{eq}_N,q^{eq}_{NoN})=(\tilde{q}_f,\tilde{q}_f)\in F^I_0$. Using the expression for $\Delta p=p^{eq}_{NoN}-p^{eq}_N$, and \eqref{l-equ:EUs_linear}-I, the expressions for $n^{eq}_N$ and $n^{eq}_{NoN}$ follow.
\end{proof}

\section{Proof of Theorem~\ref{lemma:NEz=0}}\label{appendix:lemma:NEz=0}

 The following lemmas allow us to characterize the NE when $(q^{eq}_N,q^{eq}_{NoN})\in F_0$, i.e. $z^{eq}=0$. 
 
  Lemmas \ref{lemma:pnonDominated} and \ref{lemma:pnDominated} are useful in proving Theorem~\ref{lemma:NEz=0}.
 \begin{lemma}\label{lemma:pnonDominated}
 	No $p_{NoN}$ and $p_N$ such that $\Delta p=p_{NoN}-p_N\leq -t_{NoN}$ can be equilibrium strategies.
 \end{lemma}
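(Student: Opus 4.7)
The plan is to show that whenever $\Delta p\leq -t_{NoN}$, the neutral ISP has a strict unilateral incentive to deviate, so such a strategy profile cannot be an NE. This is the benchmark setting, so we work with $z^{eq}=0$ and $(q^{eq}_N,q^{eq}_{NoN})\in F_0$ throughout, and can invoke the allocation rule of Stage~4 established in \eqref{l-equ:EUs_linear}-I together with Theorem~\ref{l-lemma:CP_z=0_new}-I.

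First, I would translate the hypothesis $\Delta p = p_{NoN}-p_N \leq -t_{NoN}$ into a statement about the EU allocation. Plugging into \eqref{l-equ:EUs_linear}-I (with $q_N=q_{NoN}=\tilde{q}_f$) gives $n_N=0$ and $n_{NoN}=1$, so by \eqref{l-equ:payoffISPsGeneral_new}-I the neutral ISP earns $\pi_N=0$. Since we may assume $p_{NoN}\geq c$, the hypothesis forces $p_N \geq p_{NoN}+t_{NoN}\geq c+t_{NoN}>c$.

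Next, I would construct an explicit profitable deviation for the neutral ISP. Choose $p'_N = p_{NoN}+t_{NoN}-\epsilon$ for $\epsilon>0$ small. Because $p_{NoN}+t_{NoN}>c$, we can pick $\epsilon$ small enough that $p'_N>c$ still holds, and at the same time $\epsilon < t_N+t_{NoN}$, so that the new price gap satisfies $-t_{NoN}<\Delta p'=-t_{NoN}+\epsilon<t_N$. This places the market in the interior region of item~1 of Theorem~\ref{l-lemma:CP_z=0_new}-I, so $(q^{eq}_N,q^{eq}_{NoN})=(\tilde{q}_f,\tilde{q}_f)\in F^I_0$ and \eqref{l-equ:EUs_linear}-I yields $n'_N = \epsilon/(t_N+t_{NoN})>0$. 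Hence $\pi'_N=(p'_N-c)n'_N>0=\pi_N$, a strict improvement, contradicting the NE property.

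The argument is essentially one short case analysis; the only subtle point is guaranteeing the deviation lands inside the region where the neutral ISP actually receives a positive share of EUs. The main obstacle is therefore merely book-keeping: confirming simultaneously that $p'_N>c$ (to preserve positive margin), that $\Delta p'>-t_{NoN}$ (to make $n'_N>0$), and that $\Delta p'<t_N$ (to stay in the $F^I_0$ branch rather than triggering $n_{NoN}=0$). The inequalities $p_{NoN}\geq c$ and $t_{NoN}>0$ make all three conditions compatible for every sufficiently small $\epsilon$, completing the proof.
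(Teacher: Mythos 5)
Your argument is correct, and it is a genuinely more economical route than the paper's. The paper splits the region in two: for $\Delta p<-t_{NoN}$ it exhibits a profitable deviation for the \emph{non-neutral} ISP (since $n_{NoN}=1$ throughout that region, $\pi_{NoN}=p_{NoN}-c$ is strictly increasing in $p_{NoN}$, so raising $p_{NoN}$ toward $p_N-t_{NoN}$ is strictly profitable), and only for the boundary case $\Delta p=-t_{NoN}$ does it use the neutral ISP's undercutting deviation, with a separate contradiction argument ($p_N\leq c\Rightarrow p_{NoN}\leq c-t_{NoN}<c$) to show the deviation is available. You instead handle the whole region $\Delta p\leq -t_{NoN}$ with a single deviation by ISP N to $p'_N=p_{NoN}+t_{NoN}-\epsilon$, and your bound $p_N\geq p_{NoN}+t_{NoN}\geq c+t_{NoN}>c$ directly guarantees both $p'_N>c$ and $-t_{NoN}<\Delta p'<t_N$ for $\epsilon<t_{NoN}$; this buys a shorter, unified proof at no loss of rigor. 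One cosmetic slip: before the deviation the CP's best response in the region $\Delta p\leq -t_{NoN}$ is $(q^{eq}_N,q^{eq}_{NoN})=(0,\tilde{q}_f)\in F^L_0$ (case 3 of Theorem~\ref{l-lemma:CP_z=0_new}-I), not $(\tilde{q}_f,\tilde{q}_f)$; since either choice gives $x_N\leq 0$ and hence $n_N=0$ and $\pi_N=0$, the conclusion you need is unaffected, but the citation should be corrected.
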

 
 \begin{proof}
 	First, we rule out the existence of an NE when  $\Delta p<-t_{NoN}$, and then  when  $\Delta p=-t_{NoN}$.
 	
 	First, consider $p_{NoN}$ and $p_N$ such that $\Delta p<-t_{NoN}$. In this case, $p_{NoN}<p_N-t_{NoN}$. Note that the payoff of the non-neutral ISP when $\Delta p\leq -t_{NoN}$ is $p_{NoN}-c$ (by \eqref{l-equ:payoffISPsGeneral_new}-I and $n_{NoN}=1$, using case 3 of Theorem~\ref{l-lemma:CP_z=0_new}-I), and is strictly increasing with respect to $p_{NoN}$. Thus, every price $p_{NoN}<p_N-t_{NoN}$ yields a strictly lower payoff for the non-neutral ISP in comparison with the payoff of the this ISP when $p_{NoN}=p_N-t_{NoN}$. Thus, there exist a profitable deviation for the non-neutral ISP for strategies such that $p_{NoN}-p_N<-t_{NoN}$. Therefore, no $p_{NoN}$ and $p_N$ such that $p_{NoN}-p_N<-t_{NoN}$ can be NE strategies. 
 	
 	Now consider strategies $p_{NoN}$ and $p_N$ such that $\Delta p=-t_{NoN}$. In this case, using case 3 of Theorem~\ref{l-lemma:CP_z=0_new}-I, $(q^{eq}_N,q^{eq}_{NoN})=(0,\tilde{q}_f)\in F_0^L$. Thus, $n_N=0$ and $\pi_N(p_N)=0$, i.e. the payoff of the neutral ISP is zero.  Consider $\epsilon>0$ such that $p'_N=p_N-\epsilon> c$. In this case, $p_{NoN}-p'_N>-t_{NoN}$. Thus, by Theorem \ref{l-lemma:CP_z=0_new}-I, $(q^{eq}_N,q^{eq}_{NoN})\in F^I_0$ or $(q^{eq}_N,q^{eq}_{NoN})\in F_0^U$. Thus, $n_N>0$, and $\pi_N(p'_N)>0$. Thus, $p'_N$ is a profitable deviation for the neutral ISP. Therefore, as long as such a deviation exist  $p_{NoN}$ and $p_N$ such that $\Delta p=-t_{NoN}$ cannot be NE. Now, we prove that such deviation always exist. This complete the proof. Note that this deviation does not exist if and only if $p_N-\epsilon\leq c$ for all $\epsilon>0$. Therefore, this deviation does not exist if only if $p_N\leq c$. Thus, $p_{NoN}\leq c-t_{NoN}<c$, which contradicts the fact that if $z=0$, $p^{eq}_{NoN}\geq c$ (as mentioned in the beginning of the section). The lemma follows. 
 \end{proof}
 
 \begin{lemma}\label{lemma:pnDominated}
 	No $p_{NoN}$ and $p_{N}$ such that $\Delta p\geq t_{N}$ can be equilibrium strategies.
 \end{lemma}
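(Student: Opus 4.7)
The plan is to mirror the structure of the immediately preceding Lemma~\ref{lemma:pnonDominated}, swapping the roles of the neutral and non-neutral ISPs and flipping the direction of $\Delta p$. Split the range $\Delta p \geq t_N$ into two subcases: the strict inequality $\Delta p > t_N$ and the boundary $\Delta p = t_N$. Both subcases exhibit a profitable unilateral deviation, and hence no such strategies can form an NE.

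For the first subcase, $\Delta p > t_N$, invoke item 2 of Theorem~\ref{l-lemma:CP_z=0_new}-I to conclude $n^{eq}_N = 1$ (and $n^{eq}_{NoN} = 0$). Then by \eqref{l-equ:payoffISPsGeneral_new}-I, $\pi_N(p_N) = p_N - c$ is strictly increasing in $p_N$. For any candidate $p_N$, choose $\epsilon > 0$ small enough that $p_{NoN} - (p_N + \epsilon) > t_N$ still holds; the deviation $p'_N = p_N + \epsilon$ keeps us in the same region (so $n^{eq}_N$ remains $1$) but strictly increases $\pi_N$, contradicting NE.

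For the boundary subcase $\Delta p = t_N$, apply case 2 of Theorem~\ref{l-lemma:CP_z=0_new}-I (as already invoked in Case~2-i of the proof of Theorem~\ref{theorem:neutralz=0_q<}) to obtain $(q^{eq}_N,q^{eq}_{NoN}) = (\tilde{q}_f, 0) \in F_0^U$, hence $n^{eq}_{NoN} = 0$ and $\pi_{NoN}(p_{NoN},z{=}0) = 0$. Consider the deviation $p'_{NoN} = p_{NoN} - \epsilon$ with $\epsilon > 0$ chosen so that $p'_{NoN} > c$; then $\Delta p' < t_N$, which by Theorem~\ref{l-lemma:CP_z=0_new}-I places the outcome in $F_0^I$ or $F_0^L$ with $n_{NoN} > 0$, yielding $\pi_{NoN}(p'_{NoN},z{=}0) > 0$, a profitable deviation.

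The only subtlety, and the main obstacle, is showing such a deviation is actually available at the boundary, i.e.\ that $p_{NoN} > c$ (so that a small decrease still keeps $p'_{NoN} \geq c$). This is handled exactly as in Lemma~\ref{lemma:pnonDominated}: if instead $p_{NoN} \leq c$, then $p_N = p_{NoN} - t_N \leq c - t_N < c$, contradicting the standing assumption $p^{eq}_N \geq c$ (and $p^{eq}_{NoN} \geq c$ when $z = 0$) stated at the beginning of Section~\ref{section:theory}. With this observation the deviation in the boundary subcase is valid, and the lemma follows.
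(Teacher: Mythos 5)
Your proposal is correct and follows essentially the same route as the paper: split into $\Delta p > t_N$ and $\Delta p = t_N$, use case 2 of Theorem~\ref{l-lemma:CP_z=0_new}-I to get $n_N=1$ (so the neutral ISP's payoff $p_N-c$ is increasing, ruling out the strict case) and $n_{NoN}=0$ at the boundary (so a small price cut by ISP NoN is profitable), with the availability of that cut secured by the standing assumption $p^{eq}_N\geq c$ exactly as in Lemma~\ref{lemma:pnonDominated}. The only cosmetic difference is that in the strict subcase you deviate upward by a small $\epsilon$ staying inside the region, whereas the paper deviates all the way to the boundary $p_N=p_{NoN}-t_N$; both yield the same contradiction.
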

 
 \begin{proof}
 	First, we rule out the existence of an NE when  $\Delta p>t_{N}$, and then  when  $\Delta p=t_{N}$.
 	
 	First, consider $p_{NoN}$ and $p_N$ such that $\Delta p>t_N$. In this case, $p_{N}<p_{NoN}-t_N$. Note that the payoff of the neutral ISP when $\Delta p\geq t_{N}$ is $p_{N}-c$ (by \eqref{l-equ:payoffISPsGeneral_new}-I and $n_N=1$, using case 2 of Theorem~\ref{l-lemma:CP_z=0_new}-I), and is strictly increasing with respect to $p_{N}$. Thus, every price $p_{N}<p_{NoN}-t_{N}$ yields a strictly lower payoff for the neutral ISP in comparison with the payoff of the this ISP when $p_{N}=p_{NoN}-t_{N}$. Thus, no $p_{NoN}$ and $p_N$ such that $p_{NoN}-p_N>t_N$ can be Ne strategies.
 	
 	Now consider strategies $p_{NoN}$ and $p_N$ such that $\Delta p=t_{N}$. In this case, using case 2 of Theorem~\ref{l-lemma:CP_z=0_new}-I, $(q^{eq}_N,q^{eq}_{NoN})=(\tilde{q}_f,0)\in F_0^U$. Thus, $n_{NoN}=0$ and $\pi_{NoN}(p_{NoN},z=0)=0$, i.e. the payoff of the non-neutral ISP is zero.  Consider $\epsilon>0$ such that $p'_{NoN}=p_{NoN}-\epsilon> c$. In this case, $p'_{NoN}-p_N<t_{N}$. Thus, by  Theorem~\ref{l-lemma:CP_z=0_new}-I,  $(q^{eq}_N,q^{eq}_{NoN})\in F^I_0$ or $(q^{eq}_N,q^{eq}_{NoN})\in F^L_0$. Thus, $n_{NoN}>0$, and $\pi_{NoN}(p'_{NoN},z=0)>0$. Thus, $p'_{NoN}$ is a profitable deviation for the non-neutral ISP. Therefore, as long as such a deviation exist  $p_{NoN}$ and $p_N$ such that $\Delta p=t_{N}$ cannot be NE. Now, we prove that such deviation always exist. This complete the proof. Note that this deviation does not exist if and only if $p_{NoN}-\epsilon\leq c$ for all $\epsilon>0$. Therefore, this deviation does not exist if only if $p_{NoN}\leq c$. Therefore, $p_{N}\leq c-t_{N}<c$, which contradicts the fact that $p^{eq}_{N}\geq c$ (as mentioned after at the beginning of the section.). The lemma follows. 
 \end{proof}

 Now, we proceed to prove Theorem~\ref{lemma:NEz=0}:
 
  \begin{proof}
 	First, in Part 1, we characterize the candidate equilibrium strategies by applying the first order condition on the payoffs. Then, in Part 2, we prove that no unilateral deviation is profitable for ISPs. Thus, the strategies characterized in Part 1 are NE. 
 	
 	\textbf{Part 1:} Note that $z^{eq}=0$. First note that by Lemmas~\ref{lemma:pnonDominated} and \ref{lemma:pnDominated}, no $p_N$ and $p_{NoN}$ such that $\Delta p\leq -t_{NoN}$ or $\Delta p\geq t_N$ can be Nash equilibrium. Thus, we consider $-t_{NoN}< \Delta p< t_N$. Note that in this region, $0<x_N<1$, and an NE strategy for ISPs  should satisfy the first order optimality conditions. Thus, using \eqref{equ:UN_new} and \eqref{equ:UNoN_new}, and item 1 of Theorem~\ref{l-lemma:CP_z=0_new}-I:
 	\be\label{equ:ISPcandidateMultihome}
 	\ba
 	p^{eq}_N&= c+\frac{1}{3}(2t_{NoN}+t_N)\\
 	p^{eq}_{NoN}&=c+\frac{1}{3}(2t_N+t_{NoN})
 	\ea 
 	\ee
 	which is unique. Note that $p^{eq}_N\geq c$  and $p^{eq}_{NoN}\geq c$. In order to prove that this is an NE, it is enough to prove that (i) $-t_{NoN}< \Delta p^{eq}=p^{eq}_{NoN}-p^{eq}_N< t_N$, (ii) a deviation of one of the ISPs by which $\Delta p$ is shifted to the region $\Delta p\leq -t_{NoN}$ or $\Delta p\geq t_N$ is not profitable for that ISP.
 	
 	The condition (i) can be proved by \eqref{equ:ISPcandidateMultihome}. From this equation, $\Delta p^{eq}=\frac{t_N-t_{NoN}}{3}$. Thus, $\Delta p^{eq}>-t_{NoN}$ and $\Delta p^{eq}< t_N$. Therefore, (i) is true for this case. 
 	
 	\textbf{Part 2:} Now, we should prove that condition (ii) holds, i.e. no unilateral deviation is profitable. First, in Case 2-a, we rule out the possibility of a unilateral deviation when  $-t_{NoN}<\Delta p<t_N$ for both neutral and non-neutral ISPs. Then, we consider  $\Delta p\leq -t_{NoN}$ and $\Delta p\geq t_N$, and in Cases  2-NoN and 2-N, we rule out the possibility of a unilateral deviation in these regions for ISP N and NoN, respectively.
 	
 	\textbf{Case 2-a:} First, note that  by concavity of the payoffs (using \eqref{equ:UN_new} and \eqref{equ:UNoN_new}) as long as $-t_{NoN}<\Delta p<t_N$, i.e. $0<x_N<1$, a unilateral deviation by one of the ISPs from $p^{eq}_N$ or $p^{eq}_{NoN}$  decreases this ISP's payoff. Thus, we should consider the deviation by ISPs by which $\Delta p\leq -t_{NoN}$ or $\Delta p\geq t_N$.
 	
 	\textbf{Case 2-NoN:} Now, consider the deviations by the non-neutral ISP. Fix $p_N=p^{eq}_N$, and consider two cases. In Case 2-NoN-i (respectively, Case 2-NoN-ii), we consider deviation by ISP NoN such that $\Delta p\geq t_N$ (respectively, $\Delta p\leq -t_{NoN}$). 
 	
 	\textbf{Case 2-NoN-i:} Suppose the non-neutral ISP increases her price from ${p}^{eq}_{NoN}$ to make $\Delta p\geq t_N$. In this case, $n_{NoN}=0$, and the payoff of the ISP is zero (by \eqref{l-equ:payoffISPsGeneral_new}-I). Since in the candidate equilibrium strategy this payoff is non-negative\footnote{Note that $p^{eq}_{NoN}>c$ and $0\leq n_{NoN}\leq 1$.}, this deviation is not profitable.

 	\textbf{Case 2-NoN-ii:}  Now, consider the case in which the non-neutral ISP decreases her price to make $\Delta p\leq -t_{NoN}$. In this case, $n_{NoN}=1$ and $\pi_{NoN}(p'_{NoN},z=0)=p'_{NoN}-c$ (by \eqref{l-equ:payoffISPsGeneral_new}-I). Thus, the payoff is a strictly  increasing function of $p'_{NoN}$, and is  maximized at $p'_{NoN}=p^{eq}_{N}-t_{NoN}$. We show that $\pi_{NoN}(p'_{NoN},z=0)<\pi_{NoN}(p^{eq}_{NoN},z=0)$ \footnote{note that $p_N=p^{eq}_N$ is fixed.}. Note that $\pi_{NoN}(p'_{NoN},z=0)=\frac{1}{3}(t_N-t_{NoN})$. In addition, using \eqref{equ:ISPcandidateMultihome}, \eqref{l-equ:payoffISPsGeneral_new}-I, $0\leq x_{N}\leq 1$, \eqref{l-equ:EUs_linear}-I, and the fact that with $p^{eq}_N$ and $p^{eq}_{NoN}$, $q^{eq}_{NoN}-q^{eq}_{N}=0$:
 	$$
 	\pi_{NoN}(p^{eq}_{NoN},z=0)=\frac{1}{9}\frac{(2t_N+t_{NoN})^2}{t_{NoN}+t_N}
 	$$ 
 	Thus:
 	 $$
 	\ba
 	&\pi_{NoN}(p'_{NoN},z=0)< \pi_{NoN}(p^{eq}_{NoN},z=0)\\ &\qquad \iff 3(t^2_N-t^2_{NoN})<4t^2_N+t^2_{NoN}+4 t_N t_{NoN}\\
 	& \qquad \iff t^2_N+4 t^2_{NoN}+4t_Nt_{NoN}>0
 	\ea
 	$$
 	where the last inequality is always true. Thus, this deviation is not profitable for ISP NoN.

 	These cases prove that no deviation form \eqref{equ:ISPcandidateMultihome} is profitable for ISP NoN.
 	
 	\textbf{Case 2-N:}  Now, consider a deviation by the neutral ISP from \eqref{equ:ISPcandidateMultihome}. 
 	Similar argument can be done for the neutral ISP. Fix, $p_{NoN}=p^{eq}_{NoN}$, and consider two cases.  In Case 2-N-i (respectively, Case 2-N-ii), we consider deviation by ISP N such that $\Delta p\leq -t_{NoN}$ (respectively, $\Delta p\geq t_{N}$). 
 	
 	\textbf{Case 2-N-i:} Suppose the neutral ISP increases her price from ${p}^{eq}_{N}$ to get $\Delta p\leq -t_{NoN}$. In this case, $n_{N}=0$, and the payoff of this ISP is zero. Since in the candidate equilibrium strategy the payoff is non-negative\footnote{Note that $p^{eq}_{N}>c$ and $0\leq n_{N}\leq 1$.}, this deviation is not profitable.
 	
 	\textbf{Case 2-N-ii:} Now, consider the case in which  the non-neutral ISP decreases her price such that $\Delta p\geq t_{N}$. In this case, $n_{N}=1$ and $\pi_{N}(p'_{N})=p'_{N}-c$. Thus, the payoff is a strictly  increasing function of $p'_{N}$, and is  maximized at $p'_{N}=p^{eq}_{NoN}-t_{N}$. We show that $\pi_{N}(p'_{N})<\pi_{N}(p^{eq}_{N})$. Note that $\pi_{N}(p'_{N})=\frac{1}{3}(t_{NoN}-t_N)$ (by \eqref{l-equ:payoffISPsGeneral_new}-I). In addition, using \eqref{equ:ISPcandidateMultihome}, \eqref{l-equ:payoffISPsGeneral_new}-I, $0\leq x_{N}\leq 1$, \eqref{l-equ:EUs_linear}-I, and the fact that with $p^{eq}_N$ and $p^{eq}_{NoN}$, $q^{eq}_{NoN}-q^{eq}_{N}=0$:
 	$$
 	\pi_{N}(p^{eq}_{N})=\frac{1}{9}\frac{(2t_{NoN}+t_{N})^2}{t_{NoN}+t_N}
 	$$ 
 	Thus:
 	$$
 	\ba
 	\pi_{N}(p'_{N})&< \pi_{N}(p^{eq}_{N}) \\
 	&\iff 3(t^2_{NoN}-t^2_{N})<4t^2_{NoN}+t^2_{N}+4 t_N t_{NoN}\\
 	& \iff t^2_{NoN}+4 t^2_{N}+4t_Nt_{NoN}>0
 	\ea
 	$$
 	where the last inequality is always true. Thus, this deviation is not profitable for ISP N.
 	
 	Thus, there is no profitable deviation for ISP N. This completes the proof, and the lemma follows.
 \end{proof}

\section{Continuous Strategy Set for the CP}\label{section:general}

In this section, we consider $q_N\in [0,\tilde{q}_f]$ and $q_{NoN}\in[0,\tilde{q}_p]$. In this case, the CP pays a side payment of $\tilde{p}q_{NoN}$ if she chooses $q_{NoN}\in (\tilde{q}_f,\tilde{q}_p]$. The rest of the model is the same as before. Note that in this case, the optimum strategies in Stage 4 of the game, in which end-users decide on the ISP, is the same as before. \emph{We prove that the optimum decisions made by the CP is similar to the decisions of the CP when she has a discrete set of strategies (explained in the paper)}. This yields that the results of the model would the same as before when the CP chooses her strategy from a continuous set.     

Therefore, we focus on characterizing the optimum strategies of the CP when she chooses her strategy from  continuous sets, i.e.  $q_N\in [0,\tilde{q}_f]$ and $q_{NoN}\in[0,\tilde{q}_p]$. The following lemma is useful in defining the maximization and to characterize the optimum answers.

\begin{lemma}\label{lemma:qfree}
	$\pi_{CP}(q_N,\tilde{q}_{f,NoN},z=0)\geq \pi_{CP}(q_N,\tilde{q}_{f,NoN},z=1)$.
\end{lemma}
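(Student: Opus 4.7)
The plan is to compare the two sides of the inequality term by term using the CP's payoff expression \eqref{l-equ:payoffCP_new}-I, and to observe that fixing the quality pair $(q_N, \tilde{q}_{f,NoN})$ pins down the EU allocation $(n_N, n_{NoN})$ from \eqref{l-equ:EUs_linear}-I independently of whether the side-payment indicator $z$ equals $0$ or $1$. So the proof will not require any calculation involving the actual equilibrium in Stages 1, 2, or 4; it is a pointwise comparison at a fixed quality profile.

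First, I would write out the CP's payoff in the schematic form
\[
\pi_{CP}(q_N, q_{NoN}, z) = \kappa_{ad}\bigl(q_N n_N + q_{NoN} n_{NoN}\bigr) - z\, \tilde{p}\, q_{NoN},
\]
and then specialize to $q_{NoN} = \tilde{q}_{f,NoN}$. Because the stage-4 response of EUs, characterized by \eqref{l-equ:EUs_linear}-I, depends only on the qualities, the transport costs, and the access fees and not directly on $z$, the shares $n_N$ and $n_{NoN}$ take identical values on both sides of the inequality. Subtracting then collapses the comparison to
\[
\pi_{CP}(q_N, \tilde{q}_{f,NoN}, z=0) - \pi_{CP}(q_N, \tilde{q}_{f,NoN}, z=1) = \tilde{p}\, \tilde{q}_{f,NoN},
\]
so the lemma reduces to the claim $\tilde{p}\, \tilde{q}_{f,NoN} \geq 0$.

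The main (essentially only) obstacle is justifying $\tilde{p} \geq 0$ in this regime. I would argue that in the continuous-quality generalization, since $\tilde{q}_{f,NoN}$ lies in the free-quality range and the CP obtains no premium upgrade, ISP NoN has no strategic reason to set $\tilde{p} < 0$: a negative side-payment here is a pure transfer from NoN to the CP that induces no change in behavior by the CP or EUs, and is therefore strictly dominated for NoN by $\tilde{p} = 0$. Without loss of generality we may thus restrict attention to $\tilde{p} \geq 0$, mirroring the convention used throughout Part I and the main body of Part II (cf.\ Theorem~\ref{l-theorem:NE_stage2_new}-I and Definition~\ref{l-def:pt1,pt2}-I). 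Combined with $\tilde{q}_{f,NoN} \geq 0$, this yields the desired inequality and completes the argument.
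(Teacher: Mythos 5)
Your proposal is correct and follows essentially the same route as the paper: both fix the quality profile, observe from \eqref{l-equ:EUs_linear}-I that $n_N$ and $n_{NoN}$ are unchanged by $z$, and reduce the claim to $\tilde{p}\,\tilde{q}_{f,NoN}\geq 0$. The paper simply asserts this nonnegativity (implicitly using $\tilde{p}\geq 0$ from the model), whereas you add a short justification for it; otherwise the arguments coincide.
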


\begin{remark}
	Note that although we considered $z$ to be a dummy variable, in this lemma and for the purpose of analysis, we treat it as an independent variable.    
\end{remark}

\begin{proof}
	The lemma follows by \eqref{l-equ:payoffCP_new}-I, and comparing the expressions in these two cases:
	
	\footnotesize
	$$
	\ba 
	\pi_{CP}(q_N,\tilde{q}_{f,NoN},z=0)-\pi_{CP}(q_N,\tilde{q}_{f,NoN},z=1)&= \tilde{q}_{f,NoN} \tilde{p} \geq 0
	\ea
	$$
	
	\normalsize
	Note that we used the fact that  from \eqref{l-equ:EUs_linear}-I, since the qualities are the same in  both cases, $n_N$ and $n_{NoN}$ are equal for both cases.
\end{proof}

Lemma~\ref{lemma:qfree} provides the ground to formally define the maximization for the CP as:
\begin{equation}\label{equ:CPopt_initial2}
\begin{aligned}
\max_{z,q_N,q_{NoN}}&\pi_{CP}(q_N,q_{NoN},z)=\\
&\max_{z,q_N,q_{NoN}} \l n_{N}\kappa_{ad}q_N+n_{NoN}\kappa_{ad}q_{NoN}-z\tilde{p}q_{NoN}\r\\
&\text{s.t:}\\
&\qquad q_N\leq \tilde{q}_f\\
&\qquad \mbox{if } z=1 \quad   \tilde{q}_{f} < q_{NoN} \leq \tilde{q}_{p}\\
&\qquad \mbox{if } z=0 \quad   q_{NoN} \leq \tilde{q}_{f}
\end{aligned}
\end{equation}

\textit{Existence of the maximum:} Note that the mixed integer programming \eqref{equ:CPopt_initial2} can be written as two convex maximizations, one for $z=0$ and one for $z=1$. In addition, note that for the case $z=1$, the feasible set is not closed (since  $ \tilde{q}_{f} < q_{NoN} \leq \tilde{q}_{p}$). Thus, in this case, we should use the ``supremum" instead of ``maximum". However, using Lemma~\ref{lemma:qfree}, we prove that the maximum of  \eqref{equ:CPopt_initial2} exists, and therefore the term maximum can be used safely. To prove this, consider the closure of the feasible set when $z=1$ formed by adding $\tilde{q}_{f}$ to the set, i.e. $\tilde{F}_1$. Since the feasible set associated with $z=0$ ($F_0$) and the closure of the feasible set associated to $z=1$ ($\tilde{F}_1$) are closed and bounded (compact) and the objective function is continuous for each $z\in \{0,1\}$, using Weierstrass Extreme Value Theorem, we can say that a maximum exists in each of these two  sets and for the overall optimization~\eqref{equ:CPopt_initial2}. If the maxima in $\tilde{F}_1$ is not $\tilde{q}_{f}$, then the maxima is in the original feasible set ($F_1$). Therefore the maximum of \eqref{equ:CPopt_initial2} exists. If not and $\tilde{q}_{f}$ is the maxima in the set $\tilde{F}_1$, then by Lemma~\ref{lemma:qfree}, the maximum in the set $F_0$ dominates the maximum of the set $\tilde{F}_1$. Thus, the  maxima of \eqref{equ:CPopt_initial2} is in $F_0$. Therefore, the maximum of \eqref{equ:CPopt_initial2} exists, and we can use the term maximum safely.  

Henceforth, the solution $(\tilde{q}^*_N,\tilde{q}^*_{NoN},z^*)$ of the maximization~\eqref{equ:CPopt_initial2} would be called the optimum strategies of the CP. This solution yields $x^*_N$ and subsequently $n^*_N$ and $n^*_{NoN}$  by \eqref{l-equ:EUs_linear}-I.  In addition, we denote the feasible set of \eqref{equ:CPopt_initial2} by $\mathcal{F}$. 

\textit{Finding the optimum strategies of the CP:}  To characterize the optimum strategies, we use the partition the feasible set in Table~\ref{l-table:subsets}-I, and characterize the candidate optimum strategies, i.e. the strategies that yield a higher payoff than the rest of the feasible solutions, in each subset. The overall optimum, which is chosen by the CP, is the one that yields the highest payoff among candidate strategies.

Note that although the maximum of the overall optimization exist, a maximum may not necessarily exist in each of the subsets. We will show in the next set of lemmas that the optimization in each subset of the feasible set can be reduced to a convex maximization over linear constraints. Thus, only the extreme points of the feasible sets may constitute the optimum solution. This means that the CP chooses her strategy among the discrete strategies, $q_N\in \{0,\tilde{q}_f\}$ and $q_{NoN}\in \{0,\tilde{q}_f,\tilde{q}_p\}$.

We now characterize optimum strategies of the CP, by considering each of the sub-feasible sets and characterizing the optimum solutions in each of them. In Lemma~\ref{lemma:optimumCPqs}, we prove that if $(q^*_N,q^*_{NoN})\in F^I$, then $q^*_N\in \{0,\tilde{q}_f\}$, $q^*_{NoN}\in\{0,\tilde{q}_{f},\tilde{q}_{p}\}$, and $(q^*_N,q^*_{NoN})\neq (0,0)$. In Lemma~\ref{lemma:xn<0}, we prove that if $(q^*_N,q^*_{NoN})\in F^L$, then ${q}^*_{NoN}=\tilde{q}_{f}$, if $q^*_{NoN}\in F^L_0$, and  ${q}^*_{NoN}=\tilde{q}_{p}$, if $q^*_{NoN}\in F^L_1$.  Moreover, $0\leq q^*_N\leq \frac{1}{\kappa_u}(\kappa_u {q}^*_{NoN}-t_{NoN}-\Delta p)$, and $\Delta p \leq \kappa_u {q}^*_{NoN}-t_{NoN}$. In Lemma~\ref{lemma:xn>1}, we prove that if $(q^*_N,q^*_{NoN})\in F^U$, then ${q}^*_{N}=\tilde{q}_f$ and $0\leq {q}^*_{NoN}\leq \frac{1}{\kappa_u}(\kappa_u \tilde{q}_f-t_{N}+\Delta p)$ and $\Delta p \geq t_N -\kappa_u \tilde{q}_f$. In addition, Lemmas \ref{lemma:tildep_condition} and  \ref{lemma:pos} provide some results that are useful in proving Lemmas~\ref{lemma:optimumCPqs}-\ref{lemma:xn>1}.  

\begin{lemma}\label{lemma:tildep_condition}
	In an optimum solution of \eqref{equ:CPopt_initial2}, $n_{NoN}\kappa_{ad}-z\tilde{p}\geq 0$. 
\end{lemma}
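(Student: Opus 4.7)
The plan is to rewrite the CP's payoff in \eqref{equ:CPopt_initial2} as
$$\pi_{CP} = n_N\kappa_{ad}q_N + q_{NoN}\bigl(n_{NoN}\kappa_{ad} - z\tilde{p}\bigr),$$
so that the quantity whose sign we want to control appears exactly as the coefficient of $q_{NoN}$. The argument then naturally splits on the value of $z^{*}$ at the optimum.

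If $z^{*} = 0$, the lemma is immediate, since $n_{NoN}\kappa_{ad} - z^{*}\tilde{p} = n_{NoN}\kappa_{ad} \geq 0$ from $n_{NoN}\in[0,1]$ and $\kappa_{ad}\geq 0$. For the main case $z^{*}=1$, I proceed by contradiction. Assume $n^{*}_{NoN}\kappa_{ad} - \tilde{p} < 0$ at an optimum $(q^{*}_N,q^{*}_{NoN})$ and exhibit the explicit feasible deviation $(z',q'_N,q'_{NoN}) = (0,\tilde{q}_f,\tilde{q}_f)$. This deviation is feasible: $q'_N=\tilde{q}_f\in[0,\tilde{q}_f]$, and with $z'=0$ the constraint $q'_{NoN}\leq \tilde{q}_f$ is met with equality. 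Because $q'_N=q'_{NoN}=\tilde{q}_f$, the quality terms coincide, and the Hotelling full-coverage identity $n'_N + n'_{NoN} = 1$ gives $\pi'_{CP} = \kappa_{ad}\tilde{q}_f$. On the optimal side, feasibility forces $q^{*}_N\leq\tilde{q}_f$ and $n^{*}_N\leq 1$, hence $n^{*}_N\kappa_{ad}q^{*}_N\leq \kappa_{ad}\tilde{q}_f$; and since $z^{*}=1$ forces $q^{*}_{NoN}>\tilde{q}_f\geq 0$, the contradiction hypothesis makes the second summand $q^{*}_{NoN}(n^{*}_{NoN}\kappa_{ad}-\tilde{p})$ strictly negative. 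Adding these yields $\pi^{*}_{CP} < \kappa_{ad}\tilde{q}_f = \pi'_{CP}$, contradicting optimality.

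The main technical point is picking a deviation so that both terms in the decomposition of $\pi_{CP}$ simplify cleanly at the same time; the choice $q'_N=q'_{NoN}=\tilde{q}_f$ accomplishes this by invoking $n'_N + n'_{NoN} = 1$ to kill the $n_N,n_{NoN}$ dependence on prices. The only mildly subtle corner is $\tilde{q}_f=0$, which would collapse $\pi'_{CP}$ to zero; but then $q^{*}_N=0$ is forced by $q_N\leq\tilde{q}_f$, while $z^{*}=1$ still gives $q^{*}_{NoN}>0$, so $\pi^{*}_{CP} = q^{*}_{NoN}(n^{*}_{NoN}\kappa_{ad}-\tilde{p}) < 0 = \pi'_{CP}$ and the strict improvement persists. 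I do not anticipate any further obstacle.
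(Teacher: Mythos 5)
Your proof is correct and follows essentially the same route as the paper's: decompose $\pi_{CP}$ so that $n_{NoN}\kappa_{ad}-z\tilde{p}$ is the coefficient of $q_{NoN}$, then contradict optimality by deviating to equal qualities with $z=0$ and using $n_N+n_{NoN}=1$ (the paper deviates to $q_{NoN}=q_N$ rather than to $(\tilde{q}_f,\tilde{q}_f)$, an immaterial difference). Your explicit treatment of the $z=0$ case and the $\tilde{q}_f=0$ corner is slightly more careful than the paper's, but the argument is the same.
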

\begin{proof}
	Suppose there exists an optimum answer such that $n_{NoN}\kappa_{ad}-z\tilde{p}< 0$. Note that $0\leq n_N,n_{NoN}\leq 1$ and qualities are non-negative. Thus, in this case, $\pi_{CP} <\kappa_{ad} q_N$. However,  choosing $z=0$ and $q_{NoN}=q_N$, yields a profit equal to $\kappa_{ad} q_N$. This contradicts the solution with $n_{NoN}\kappa_{ad}-z\tilde{p}< 0$ to be optimum. Thus,  the Lemma follows.
\end{proof}


\begin{lemma}\label{lemma:pos}
	In an optimum solution, the CP offers the content quality equal to one of the threshold at least on one ISP, i.e. $q^*_{N}=\tilde{q}_f$ OR $(q^*_{NoN}=\tilde{q}_{p} \text{ XOR } q^*_{NoN}=\tilde{q}_{f} )$, where XOR means only one the qualities is chosen.
\end{lemma}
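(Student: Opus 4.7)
The plan is to exploit the piecewise structure of $\pi_{CP}$ as a function of $(q_N,q_{NoN})$ for each fixed $z$, and to show that its maximum must lie at a corner of the quality rectangle where at least one of the threshold conditions $q_N=\tilde{q}_f$, $q_{NoN}=\tilde{q}_f$, $q_{NoN}=\tilde{q}_p$ is automatically active. Substituting $n_N,n_{NoN}$ from \eqref{l-equ:EUs_linear}-I into $\pi_{CP}=\kappa_{ad}(n_N q_N+n_{NoN} q_{NoN})-z\tilde{p}q_{NoN}$, one finds that on the interior region $F^I$ the payoff is a convex quadratic in $(q_N,q_{NoN})$ containing the term $\tfrac{\kappa_{ad}\kappa_u}{t_N+t_{NoN}}(q_N-q_{NoN})^2$; on $F^L$ (where $n_N=0$, $n_{NoN}=1$) it reduces to $(\kappa_{ad}-z\tilde{p})q_{NoN}$, which is linear and non-decreasing in $q_{NoN}$ by Lemma~\ref{lemma:tildep_condition}; on $F^U$ (where $n_N=1$, $n_{NoN}=0$) it reduces to $\kappa_{ad}q_N-z\tilde{p}q_{NoN}$, non-decreasing in $q_N$ and non-increasing in $q_{NoN}$.

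The argument then proceeds by a vertex analysis on each piece. The feasible set inside $F^I$, intersected with the quality rectangle $[0,\tilde{q}_f]\times[0,\tilde{q}_f]$ for $z=0$ (or $[0,\tilde{q}_f]\times[\tilde{q}_f,\tilde{q}_p]$ for $z=1$, using the closure argument discussed earlier for existence of the maximum), is a convex polytope bounded by the quality constraints and the two linear inequalities $0\le x_N\le 1$; a convex function on a convex polytope attains its maximum at a vertex. Each such vertex is either (i) a corner of the quality rectangle, in which case at least one of $q_N=\tilde{q}_f$, $q_{NoN}=\tilde{q}_f$, $q_{NoN}=\tilde{q}_p$ holds (with the sole exception of $(0,0)$ when $z=0$), or (ii) a point where a single quality bound meets an $x_N\in\{0,1\}$ face. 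In case (ii) one slides along the $x_N$ face into the adjacent region: crossing $x_N=0$ into $F^L$ by raising $q_{NoN}$ (or lowering $q_N$) weakly increases $\pi_{CP}$ until $q_{NoN}$ reaches a threshold; crossing $x_N=1$ into $F^U$ by raising $q_N$ weakly increases $\pi_{CP}$ until $q_N=\tilde{q}_f$. Since $\pi_{CP}$ on $F^L$ and $F^U$ is itself monotone in the relevant quality, the maxima in those two pieces already sit at rectangle corners as well.

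Hence, without loss of generality, the optimum lies at a corner of the feasible quality rectangle, and the only corner that violates the lemma's conclusion, namely $(q_N,q_{NoN},z)=(0,0,0)$, yields $\pi_{CP}=0$, which is strictly dominated by the feasible alternative $(\tilde{q}_f,\tilde{q}_f,0)$ giving $\pi_{CP}=\kappa_{ad}\tilde{q}_f>0$. This forces at least one of $q^*_N=\tilde{q}_f$, $q^*_{NoN}=\tilde{q}_f$, $q^*_{NoN}=\tilde{q}_p$ to hold at any optimum. I expect the main obstacle to be the rigorous handling of the $F^I/F^L$ and $F^I/F^U$ interfaces, in particular verifying that the slide from a mixed $F^I$ vertex along an $x_N$ face into the adjacent linear region is always feasibility-preserving and weakly improves $\pi_{CP}$, and that the non-closed face $q_{NoN}=\tilde{q}_f$ appearing in the $z=1$ rectangle is correctly reduced to the $z=0$ corner $(q_N,\tilde{q}_f)$ via the earlier closure argument.
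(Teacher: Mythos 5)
Your route is genuinely different from the paper's. The paper proves Lemma~\ref{lemma:pos} by a direct translation argument: if neither quality is at a threshold, shift both by a common $\epsilon>0$ along the direction $(1,1)$; this leaves $\Delta q$, hence $n_N$ and $n_{NoN}$, unchanged, and changes the payoff by $\epsilon\big(\kappa_{ad}n_N+(\kappa_{ad}n_{NoN}-z\tilde p)\big)\ge 0$ using Lemma~\ref{lemma:tildep_condition}. You instead run a vertex analysis of the (correctly computed) convex quadratic over the polytope, plus monotone slides on the $F^L$/$F^U$ pieces. Both arguments ultimately rest on the same two facts (nonnegativity of the marginal values $n_N\kappa_{ad}$ and $n_{NoN}\kappa_{ad}-z\tilde p$, and invariance of market shares under $(1,1)$-translations), and your interface handling and the closure/$z=1$ reduction via Lemma~\ref{lemma:qfree} are consistent with what the paper does.

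There is, however, one genuine gap: your argument shows that the maximum is \emph{attained} at a threshold point, but the lemma asserts that \emph{every} optimum has a quality at a threshold, and it is used that way downstream (Lemma~\ref{lemma:optimumCPqs} restricts the search for optima in $F^I$ to the slices $q_N=\tilde q_f$ or $q_{NoN}\in\{\tilde q_f,\tilde q_p\}$, which would be unjustified if some optimum lay off them). The quadratic on $F^I$ is only weakly convex --- it is affine along the direction $(1,1)$ --- and every improvement step in your proof (vertex selection, the slides across the $x_N$ faces, the monotonicity on $F^L$ and $F^U$) is only a weak inequality. So a convex face of optima containing non-threshold points is not excluded by "the max is attained at a corner," and the final sentence "this forces \dots at any optimum" does not follow. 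The paper closes exactly this loophole: the $(1,1)$-derivative $\kappa_{ad}n_N+(\kappa_{ad}n_{NoN}-z\tilde p)$ is a sum of two nonnegative terms and vanishes only when $n_N=0$ and $\kappa_{ad}n_{NoN}=z\tilde p$, in which case the payoff is $0$, strictly dominated by the always-achievable $\kappa_{ad}\tilde q_f>0$ from $(q_N,q_{NoN},z)=(\tilde q_f,\tilde q_f,0)$; hence the translation is strictly improving at any would-be non-threshold optimum. Adding this strictness check (and the analogous one on the $F^L$/$F^U$ slides, where the same degenerate case is the only obstruction) repairs your proof.
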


\begin{proof}
	Suppose not. Let the optimum qualities to be $\hat{q}_{NoN}<\tilde{q}_{f}$ if $z=0$, or $\tilde{q}_{f} <\hat{q}_{NoN}<\tilde{q}_{p}$ if $z=1$, and  $\hat{q}_N<\tilde{q}_f$. The difference between the qualities offered in two platforms is $\Delta q=\hat{q}_{NoN}-\hat{q}_N$. Consider $q'_{NoN}=\hat{q}_{NoN}+\epsilon$ and $q'_{N}=\hat{q}_{N}+\epsilon$ in which $\epsilon>0$ and is such that  ${q}'_{NoN}\leq \tilde{q}_{f}$ if $z=0$, or $\tilde{q}_{f} \leq {q}'_{NoN}\leq \tilde{q}_{p}$ if $z=1$, and  ${q}'_N\leq \tilde{q}_f$. Note that $z$ remains fixed and ${q}'_{NoN}-{q}'_N=\hat{q}_{NoN}-\hat{q}_N=\Delta q$. Since $\Delta q$ is the same for two cases, the number of subscriber to each ISP is the same for both cases by \eqref{l-equ:EUs_linear}-I. Lemma~\ref{lemma:tildep_condition}, \eqref{l-equ:payoffCP_new}-I, and the fact that $n_N,n_{NoN}\geq 0$ yield that $\pi'_{CP}\geq \hat{\pi}_{CP}$, where $\hat{\pi}_{CP}$ (, respectively  $\pi'_{CP}$) is the payoff of the CP when the vector of qualities is $(\hat{q}_N,\hat{q}_{NoN})$ (, respectively, $({q}'_N,{q}'_{NoN})$). 
	
	We now prove if $(\hat{q}_N,\hat{q}_{NoN})$ is the optimum solution, then the inequality is strict, i.e. $\pi'_{CP}> \hat{\pi}_{CP}$. Suppose not, and  $\pi'_{CP}= \hat{\pi}_{CP}$. This only happens if $n_{NoN}\kappa_{ad}-z\tilde{p}=0$ and $n_N=0$.  Note that in this case, $\pi'_{CP}= \hat{\pi}_{CP}=0$. However, in the previous paragraph, we argued that with $q_N=\tilde{q}_f$ and $q_{NoN}=\tilde{q}_f$, the CP can get a payoff of $\kappa_{ad}\tilde{q}_f>0$. This contradicts the assumption that $(\hat{q}_N,\hat{q}_{NoN})$ is the optimum solution. Thus, $\pi'_{CP}> \hat{\pi}_{CP}$.
	
	This inequality contradicts the assumption that $(\hat{q}_N,\hat{q}_{NoN})$ is the optimum solution. Thus, the result follows.
\end{proof}

Clearly, the decision of the CP about the vector of qualities depends on the parameter $x_N$ \eqref{l-equ:xn}-I, and subsequently on $n_N$. First, we characterize the candidate strategies of the CP when $0\leq x_N\leq 1$, i.e. $(q^*_N,q^*_{NoN})\in F^I$ and therefore $n_N=x_N$. Then, we consider the case of $x_N< 0$ ($n_N=0$ and $(q^*_N,q^*_{NoN})\in F^L$) and $x_N> 1$ ($n_N=1$ and $(q^*_N,q^*_{NoN})\in F^U$). Finally, we combine both cases to determine the optimum strategies of the CP.  In the following lemma, we characterize the candidate optimum qualities in $F^I$, i.e. the strategies by which $0\leq x_N\leq 1$. 
\begin{lemma}\label{lemma:optimumCPqs}
	If $(q^*_{N},q^*_{NoN})\in F^I$, i.e. optimum strategies are such that $0< x^*_N< 1$, then $q^*_N\in\{0,\tilde{q}_f\}$, $q^*_{NoN}\in\{0,\tilde{q}_{f},\tilde{q}_{p}\}$, $(q^*_N,q^*_{NoN})\neq (0,0)$. 
\end{lemma}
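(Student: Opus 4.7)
The plan is to exploit convexity of $\pi_{CP}$ on $F^I$ and apply the extreme-point theorem: a convex function on a compact convex polytope attains its maximum at a vertex. For each fixed $z \in \{0,1\}$, I would substitute $n_N = x_N$ from \eqref{l-equ:EUs_linear}-I and $n_{NoN} = 1 - x_N$ into \eqref{equ:CPopt_initial2} to obtain
\[
\pi_{CP}(q_N, q_{NoN}) = (\kappa_{ad} - z\tilde{p})\, q_{NoN} + \frac{\kappa_{ad}}{t_N + t_{NoN}} \left[ (t_{NoN} + p_{NoN} - p_N)(q_N - q_{NoN}) + \kappa_u (q_N - q_{NoN})^2 \right],
\]
which is a quadratic in $(q_N, q_{NoN})$ whose Hessian evaluates to
\[
H = \frac{2\kappa_{ad}\kappa_u}{t_N + t_{NoN}} \begin{pmatrix} 1 & -1 \\ -1 & 1 \end{pmatrix},
\]
with eigenvalues $0$ and $\tfrac{4\kappa_{ad}\kappa_u}{t_N + t_{NoN}} > 0$. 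Hence $H \succeq 0$ and $\pi_{CP}$ is convex on $F^I$ for each $z$.

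Next I would describe the feasible region inside $F^I$ for each $z$ as a convex polytope: the intersection of the rectangle $[0,\tilde{q}_f] \times R_z$ (with $R_0 = [0, \tilde{q}_f]$ and $R_1 = (\tilde{q}_f, \tilde{q}_p]$) with the strip $\{0 \leq x_N \leq 1\}$, whose boundaries are linear in $q_N - q_{NoN}$. By the extreme-point theorem, the optimum $(q^*_N, q^*_{NoN})$ must be a vertex of this polytope. Such vertices fall into two families: (a) corners of the rectangle, where $q^*_N \in \{0, \tilde{q}_f\}$ and $q^*_{NoN}$ is an endpoint of $R_z$; and (b) intersections of a strip boundary $\{x_N = 0\}$ or $\{x_N = 1\}$ with a rectangle edge. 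For the half-open case $z = 1$, I would first pass to the closure and invoke Lemma~\ref{lemma:qfree} to discard the added edge $q_{NoN} = \tilde{q}_f$, which is dominated by the corresponding configuration in $F_0$.

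The strict hypothesis $0 < x^*_N < 1$ eliminates every type (b) vertex, so $(q^*_N, q^*_{NoN})$ must be a corner of the rectangle. Unioning the corner candidates over $z \in \{0,1\}$ yields $q^*_N \in \{0, \tilde{q}_f\}$ and $q^*_{NoN} \in \{0, \tilde{q}_f, \tilde{q}_p\}$. Finally, $(q^*_N, q^*_{NoN}) = (0,0)$ is ruled out because $\pi_{CP}(0,0,z) = 0$ for either $z$, whereas the feasible alternative $(q_N, q_{NoN}, z) = (\tilde{q}_f, \tilde{q}_f, 0)$ yields $\pi_{CP} = \kappa_{ad}\tilde{q}_f > 0$, contradicting global optimality.

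The main obstacle will be the careful enumeration of polytope vertices and the treatment of the half-open boundary $q_{NoN} = \tilde{q}_f$ in $F_1$; Lemma~\ref{lemma:qfree} neutralizes the latter, and the strict inequality on $x^*_N$ cleanly discards the strip-boundary vertices, leaving exactly the corners as candidates.
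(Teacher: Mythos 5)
Your overall strategy---reduce to a quadratic objective, use convexity to push the optimum to extreme points of the feasible polytope, handle the half-open set for $z=1$ by passing to the closure and invoking Lemma~\ref{lemma:qfree}, and kill $(0,0)$ by comparison with $(\tilde{q}_f,\tilde{q}_f,0)$---is the same idea as the paper's, just packaged as a single two-dimensional argument instead of the paper's two-step reduction (first pin one coordinate to a threshold via Lemma~\ref{lemma:pos}, then use \emph{strict} convexity of the resulting one-dimensional problem in $\Delta q$). Your Hessian computation is correct, and the enumeration of vertices and the use of $0<x^*_N<1$ to discard the strip-boundary vertices are fine.

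The gap is in the sentence ``the optimum $(q^*_N,q^*_{NoN})$ must be a vertex of this polytope.'' Your Hessian has a zero eigenvalue, so $\pi_{CP}$ is convex but not strictly convex, and the extreme-point (Bauer) principle then only guarantees that \emph{some} maximizer is a vertex, not that \emph{every} maximizer is one---whereas the lemma is a statement about any optimum lying in $F^I$. Concretely, along the null direction $(1,1)$ the objective is affine with slope $\kappa_{ad}-z\tilde{p}$; if this slope vanished, every point of a $(1,1)$-segment through a maximizing corner (e.g.\ $(\tilde{q}_f-s,\tilde{q}_p-s)$ for small $s>0$) would also be optimal and would violate the conclusion. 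Ruling this out requires exactly the ingredient the paper supplies in Lemma~\ref{lemma:tildep_condition} (equivalently, the strict version of Lemma~\ref{lemma:pos} on $F^I$): at any optimum, $n_{NoN}\kappa_{ad}-z\tilde{p}\geq 0$, hence $\kappa_{ad}-z\tilde{p}\geq \kappa_{ad}n_N>0$ on $F^I$ since $n_N=x_N>0$ there. With that, the objective strictly increases along $(1,1)$ and is strictly convex along every direction not parallel to $(1,1)$ (in particular along each axis-parallel rectangle edge), which together force every maximizer in $F^I$ to a rectangle corner. Add this step---citing or reproving Lemma~\ref{lemma:tildep_condition}---and your argument closes.
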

\begin{remark}
	Note that to be in $F^I$ and from \eqref{l-equ:EUs_linear}-I, $(q^*_N,q^*_{NoN})$ should be such that $\frac{\Delta p-t_N}{\kappa_u}< \Delta q^*=q^*_{NoN}-q^*_N< \frac{\Delta p + t_{NoN}}{\kappa_u}$. In Lemma~\ref{lemma:pos}, we have proved that the quality on at least one of the ISPs is equal to a threshold. In this lemma, we prove that the qualities offered on both ISPs are equal to thresholds or one of them is zero. 
\end{remark}
\begin{proof}
	We would like to characterize the optimum qualities in $F^I=F^I_0\bigcup F^I_1$, i.e. optimum strategies for which $0< x_N< 1$. 
	First note that by Lemma~\ref{lemma:pos}, either (a) $q^*_{N}=c$ and $q^*_{NoN}=c+\Delta q$ where $c=\tilde{q}_f$, or (b) $q^*_{NoN}=c$ and $q^*_{N}=c-\Delta q$ where $c\in\{\tilde{q}_{f},\tilde{q}_{p}\}$. Note that the feasible sets for each case can be rewritten as a function of $\Delta q$. We characterize the candidate solutions for each case:
	
	\begin{itemize}
		\item Case (a):  The feasible set for the case (a) is $\Delta q\in G_0=[-c, \tilde{q}_{f}-c]$ (for $z=0$) and $\Delta q\in G_1=(\tilde{q}_{f}-c, \tilde{q}_{p}-c]$  (for $z=1$), where $c=\tilde{q}_f$. Let $G=G_0\cup G_1$. Note that if $0\leq x_N\leq 1$, then $n_N=x_N$ and $n_{NoN}=1-x_N$. Thus, \eqref{equ:CPopt_initial2} can be written as,
		
		\be \label{equ:obj1}
		\ba 
	&	\max_{z,\Delta q \in G=G_0\cup G_1}  \pi_{CP}(c,c+\Delta q,z)=\\
	&\max_{z,\Delta q \in G}  \big{(} t_{NoN}-\kappa_{u}\Delta q+p_{NoN}-p_{N}\big{)} \kappa_{ad}c+\\
		& +\big{(} t_{N}+\kappa_{u}\Delta q+p_{N}-p_{NoN}\big{)} \kappa_{ad}(c+\Delta q)-z\tilde{p}(c+\Delta q)
		\ea 
		\ee
\normalsize		
		
		Note that although the feasible set  $G_1$ is not closed, we used maximum instead of supremum. We will show that the maximum of \eqref{equ:obj1} exists. Thus, the term maximum can be used safely. Note that the objective functions of \eqref{equ:obj1} is a strictly convex functions of $\Delta q$. Note that henceforth wherever we refer to maximum without further clarification, we refer to the solution of \eqref{equ:CPopt_initial2}. 
		
		Let $\tilde{G}_1$  be the closure of $G_1$, then $\tilde{G}_1 \backslash G_1=\{\tilde{q}_{f}-c\}$. First, we prove that the maximum of \eqref{equ:obj1} exists.  Note that $G_0$ and $\tilde{G}_1$ are closed and bounded (compact) and the objective function of \eqref{equ:obj1} is continuous with respect to $\Delta q$ for each $z\in \{0,1\}$. Using Weierstrass Extreme Value Theorem, we can say that a maxima for $\pi_{CP}(c,\Delta q+c,z=0)$ and $\pi_{CP}(c,\Delta q+c,z=1)$ exists in each of two sets  $G_0$ and $\tilde{G}_1$, respectively. Thus, the overall maximum for the objective function of~\eqref{equ:obj1} over $G_0$ and $\tilde{G}_1$ exists. Now, consider two cases:
		
		\begin{enumerate}
			\item If the maxima of  $\pi_{CP}(c,\Delta q+c,z=1)$ in $\tilde{G}_1$ is not $\Delta q=\tilde{q}_{f}-c$, then the maxima is in the original feasible set ($G_1$). Therefore the maximum of \eqref{equ:obj1} exists (since $G_0$ is closed).
			\item  If $\Delta q=\tilde{q}_{f}-c$ is the maxima  of  $\pi_{CP}(c,\Delta q+c,z=1)$ in the set $\tilde{G}_1$, then by Lemma~\ref{lemma:qfree}, the maximum of  $\pi_{CP}(c,\Delta q+c,z=0)$ in the set $G_0$ greater than or equal to the maximum of $\pi_{CP}(c,\Delta q+c,z=1)$  in $\tilde{G}_1$. Thus, the maxima of  \eqref{equ:obj1} over $G_0$ and $G_1$ exists and is in $G_0$. 
		\end{enumerate}
		
		Now, that we have proved the existence of the maximum for \eqref{equ:obj1}, we aim to find all the candidate optimum solutions. Note that  the set $G_0$ is closed. Thus, by the strict convexity of the objective function of \eqref{equ:obj1}, the candidate optimums in $G_0$ are the extreme points of $G_0$. Using the definition of this feasible set, the candidate answers are (i) $q^*_N=\tilde{q}_f$ and $q^*_{NoN}\in\{0,\tilde{q}_{f}\}$.   
		
		Now, consider the feasible set $\tilde{G}_1$, and consider two cases: 
		\begin{enumerate}
			\item If $\Delta q=\tilde{q}_{f}-c$ is not the unique maxima of \eqref{equ:obj1} in $\tilde{G}_1$, then the maxima is in $G_1$ or $G_0$. The candidate answers in the set $G_0$ are already characterized. In addition, by strict convexity of the objective function, the maxima can only be an extreme point of $\tilde{G}_1$. Since $\tilde{q}_{f}-c$ is not the unique maxima of \eqref{equ:CPopt_initial2} in $\tilde{G}_1$, $\tilde{q}_{p}$ is a maxima of \eqref{equ:CPopt_initial2} in $G_1$. Thus, by strong convexity, for all $\Delta q\in G_1$ $\pi_{CP}(c,\tilde{q}_{p},z=1)>\pi_{CP} (c,\Delta q+c,z=1)$, and the only candidate optimum solution over $G_1$ is at $\Delta q=\tilde{q}_{p}-c\in G_1$ which yields   (ii) $q^*_N=\tilde{q}_f$ and $q^*_{NoN}=\tilde{q}_{p}$.
			
			\item If $\tilde{q}_{f}-c$ is the unique maxima in  $\tilde{G}_1$, then  $\pi_{CP}(c,\tilde{q}_{f},z=1)>\pi_{CP} (c,\Delta q+c,z=1)$ for $\Delta q\in G_1$.  By Lemma~\ref{lemma:qfree}, $\pi_{CP}(c,\tilde{q}_{f},z=0)\geq \pi_{CP}(c,\tilde{q}_{f},z=1)$. Therefore, the overall maximum of \eqref{equ:obj1} is in the set $G_0$, and is as characterized previously.
		\end{enumerate}
		
		\item Case (b): The feasible set for the case (b) is $\Delta q\in \hat{G}_0= [c-\tilde{q}_f, c]$ where $c=\tilde{q}_{f}$ (for $z=0)$, and  $\Delta q\in \hat{G}_1=[c-\tilde{q}_f, c]$ where $c=\tilde{q}_{p}$ (for $z=1$). For this case,  \eqref{equ:CPopt_initial2} can be written as:
		\be \label{equ:obj2}
		\ba 
		&\max_{z,\Delta q \in \hat{G}=\hat{G}_0\cup \hat{G}_1} \pi_{CP}(c-\Delta q,c,z)=\\
		&\max_{z,\Delta q \in \hat{G}} \kappa_{ad} \big{(} t_{NoN}-\kappa_{u}\Delta q+p_{NoN}-p_{N}\big{)} (c-\Delta q)+\\
		& \qquad \qquad \quad  +\kappa_{ad}c \big{(} t_{N}+\kappa_{u}\Delta q+p_{N}-p_{NoN}\big{)}-z\tilde{p}c
		\ea 
		\ee
		Note that the feasible set is closed. Thus the term maximum is fine. In addition, the objective functions of \eqref{equ:obj2} are strictly convex functions of $\Delta q$. Thus, using the strict convexity and the definition of the feasible set, i.e. $c-\tilde{q}_f\leq \Delta q^*\leq c$ where $c$ is $\tilde{q}_{f}$ and $\tilde{q}_{p}$, respectively, we can get the other set of candidate answers, (iii) $q^*_{NoN}=\tilde{q}_{f}$ and $\tilde{q}^*_N\in \{0,\tilde{q}_f\}$, and (iv) $\tilde{q}^*_{NoN}=\tilde{q}_{p}$ and $\tilde{q}^*_N\in\{0,\tilde{q}_f\}$. 
	\end{itemize}

	From, (i), (ii), (iii), and (iv), the result follows.
	

	

\end{proof}

The following corollary follows immediately from Lemma  \ref{lemma:optimumCPqs}: 

\begin{corollary}\label{lemma:candidatesCP}
	The possible candidate optimum strategies by which $0< x^*_N< 1$, i.e.  $(q^*_N,q^*_{NoN})\in F^I$, are $(1)$  $(0,\tilde{q}_{f})$, $(2)$ $(\tilde{q}_f,0)$, and $(3)$ $(\tilde{q}_f,\tilde{q}_{f})$ when $z=0$, i.e. $(q^*_N,q^*_{NoN})\in F^I_0$, and $(1)$ $(0,\tilde{q}_{p})$ and $(2)$ $(\tilde{q}_f,\tilde{q}_{p})$ when $z=1$, i.e. $(q^*_N,q^*_{NoN})\in F^I_1$. Note that the necessary and sufficient condition for each of these candidate outcomes to be in $F^I$ is $\frac{\Delta p-t_N}{\kappa_u}< \Delta q^*< \frac{\Delta p + t_{NoN}}{\kappa_u}$.
\end{corollary}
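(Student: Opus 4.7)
The plan is to lean on Lemma~\ref{lemma:optimumCPqs} as a near-complete description of the candidate optima inside $F^I$, and then filter the resulting finite list by the side-payment indicator $z$ and rearrange the Hotelling identity to obtain the membership condition.

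First, I would invoke Lemma~\ref{lemma:optimumCPqs}, which asserts that inside $F^I$ any optimum must satisfy $q^*_N \in \{0, \tilde{q}_f\}$ and $q^*_{NoN} \in \{0, \tilde{q}_f, \tilde{q}_p\}$ together with $(q^*_N, q^*_{NoN}) \neq (0,0)$. This leaves exactly five admissible pairs; everything that follows is a finite case distinction.

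Second, I would split these five pairs according to the value of $z$, using the constraints of the maximization \eqref{equ:CPopt_initial2}. When $z = 0$ the constraint $q_{NoN} \leq \tilde{q}_f$ rules out $q^*_{NoN} = \tilde{q}_p$, leaving the three pairs $(0,\tilde{q}_f)$, $(\tilde{q}_f,0)$, $(\tilde{q}_f,\tilde{q}_f)$; when $z = 1$ the constraint $\tilde{q}_f < q_{NoN} \leq \tilde{q}_p$ forces $q^*_{NoN} = \tilde{q}_p$, leaving the two pairs $(0,\tilde{q}_p)$ and $(\tilde{q}_f,\tilde{q}_p)$. This matches the enumeration in the statement exactly.

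Third, I would derive the necessary and sufficient feasibility condition by substituting any candidate $(q_N, q_{NoN})$ into the Hotelling expression \eqref{l-equ:EUs_linear}-I for $x_N$ and imposing $0 < x_N < 1$. Writing $\Delta q = q_{NoN} - q_N$ and $\Delta p = p_{NoN} - p_N$, a single algebraic rearrangement yields $\frac{\Delta p - t_N}{\kappa_u} < \Delta q^* < \frac{\Delta p + t_{NoN}}{\kappa_u}$, which is precisely the claimed condition for each candidate to lie in $F^I$. There is no real obstacle here: the preceding lemma does all the heavy lifting, and the remaining work is a case split plus a one-line rearrangement. The only point worth being pedantic about is verifying that each listed pair can indeed realize $0 < x_N < 1$ for some parameter choice so that the statement is not vacuous, which is immediate from the two-sided inequality on $\Delta q$.
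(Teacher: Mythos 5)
Your proposal is correct and follows essentially the same route as the paper, which derives this corollary directly from Lemma~\ref{lemma:optimumCPqs} together with the $z$-dependent constraints of \eqref{equ:CPopt_initial2} and the two-sided inequality on $\Delta q$ implied by $0<x_N<1$ in \eqref{l-equ:EUs_linear}-I. Your write-up merely makes explicit the finite case split and the one-line rearrangement that the paper leaves implicit.
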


Note that Corollary~\ref{lemma:candidatesCP} lists all the candidate answers by which $0< x_N< 1$. In the next three lemmas,  we focus on the candidate answers when $x_N\geq1$ or $x_N\leq 0$. 

\begin{lemma}\label{lemma:necessary}
	If $\Delta p> \kappa_u \tilde{q}_{f}-t_{NoN}$ then $x_N> 0$ for all choices of $q_{NoN}$ and $q_N$ in the feasible set $F_0$ \footnote{That is $F^L_0$ is an empty set.}. Similarly,  If $\Delta p> \kappa_u \tilde{q}_{p}-t_{NoN}$ then $x_N> 0$ for all choices of $q_{NoN}$ and $q_N$ in the feasible set $F_1$ \footnote{That is $F^L_1$ is an empty set.}.  In addition,  if $\Delta p<  t_N -\kappa_u \tilde{q}_f$ then $x_N< 1$ for all choices of $q_{NoN}$ and $q_N$ in the overall feasible set $\mathcal{F}$\footnote{That is $F^U$ is empty.}.
\end{lemma}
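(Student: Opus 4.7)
The plan is to prove each of the three claims directly from the closed-form expression for $x_N$ given in \eqref{l-equ:EUs_linear}-I, namely
\[
x_N \;=\; \frac{t_{NoN}+\kappa_u(q_N-q_{NoN})+\Delta p}{t_N+t_{NoN}},
\]
together with the bounds on $q_N$ and $q_{NoN}$ that define each feasible subset. Since $t_N+t_{NoN}>0$, the sign of $x_N$ (respectively of $x_N-1$) is controlled by the sign of the numerator $t_{NoN}+\kappa_u(q_N-q_{NoN})+\Delta p$ (respectively of $\kappa_u(q_N-q_{NoN})+\Delta p-t_N$), so each implication reduces to bounding $q_N-q_{NoN}$ uniformly over the relevant set.

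For the first claim I would use that on $F_0$ we have $q_N\in[0,\tilde q_f]$ and $q_{NoN}\in[0,\tilde q_f]$, so $q_N-q_{NoN}\ge-\tilde q_f$. Plugging in gives
\[
x_N \;\ge\; \frac{t_{NoN}-\kappa_u\tilde q_f+\Delta p}{t_N+t_{NoN}},
\]
which is strictly positive under the hypothesis $\Delta p>\kappa_u\tilde q_f-t_{NoN}$. For the second claim, on $F_1$ we have $q_{NoN}\in(\tilde q_f,\tilde q_p]$ and $q_N\in[0,\tilde q_f]$, so $q_N-q_{NoN}\ge-\tilde q_p$. The same substitution gives $x_N\ge (t_{NoN}-\kappa_u\tilde q_p+\Delta p)/(t_N+t_{NoN})>0$ under $\Delta p>\kappa_u\tilde q_p-t_{NoN}$.

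For the third claim, note that $x_N<1$ is equivalent to $\kappa_u(q_N-q_{NoN})+\Delta p<t_N$. Over the entire feasible set $\mathcal{F}$ we have $q_N\le\tilde q_f$ and $q_{NoN}\ge 0$, hence $q_N-q_{NoN}\le\tilde q_f$. Therefore
\[
\kappa_u(q_N-q_{NoN})+\Delta p \;\le\; \kappa_u\tilde q_f+\Delta p,
\]
which is strictly less than $t_N$ exactly when $\Delta p<t_N-\kappa_u\tilde q_f$, yielding $x_N<1$ uniformly.

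There is no real obstacle here; all three parts are one-line numerator estimates using the worst case of $q_N-q_{NoN}$ within each feasible set. The only thing to be careful about is using the correct bound on $q_{NoN}$ for each set ($\tilde q_f$ on $F_0$ versus $\tilde q_p$ on $F_1$), which is exactly what differentiates the first two parts of the lemma.
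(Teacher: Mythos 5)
Your proof is correct and is essentially the paper's argument: the paper rewrites $x_N>0$ (resp.\ $x_N<1$) as $\Delta p>\kappa_u(q_{NoN}-q_N)-t_{NoN}$ (resp.\ $\Delta p<t_N+\kappa_u(q_{NoN}-q_N)$) and then bounds $q_{NoN}-q_N$ by $\tilde{q}_f$ on $F_0$, by $\tilde{q}_p$ on $F_1$, and from below by $-\tilde{q}_f$ on $\mathcal{F}$, which is algebraically identical to your direct estimate of the numerator of $x_N$.
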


\begin{proof}
	First note that  from \eqref{l-equ:EUs_linear}-I, $x_N> 0$ is equivalent to:
	\begin{equation}\label{equ:equivalent}
	\Delta p > \kappa_u (q_{NoN}-q_N)-t_{NoN}
	\end{equation} 
	
	Consider $\Delta p> \kappa_u \tilde{q}_{f}-t_{NoN}$ (respectively, $\Delta p>\kappa_u \tilde{q}_{p}-t_{NoN}$), if $(q_N,q_{NoN})\in F_0$ (respectively, $(q_N,q_{NoN})\in F_1$) then  $\Delta p > \kappa_u \tilde{q}_{f}-t_{NoN}\geq  \kappa_u(q_{NoN}-q_N)-t_{NoN}$ (respectively, $\Delta p > \kappa_u \tilde{q}_{p}-t_{NoN}\geq \kappa_u(q_{NoN}-q_N)-t_{NoN}$) for every choice of $(q_N,q_{NoN})\in F_0$ (respectively, $(q_N,q_{NoN})\in F_1$). The inequality $\Delta p>\kappa_u(q_{NoN}-q_N)-t_{NoN}$ yields $x_N> 0$. The first result of the lemma follows. 
	
	Now, we prove the second statement. From \eqref{l-equ:EUs_linear}-I, $x_N< 1$ is equivalent to:
	\be\label{equ:equivalent2} 
	\Delta p < t_N+\kappa_u (q_{NoN}-q_N)
	\ee  
	
	Consider $\Delta p<  t_N-\kappa_u \tilde{q}_f$. Note that:
	
	$$
	\Delta p< t_N-\kappa_u \tilde{q}_f\leq t_N+ \kappa_u(q_{NoN}-q_N)
	$$
	for every choice of $0 \leq q_N\leq \tilde{q}_f$ and $0 \leq q_{NoN}\leq \tilde{q}_{p}$ which are all the possible choices in $\mathcal{F}$. The inequality $\Delta p< t_N+ \kappa_u(q_{NoN}-q_N)$ yields that $x_N< 1$. The result follows.
\end{proof}

The following lemma characterizes all the candidate answers when $x^*_N\leq 0$, and characterize the necessary condition on parameters for this solutions to be feasible.

\begin{lemma}\label{lemma:xn<0}
	Let $(q^*_N,q^*_{NoN})\in F^L$. If $(q^*_N,q^*_{NoN})\in F^L_0$ (respectively, if $(q^*_N,q^*_{NoN})\in F^L_1$), then ${q}^*_{NoN}=\tilde{q}_{f}$ (respectively, ${q}^*_{NoN}=\tilde{q}_{p}$). Moreover, for every $x\in [0, \frac{1}{\kappa_u}(\kappa_u \tilde{q}_{f}-t_{NoN}-\Delta p)]$ (respectively, $x\in [0, \frac{1}{\kappa_u}(\kappa_u \tilde{q}_{p}-t_{NoN}-\Delta p)]$) and $\Delta p \leq \kappa_u \tilde{q}_{f}-t_{NoN}$ (respectively, $\Delta p \leq \kappa_u \tilde{q}_{p}-t_{NoN}$), 
	$(x,\tilde{q}_{f})$ (respectively, $(x,\tilde{q}_{p})$) constitutes an optimum solution in $F^L_0$ (respectively, in $F^L_1$).
\end{lemma}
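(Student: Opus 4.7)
The plan is to exploit the collapse of the objective that occurs on $F^L$. Since $F^L$ is precisely the region where $x_N \leq 0$, we have $n_N = 0$ and $n_{NoN} = 1$ for every feasible point. Substituting these into the CP payoff from \eqref{equ:CPopt_initial2} eliminates the $q_N$ term and yields
\begin{equation}
\pi_{CP}(q_N,q_{NoN},z) \;=\; (\kappa_{ad} - z\tilde{p})\, q_{NoN},
\end{equation}
which is both independent of $q_N$ and linear (hence monotone) in $q_{NoN}$. A key observation is that at any optimum the coefficient $\kappa_{ad} - z\tilde{p}$ is non-negative: for $z=0$ this is trivial, and for $z=1$ it is exactly the content of Lemma~\ref{lemma:tildep_condition} specialized to $n_{NoN}=1$.

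Given this monotonicity, the optimizer pushes $q_{NoN}$ to the upper endpoint of its subset-specific feasible interval. On $F^L_0$, where $0 \leq q_{NoN} \leq \tilde{q}_f$, this forces $q^*_{NoN} = \tilde{q}_f$; on $F^L_1$, where $\tilde{q}_f < q_{NoN} \leq \tilde{q}_p$, it forces $q^*_{NoN} = \tilde{q}_p$. Attainability is immediate in both cases since each upper endpoint belongs to the respective (partially) closed interval, so the ``max'' is genuine. This settles the first assertion of the lemma.

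For the second assertion, I would use the fact that $\pi_{CP}$ is independent of $q_N$ on $F^L$: every $q_N$ that keeps the pair in $F^L$ with the chosen $q^*_{NoN}$ is equally optimal. To characterize this set, I would translate the defining condition $x_N \leq 0$ of $F^L$ through \eqref{l-equ:EUs_linear}-I with $q_{NoN}$ fixed, giving
\begin{equation}
\Delta p \;\leq\; \kappa_u q^*_{NoN} - \kappa_u q_N - t_{NoN}
\quad\Longleftrightarrow\quad
q_N \;\leq\; \tfrac{1}{\kappa_u}\bigl(\kappa_u q^*_{NoN} - t_{NoN} - \Delta p\bigr),
\end{equation}
which, combined with $q_N \geq 0$, yields exactly the interval stated in the lemma. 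The requirement that this interval be non-empty, namely $\Delta p \leq \kappa_u q^*_{NoN} - t_{NoN}$, is the final condition displayed in the lemma and is also consistent with Lemma~\ref{lemma:necessary}.

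There is no real difficulty in this proof: once the reduction $n_N=0$, $n_{NoN}=1$ is made, the argument is just a one-variable monotonicity check followed by unpacking the half-plane inequality $x_N \leq 0$. The only subtlety worth flagging is the invocation of Lemma~\ref{lemma:tildep_condition} to ensure the coefficient of $q_{NoN}$ in the reduced objective is non-negative when $z=1$; without this, one could not conclude that the optimum sits at the upper endpoint rather than at $q_{NoN}=\tilde{q}_f^+$.
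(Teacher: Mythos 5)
Your proposal is correct and follows essentially the same route as the paper's proof: collapse the objective on $F^L$ to $(\kappa_{ad}-z\tilde p)q_{NoN}$ using $n_N=0$, $n_{NoN}=1$, invoke Lemma~\ref{lemma:tildep_condition} for the sign of the coefficient to push $q_{NoN}$ to the upper threshold, and then unpack $x_N\leq 0$ to obtain the interval of equally optimal $q_N$ together with the non-emptiness condition on $\Delta p$. The only step the paper spells out slightly more carefully is that raising $q_{NoN}$ preserves the constraint $\Delta p\leq \kappa_u(q_{NoN}-q_N)-t_{NoN}$, so the upper endpoint is indeed still in $F^L$; your argument implicitly uses this and is otherwise identical.
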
 

\begin{proof}
	From \eqref{l-equ:EUs_linear}-I, $x_N\leq  0$ is equivalent to:
	\begin{equation}\label{equ:equivalent}
	\Delta p \leq \kappa_u (q_{NoN}-q_N)-t_{NoN}
	\end{equation} 
	
	Note that from \eqref{l-equ:EUs_linear}-I, if $x_N\leq 0$ then $n_N=0$ and $n_{NoN}=1$. In this case, the payoff of the CP is,
	\be \label{equ:increasingxn<0}
	\pi_G=\kappa_{ad} q_{NoN}-z\tilde{p}q_{NoN}
	\ee
	Note that the value of the payoff is independent of $q_N$ as long as $n_N=0$, and from \eqref{l-equ:EUs_linear}-I $n_N$ is a function of $q_N$ and $q_{NoN}$. In addition, note that if there exist a $q_{NoN}$ that satisfies the constraint  $\Delta p \leq \kappa_u (q_{NoN}-q_N)-t_{NoN}$ (and therefore $n_N=0$) then $q'_{NoN}\geq q_{NoN}$ also satisfies this constraint. Therefore for $q'_{NoN}\geq q_{NoN}$, $n_{N}=0$ and \eqref{equ:increasingxn<0} is true. Note  that from Lemma~\ref{lemma:tildep_condition}, \eqref{equ:increasingxn<0} is an increasing function of $q_{NoN}$. Thus,  if $x_N\leq 0$, then $q^*_{NoN}=\tilde{q}_{f}$ if  $(q^*_N,q^*_{NoN})\in F^L_0$ or $q^*_{NoN}=\tilde{q}_{p}$ if $(q^*_N,q^*_{NoN})\in F^L_1$ (using the feasible sets in Table~\ref{l-table:subsets}-I and their definitions).

	Using \eqref{equ:equivalent}, $(q^*_N,q^*_{NoN})\in F^L_0$ (respectively, $(q^*_N,q^*_{NoN})\in F^L_1$) if and only if,
	
	\footnotesize
	\be \label{equ:local1}
	{q}^*_N\leq \frac{1}{\kappa_u}(\kappa_u \tilde{q}_{f}-\Delta p-t_{NoN}) \quad \bigg{(}\text{respectively, } {q}^*_N\leq \frac{1}{\kappa_u}(\kappa_u \tilde{q}_{p}-\Delta p-t_{NoN})  \bigg{)}
	\ee
	\normalsize
	
	Note that every $q^*_N$ that satisfies \eqref{equ:local1} is an optimum answer since when $(q^*_N,q^*_{NoN})\in F^L$, $n_N=0$ and $q^*_N$ is of no importance. Also, note that $q_N\geq 0$. Thus,  \eqref{equ:local1} is true for at least one $q^*_N$ if $\Delta p\leq \kappa_u \tilde{q}_{f}-t_{NoN}$ (respectively, $\Delta p\leq \kappa_u \tilde{q}_{p}-t_{NoN}$). The result follows.
\end{proof}

The following lemma characterizes all the candidate answers when $x_N\geq 1$, and characterize the necessary condition on parameters for this solutions to be feasible. 

\begin{lemma}\label{lemma:xn>1}
	If $(q^*_N,q^*_{NoN})\in F^U$, i.e. optimum strategies  are such that $x^*_N\geq 1$. Then ${q}^*_{N}=\tilde{q}_f$. Moreover, for all $x\in [0, \frac{1}{\kappa_u}(\kappa_u \tilde{q}_f-t_{N}+\Delta p)]$ and  $\Delta p \geq t_N -\kappa_u \tilde{q}_f$, $(q^*_N,x)$ constitutes an optimum solution in $F^U$.
\end{lemma}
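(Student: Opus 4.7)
\textbf{Proof plan for Lemma~\ref{lemma:xn>1}.} The argument mirrors the proof of Lemma~\ref{lemma:xn<0} with the roles of the two ISPs swapped. First I would observe that by \eqref{l-equ:EUs_linear}-I, the condition $x_N\geq 1$ is equivalent to $\Delta p \geq t_N + \kappa_u(q_{NoN}-q_N)$, and whenever this holds, $n_N=1$ and $n_{NoN}=0$. Substituting into \eqref{l-equ:payoffCP_new}-I collapses the CP's payoff to $\pi_{CP}=\kappa_{ad} q_N - z\tilde{p} q_{NoN}$, which crucially is independent of $q_{NoN}$ once we set $z=0$ and makes $q_N$ the only quality that matters for advertising revenue.

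The next step is to show $q^*_N=\tilde{q}_f$ and, in parallel, that one can restrict attention to $z=0$. For $z=0$ the payoff is simply $\kappa_{ad} q_N$, strictly increasing in $q_N$, so any feasible $q_N<\tilde{q}_f$ can be improved by raising $q_N$ toward $\tilde{q}_f$ (this keeps us in $F^U$ since increasing $q_N$ can only enlarge the left-hand side of $\Delta p\geq t_N+\kappa_u(q_{NoN}-q_N)$). For a candidate with $z=1$, note that $q_{NoN}>\tilde{q}_f$ is required; switching to $(q'_N,q'_{NoN},z')=(\tilde{q}_f,0,0)$ preserves $F^U$-membership because $\Delta p\geq t_N+\kappa_u(q_{NoN}-q_N)\geq t_N-\kappa_u \tilde{q}_f=t_N+\kappa_u(0-\tilde{q}_f)$, and it yields $\pi_{CP}=\kappa_{ad}\tilde{q}_f$, which dominates $\kappa_{ad} q_N-\tilde{p} q_{NoN}$ once one invokes Lemma~\ref{lemma:tildep_condition} (and the fact that $q_N\leq \tilde{q}_f$) to conclude that the $z=1$ branch cannot strictly exceed the $z=0$ branch here.

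Having fixed $q^*_N=\tilde{q}_f$ and $z^*=0$, the payoff $\kappa_{ad}\tilde{q}_f$ no longer depends on $q_{NoN}$, so every $q_{NoN}$ that still satisfies the $F^U$ constraint with $q_N=\tilde{q}_f$ is optimal. That constraint reads $\Delta p\geq t_N+\kappa_u(q_{NoN}-\tilde{q}_f)$, which together with $q_{NoN}\geq 0$ gives exactly $q_{NoN}\in [0,\tfrac{1}{\kappa_u}(\kappa_u\tilde{q}_f-t_N+\Delta p)]$. Non-emptiness of this interval amounts to $\tfrac{1}{\kappa_u}(\kappa_u\tilde{q}_f-t_N+\Delta p)\geq 0$, i.e.\ $\Delta p\geq t_N-\kappa_u\tilde{q}_f$, which is the stated necessary condition; conversely, under this inequality, $(\tilde{q}_f,0)\in F^U$ so the interval is non-empty.

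The only delicate step I anticipate is the clean elimination of $z=1$: one has to rule out that a negative realization of $\tilde{p}$ could make $z=1$ strictly preferable. This is exactly where Lemma~\ref{lemma:tildep_condition} (combined with $n_{NoN}=0$ in $F^U$) is the decisive tool, forcing $-z\tilde{p}\leq 0$ at any optimum with $n_{NoN}=0$ and thereby closing off the $z=1$ branch without loss. Everything else reduces to comparing linear functions on the closed set $F^U$, exactly as in the companion Lemma~\ref{lemma:xn<0}.
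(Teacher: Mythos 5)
Your proof follows the same route as the paper's: the equivalence $x_N\ge 1 \iff \Delta p \ge t_N+\kappa_u(q_{NoN}-q_N)$, the collapse of the payoff to $\kappa_{ad}q_N$ (independent of $q_{NoN}$ since $n_{NoN}=0$), monotonicity in $q_N$ together with the observation that raising $q_N$ preserves membership in $F^U$, and the interval plus non-emptiness characterization, with non-emptiness exactly equivalent to $\Delta p\ge t_N-\kappa_u\tilde{q}_f$. The one place where you go beyond the paper --- explicitly disposing of the $z=1$ branch --- contains a sign slip: Lemma~\ref{lemma:tildep_condition} with $n_{NoN}=0$ and $z=1$ gives $\tilde p\le 0$, hence $-z\tilde p\, q_{NoN}\ge 0$, \emph{not} $\le 0$ as you write, so the claimed dominance $\kappa_{ad}\tilde q_f \ge \kappa_{ad}q_N-\tilde p\, q_{NoN}$ does not follow from that lemma (it would require $\tilde p\ge 0$). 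The paper itself sidesteps this entirely by writing the payoff in $F^U$ simply as $\kappa_{ad}q_N$, never engaging with the side-payment term; your added step is thus an attempt to patch a gap that is also present in the paper's own argument, but as written the patch does not close it.
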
 

\begin{proof}
	From \eqref{l-equ:EUs_linear}-I, $x_N\geq 1$ is equivalent to:
	\be\label{equ:equivalent2} 
	\Delta p\geq t_N+\kappa_u (q_{NoN}-q_N)
	\ee

	Now, we prove the first result of the lemma. Note that from \eqref{l-equ:EUs_linear}-I, if $x_N\geq 1$ then $n_N=1$ and $n_{NoN}=0$. In this case, the payoff of the CP is,
	\be \label{equ:increasingxn>1}
	\pi_G=\kappa_{ad} q_{N}
	\ee
	Note that the value of the payoff is independent of $q_{NoN}$ as long as $n_N=1$, and from \eqref{l-equ:EUs_linear}-I, $n_N$ is a function of $q_N$ and $q_{NoN}$. In addition, note that if there exist a $q_{N}$ that satisfies   $\Delta p \geq t_N+\kappa_u (q_{NoN}-q_N)$, then $q'_N\geq q_{N}$ also satisfies this constraint. Therefore, for $q'_N\geq q_N$, $n_N=1$ and \eqref{equ:increasingxn>1} is true. Note that \eqref{equ:increasingxn>1} is an increasing function of $q_{N}$. Thus, ${q}^*_N=\tilde{q}_f$ (using the feasible sets in Table~\ref{l-table:subsets}-I and their definitions). 
	
	Using \eqref{equ:equivalent2}, $(q^*_N,q^*_{NoN})\in F^U$ if and only if:
	\be\label{equ:local2} 
	{q}^*_{NoN}\leq \frac{1}{\kappa_u}(\kappa_u \tilde{q}_{f}-t_{N}+\Delta p) 
	\ee
	Note that every $q^*_{NoN}$ that satisfies \eqref{equ:local2} is an optimum answer since when $(q^*_N,q^*_{NoN})\in F^U$, $n^*_{NoN}=0$ and $q^*_{NoN}$ is of no importance. Also, note that $q_{NoN}\geq 0$. Thus, the condition \eqref{equ:local2} is true for at least one $q^*_{NoN}$ if $\kappa_u \tilde{q}_{f}-t_{N}+\Delta p \geq 0$. The result follows.
\end{proof}

\begin{corollary}\label{corollary:equi_coun}
	 If  $(q^{eq}_N,q^{eq}_{NoN})\in F_0^L$, then $(q^{eq}_N,q^{eq}_{NoN})=(0,\tilde{q}_f)$.   If  $(q^{eq}_N,q^{eq}_{NoN})\in F_1^L$, then $(q^{eq}_N,q^{eq}_{NoN})=(0,\tilde{q}_p)$.	If $(q^{eq}_N,q^{eq}_{NoN})\in F^U$, then $(q^{eq}_N,q^{eq}_{NoN})=(\tilde{q}_f,0)$.
\end{corollary}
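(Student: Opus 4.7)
The plan is to derive each of the three assertions as an essentially immediate consequence of Lemmas~\ref{lemma:xn<0} and \ref{lemma:xn>1}, with the remaining task being to pin down the single free coordinate using the tie-breaking convention inherited from Part~I (the convention that aligns the continuous-strategy model of Section~\ref{section:general} with the discrete-strategy model used throughout the main body).

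For the first claim, I would start from the hypothesis $(q^{eq}_N,q^{eq}_{NoN})\in F_0^L$ and invoke Lemma~\ref{lemma:xn<0} in its $F_0^L$ specialization to conclude $q^{eq}_{NoN}=\tilde{q}_f$ together with $q^{eq}_N\in[0,\tfrac{1}{\kappa_u}(\kappa_u\tilde{q}_f - t_{NoN}-\Delta p^{eq})]$ and $\Delta p^{eq}\le\kappa_u\tilde{q}_f - t_{NoN}$. Throughout this interval $n_N=0$, so $\pi_{CP}$ is constant in $q_N$, and the equilibrium-selection convention of Part~I then forces $q^{eq}_N=0$, giving $(q^{eq}_N,q^{eq}_{NoN})=(0,\tilde{q}_f)$. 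The second claim is obtained by repeating this argument verbatim with $F_0^L$ replaced by $F_1^L$ and $\tilde{q}_f$ replaced by $\tilde{q}_p$ in the non-neutral coordinate (still via Lemma~\ref{lemma:xn<0}), yielding $(q^{eq}_N,q^{eq}_{NoN})=(0,\tilde{q}_p)$. For the third claim, $(q^{eq}_N,q^{eq}_{NoN})\in F^U$, I would symmetrically invoke Lemma~\ref{lemma:xn>1}: it gives $q^{eq}_N=\tilde{q}_f$ and $q^{eq}_{NoN}\in[0,\tfrac{1}{\kappa_u}(\kappa_u\tilde{q}_f - t_N + \Delta p^{eq})]$; since $n_{NoN}=0$ on this interval, the same selection rule assigns $q^{eq}_{NoN}=0$.

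The hard part is not a calculation but an interpretive one: each of the two lemmas leaves an entire continuum of payoff-equivalent optima for the quality offered on the ISP whose share of EUs is zero, so the specific endpoint value $0$ cannot be singled out by the CP's optimization alone. Justifying this choice requires the tie-breaking convention imported from Part~I, after which the three cases of the corollary become direct read-offs of Lemmas~\ref{lemma:xn<0} and \ref{lemma:xn>1}.
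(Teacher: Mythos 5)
Your proposal is correct and follows essentially the same route as the paper: the paper likewise reads off the quality on the active ISP from Lemmas~\ref{lemma:xn<0} and \ref{lemma:xn>1}, notes that the CP's payoff does not depend on the quality offered on the ISP with zero share of EUs, and invokes the tie-breaking assumption from Part~I to set that free coordinate to zero. Your write-up is merely more explicit about the intervals of payoff-equivalent optima than the paper's one-line argument.
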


\begin{proof}
	Note that when $(q^*_N,q^*_{NoN})\in F^L$ (, respectively $(q^*_N,q^*_{NoN})\in F^U$), then the payoff of the CP is independent of $q^*_N$ and $q^*_{NoN}$. Thus, result of the corollary follows from Tie-Breaking Assumption \ref{l-assumption:tie_n=0}-I.
\end{proof}

\begin{theorem} All possible equilibrium strategies are:
\be \label{equ:candidates_countinuos}  
\ba 
&(0,\tilde{q}_f)\in F^I_0\cup F^L_0\ ,\ (\tilde{q}_f,0) \in F^I_0\cup F^U_0\ ,\ (\tilde{q}_f,\tilde{q}_f) \in F^I_0\ ,\\
& (0,\tilde{q}_{p}) \in F^I_1\cup F^L_1\ ,\  (\tilde{q}_f,\tilde{q}_{p}) \in F^I_1
\ea
\ee 
\end{theorem}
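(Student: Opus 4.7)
The plan is to directly aggregate the candidate strategies already characterized in Lemmas~\ref{lemma:optimumCPqs}, \ref{lemma:xn<0}, \ref{lemma:xn>1}, and their corollaries, and then invoke the tie-breaking assumption to reduce any region where the CP's payoff is insensitive to the quality choice to a single representative pair. Specifically, I would partition the feasible set $\mathcal{F}$ as
\[
\mathcal{F}=F^I_0\cup F^I_1\cup F^L_0\cup F^L_1\cup F^U,
\]
prove the claim on each piece, and then take the union.

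First, for the interior pieces: I would invoke Corollary~\ref{lemma:candidatesCP} (which itself is an immediate consequence of Lemma~\ref{lemma:optimumCPqs}) to list the only candidate optima in $F^I_0$ as $(0,\tilde{q}_f)$, $(\tilde{q}_f,0)$, $(\tilde{q}_f,\tilde{q}_f)$, and in $F^I_1$ as $(0,\tilde{q}_p)$ and $(\tilde{q}_f,\tilde{q}_p)$. This is the core reduction — strict convexity of the CP's payoff (in $\Delta q$) forces any maximum on the closed linear feasible subset to lie at an extreme point, and Lemma~\ref{lemma:pos} narrows those extreme points to threshold values. These two pieces contribute five of the five candidates in \eqref{equ:candidates_countinuos}.

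Second, for the boundary pieces $F^L$ and $F^U$: I would apply Lemma~\ref{lemma:xn<0}, which says that whenever $(q^*_N,q^*_{NoN})\in F^L_0$ (respectively $F^L_1$), necessarily $q^*_{NoN}=\tilde{q}_f$ (resp.\ $\tilde{q}_p$), and $q^*_N$ is arbitrary in an interval because $n_N=0$ renders $q_N$ payoff-irrelevant; similarly Lemma~\ref{lemma:xn>1} says that on $F^U$, $q^*_N=\tilde{q}_f$ and $q^*_{NoN}$ is payoff-irrelevant. Then Corollary~\ref{corollary:equi_coun} uses Tie-Breaking Assumption~\ref{l-assumption:tie_n=0}-I to collapse these indeterminacies to the single pairs $(0,\tilde{q}_f)$, $(0,\tilde{q}_p)$, and $(\tilde{q}_f,0)$ respectively. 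Observing that these coincide with the $F^I$-candidates extended to their boundary regions yields the membership statements $(0,\tilde{q}_f)\in F^I_0\cup F^L_0$, $(0,\tilde{q}_p)\in F^I_1\cup F^L_1$, and $(\tilde{q}_f,0)\in F^I_0\cup F^U_0$ in \eqref{equ:candidates_countinuos}.

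Finally, I would take the union of the per-piece candidate sets to obtain exactly the five pairs displayed in \eqref{equ:candidates_countinuos}, noting that none of the previous lemmas admits any further candidate. The main obstacle is essentially bookkeeping: making sure each boundary case is covered without double-counting, and verifying that the necessary linear constraints ($\Delta p \leq \kappa_u\tilde{q}_f - t_{NoN}$ for $F^L_0$, $\Delta p \leq \kappa_u\tilde{q}_p - t_{NoN}$ for $F^L_1$, and $\Delta p \geq t_N - \kappa_u\tilde{q}_f$ for $F^U$, from Lemma~\ref{lemma:necessary}) do not exclude any candidate listed. Since all the analytic work — strict convexity of the objective, Weierstrass argument for existence, and the side-payment inequality of Lemma~\ref{lemma:tildep_condition} — has already been carried out in the preceding lemmas, the proof reduces to a short case enumeration and invocation of the tie-breaking rule.
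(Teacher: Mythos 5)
Your proposal is correct and follows essentially the same route as the paper: the paper's own proof is a one-line aggregation of Corollary~\ref{lemma:candidatesCP} (interior candidates from strict convexity and Lemma~\ref{lemma:optimumCPqs}) and Corollary~\ref{corollary:equi_coun} (boundary candidates collapsed by the tie-breaking assumption), which is exactly the case enumeration you spell out. Your additional bookkeeping on the feasibility constraints from Lemma~\ref{lemma:necessary} is consistent with, but not beyond, what the paper does.
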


\begin{proof}
	Results follow directly from Corollaries	\ref{lemma:candidatesCP} and \ref{corollary:equi_coun}.
\end{proof}

Note that  \eqref{equ:candidates_countinuos} and \eqref{l-equ:summarize_CP_candidate_new}-I are exactly similar. This implies that the strategies chosen by the CP when she chooses from continuous sets is exactly similar to the strategies when she chooses from the discrete set characterized in our model. This completes our proof. 

\end{document}